\documentclass[12pt]{article}
\usepackage[utf8]{inputenc}

\PassOptionsToPackage{unicode}{hyperref}
\PassOptionsToPackage{hyphens}{url}
\PassOptionsToPackage{dvipsnames,svgnames,x11names}{xcolor}

\usepackage{bbm}
\usepackage{etoolbox}
\usepackage[T1]{fontenc}
\usepackage[margin = 1in]{geometry}
\usepackage{xr-hyper}
\usepackage{hhline}
\usepackage{varwidth}
\usepackage{graphicx}
\usepackage{pdflscape}
\usepackage{afterpage}
\usepackage{amsmath}
\usepackage{amssymb}
\usepackage{amsfonts}
\usepackage{amsthm}
\usepackage[nodisplayskipstretch]{setspace}
\def\spacingset#1{\renewcommand{\baselinestretch}{#1}\small\normalsize}
\usepackage{subcaption}
\usepackage{adjustbox}
\usepackage{threeparttable}
\usepackage{multirow}
\usepackage{booktabs}
\usepackage{relsize}
\usepackage{tabularx}
\usepackage{soul}
\usepackage[normalem]{ulem}
\usepackage{stackengine}
\usepackage[bottom]{footmisc}
\usepackage{longtable}
\usepackage{hyphenat}
\usepackage{calc}
\usepackage{accents}
\usepackage{upgreek}
\usepackage{mathrsfs}
\usepackage{rotating}
\usepackage{enumerate}
\stackMath

\usepackage{indentfirst}
\theoremstyle{plain}
\newtheorem{theorem}{Theorem}
\newtheorem{corollary}{Corollary}

\newtheorem{example}{Example}

\newtheorem{lemma}{Lemma}

\newtheorem{prop}{Proposition}

\newtheorem{assumption}{Assumption}
\newtheorem{algorithm}{Algorithm}
\theoremstyle{definition}
\newtheorem{remark}{Remark}

\numberwithin{equation}{section}
\numberwithin{assumption}{section}
\numberwithin{theorem}{section}
\numberwithin{algorithm}{section}
\numberwithin{corollary}{section}
\numberwithin{remark}{section}
\numberwithin{proposition}{section}
\numberwithin{prop}{section}
\numberwithin{lemma}{section}
\numberwithin{table}{section}
\numberwithin{figure}{section}

\allowdisplaybreaks
\usepackage[skip=0pt]{caption}
\usepackage{placeins}
\usepackage[footfont=normalsize,font=normalsize]{floatrow}
\usepackage{float}
\floatstyle{plaintop}
\restylefloat{table}

\usepackage{array}
\newcolumntype{H}{>{\setbox0=\hbox\bgroup}c<{\egroup}@{}}

\usepackage[authoryear,longnamesfirst]{natbib}

\usepackage{xcolor}
\usepackage{color}
\definecolor{marineblue2}{rgb}{0.05,0.1,0.4}
\usepackage{hyperref}
\hypersetup{
unicode=false,
pdffitwindow=true,
pdfstartview={FitH},
pdftitle={Testing for Essential Heterogeneity in the Marginal Treatment Effects Model},
pdfnewwindow=true,
colorlinks=true,
linkcolor=marineblue2,
citecolor=marineblue2,
filecolor=marineblue2,
urlcolor=marineblue2
}

\usepackage{bookmark}
\IfFileExists{xurl.sty}{\usepackage{xurl}}{}
\makeatletter
\newcommand*{\addFileDependency}[1]{
  \typeout{(#1)}
  \@addtofilelist{#1}
  \IfFileExists{#1}{}{\typeout{No file #1.}}
}
\makeatother

\AtBeginDocument{
\addtolength{\belowdisplayskip}{-1ex}
\addtolength{\belowdisplayshortskip}{-1ex}
\addtolength{\abovedisplayskip}{-1.5ex}
\addtolength{\abovedisplayshortskip}{-1.5ex}
}

\usepackage{appendix}

\providecommand{\keywords}[1]{\small\textbf{Keywords:} #1}

\newcolumntype{L}[1]{>{\raggedright\let\newline\\arraybackslash\hspace{0pt}}m{#1}}
\newcolumntype{C}[1]{>{\centering\let\newline\\arraybackslash\hspace{0pt}}m{#1}}
\newcolumntype{R}[1]{>{\raggedleft\let\newline\\arraybackslash\hspace{0pt}}m{#1}}


\newcommand{\anon}{1}

\newcommand{\TXP}[1]{T^{\mathit{XP}}_{#1}}
\newcommand{\TP}[1]{T^{\mathit{P}}_{#1}}
\newcommand{\Qpoly}[1]{Q^{\mathrm{poly}}_{#1}}

\spacingset{1}

\begin{document}

\if1\anon
{
  \title{\bf Testing for Essential Heterogeneity in the Marginal Treatment Effects Model}
  \author{Yuhao Li\thanks{Email: \href{mailto:yuhao.li@xjtlu.edu.cn}{yuhao.li@xjtlu.edu.cn}.}\\
    Xi'an Jiaotong--Liverpool University\\
    and\\
    Minghai Mao\thanks{Financial support from the National Science Foundation of China 
    [Grant Number 72203084] is gratefully acknowledged. Email: \href{mailto:maominghai@lnu.edu.cn}{maominghai@lnu.edu.cn}.}\\
    Liaoning University\\
    and\\
    Xiaojun Song\thanks{Financial support from the National Natural Science Foundation of China [Grant Numbers 72373007 and 72333001] is gratefully acknowledged. Email: \href{mailto:sxj@gsm.pku.edu.cn}{sxj@gsm.pku.edu.cn}.}\\
    Peking University}
  \date{}
  \maketitle
} \fi

\if0\anon
{
  \bigskip\bigskip\bigskip
  \begin{center}
    {\LARGE\bf Testing for Essential Heterogeneity in the Marginal Treatment Effects Model}
  \end{center}
  \medskip
} \fi

\thispagestyle{empty}

\begin{abstract}
We propose a test for essential heterogeneity in the marginal treatment effects model without imposing additive separability. On the observed support of the propensity score, no essential heterogeneity implies that the conditional mean of the outcome is affine in the propensity score, and this affine restriction is the sharp testable implication of the null. We implement the test using a projected Gaussian-process statistic that is Neyman-orthogonal to the first-step estimation error, so the multiplier bootstrap avoids re-estimating nuisance parameters. Monte Carlo experiments show accurate size control and good power, and, unlike the fixed-scale integrated conditional moment benchmark, the data-driven median rule retains power as the conditioning set grows in dimension. In an application to the Ugandan ELA programme, the omnibus statistic rejects the null of no essential heterogeneity for aspirations at the midline follow-up.

\bigskip
\noindent\keywords{Reproducing kernel Hilbert space; Specification testing; Conditional moment restrictions; Gaussian process; Multiplier bootstrap} \\
\noindent\textbf{JEL codes:} C12, C14, C21
\end{abstract}

\newpage
\setcounter{page}{1}
\spacingset{1.8}

\section{Introduction}
A valid instrument identifies the local average treatment effect (LATE) for compliers \citep{Imbens1994IdentificationEffects}, an internally valid causal effect for the subpopulation the instrument moves. A broader question is whether this effect generalizes to other populations. This question depends on essential heterogeneity, which arises when individuals select into treatment partly on the basis of unobserved differences in their treatment effects \citep{Heckman2006UnderstandingHeterogeneity,Mogstad2018UsingParameters}. Characterizing essential heterogeneity is therefore a necessary step toward external validity, and testing for it is the natural first step. The benchmark test is due to \citet{Heckman2010TestingModel} (hereafter HSU). When the null of no essential heterogeneity cannot be rejected, the LATE may be read as the ATE or the ATT; \citet{toivanen2016education} apply the HSU test to the returns to engineering education and conduct policy counterfactuals on that basis.

Our first contribution is to extend the test for essential heterogeneity
beyond the additive separability maintained by the benchmark test of
\citet{Heckman2010TestingModel}. Additive separability rules out nonlinear outcome models such as probit and mixed proportional hazard specifications, and even linear models with heteroskedastic errors. Nevertheless, it remains useful when the object of interest is the entire MTE curve, because it can relax the support requirement for the propensity score. For testing, however, it is stronger than necessary. The reason is that the null itself requires the regression of the observed outcome on the covariates and the propensity score to be affine in the propensity score. Under additive separability, the deviation from this affine restriction is restricted to be a function of the propensity score alone. Thus, the same restriction gives rise to two tests with different conditioning sets: conditioning on the covariates together with the propensity score delivers an \emph{omnibus} test, valid under heteroskedasticity and in nonlinear, non-separable outcome equations; conditioning on the propensity score alone delivers the \emph{directional} test that additive separability permits.

Our second contribution is to characterize the relationship between this affine restriction and the null hypothesis of no essential heterogeneity. In the general case without additive separability,
Proposition~\ref{prop:transformed-null} shows that the affine restriction on the observed support of the propensity score is the strongest necessary condition for the absence of essential heterogeneity that the data can test. Corollary~\ref{cor:mte-on-support} shows that this restriction ensures constancy of the MTE on the observed support, establishing the converse direction only on that support, while Corollary~\ref{cor:separable} shows that, under additive separability and full support, it upgrades to an equivalence with the original null itself. In this way, we identify both the sharp support-restricted implication that the data can test and the additional conditions under which that implication coincides with the global hypothesis of no essential heterogeneity.

We overcome two technical obstacles in turning this restriction into a
feasible test. First, the estimated propensity score affects the test through both the residual and the conditioning argument of the kernel. This generated-regressor problem contaminates the limiting distribution, while re-estimating nuisance parameters in every bootstrap replication is computationally costly. We instead project the kernel onto the orthogonal complement of the span of the first-step score. The projected moment is Neyman orthogonal at the true parameter, so the test is locally insensitive to first-step estimation error, allows first-step estimators slower than the parametric rate, and lets the multiplier bootstrap proceed without re-estimating nuisance parameters. The nonstandard step in our setting is to verify that this orthogonality
continues to hold when the estimated propensity score enters not only the
residual but also the conditioning argument of the test statistic.

The second obstacle is the kernel scale. With a fixed scale, our statistic reduces to the classical integrated conditional moment (ICM) test of \citet{bierens1982consistent}, and such fixed-scale implementations deteriorate as the conditioning set grows. We instead use a data-driven median rule that adapts to the dimension of the conditioning variables. Power still declines with dimension, but the difference is substantial. Our application to the ELA programme in Uganda shows that this distinction matters in practice, because the omnibus test detects covariate-loaded departures that directional tests miss and additive separability cannot be taken for granted.

\textbf{Related literature:} A number of tests for essential heterogeneity
have been proposed. The benchmark is due to \citet{Heckman2010TestingModel},
who derive testable implications of the null in the correlated random
coefficient model and construct Wald-type tests based on polynomial
approximations to the MTE curve. In an experimental setting with a binary
instrument, \citet{kowalski2023examine} develops a test for essential
heterogeneity, whereas our procedure accommodates a continuous propensity
score. Finally, \citet{Mogstad2018UsingParameters}, though mainly concerned with the
partial identification of policy-relevant treatment parameters, also
propose a general framework for specification testing in the MTE model. These
tests, however, require additional restrictions on the MTE function.

\citet{escanciano2024gaussian} develops a Gaussian-process framework that makes kernel-based conditional-moment tests Neyman orthogonal and enables a multiplier bootstrap without re-estimating nuisance parameters, in a setting where the conditioning variable is observed. However, the estimated propensity score enters both the residual and the conditioning argument of the kernel in our setting. We show that the projected moment remains orthogonal in this case, so the bootstrap of Theorem~\ref{Theorem: Bootstrap Consistency} is valid when the conditioning variable is estimated. This extension is what the test requires and is established in Supplement~\ref{app:proofs}. 

The projection is one instance of a broader construction. Orthogonal moments trace back to the $C(\alpha)$ tests of \citet{neyman1959optimal}, and the same property organizes the double and debiased machine learning literature \citep{chernozhukov2018double}, the construction of locally robust GMM moments \citep{chernozhukov2022locally}, influence-function characterizations of
orthogonal moments \citep{ichimura2022influence}, and the theory of orthogonal statistical learning \citep{foster2023orthogonal}.

\textbf{Organization of the paper:} Section~\ref{sec:test-framework} sets out the model and derives the conditional moment restriction.
Section~\ref{sec:test} develops the projected test statistic and its multiplier bootstrap. Section~\ref{sec:simulations} reports Monte Carlo evidence,
Section~\ref{sec:application} applies the test to the ELA programme, and
Section~\ref{sec:conclusion} concludes. Proofs and supplementary results are in Supplement~\ref{app:proofs}--\ref{supp:application}.

\section{Testing Framework}\label{sec:test-framework}
\subsection{The MTE Model and Essential Heterogeneity}\label{sec:model-setup}

Let $(Y(1), Y(0))$ denote the potential outcomes of a binary treatment $D \in \{0,1\}$, and let $X \in \mathbb{R}^K$ be a vector of observed covariates. The observed outcome is $Y = D Y(1) + (1-D) Y(0)$. We allow the potential outcomes to be general nonlinear functions of covariates and unobservables,
\begin{equation} \label{Potential Outcomes: binary}
\begin{split}
    Y(1) &= g_1(X, V(1); \beta_1), \\
    Y(0) &= g_0(X, V(0); \beta_0),
\end{split}
\end{equation}
where, for $d \in \{0,1\}$, $V(d)$ is an unobserved random variable and $g_d$ is a known function parameterized by a finite-dimensional vector $\beta_d$. We write $\mu_d(X;\beta_d) = \mathbb{E}[Y(d) \mid X]$ for the implied conditional mean. Note that the potential outcome notation $Y(d)$ already embeds the exclusion restriction that the instruments introduced below do not directly affect outcomes; we make this explicit in Assumption~\ref{ass:exclusion}.

Specification \eqref{Potential Outcomes: binary} nests the textbook linear model $Y(d) = X^\top \beta_d + V(d)$ as its leading case, but two features of specification \eqref{Potential Outcomes: binary} deserve emphasis. First, $g_d$ may be nonlinear in $X$. Second, the unobservable $V(d)$ need not enter additively. This flexibility matters in practice as many workhorse microeconometric models are intrinsically non-separable in their unobservables. Discrete outcomes are the classic case: MTE analyses with probit-type outcome equations go back at least to \citet{AakvikHeckmanVytlacil2005}. Duration outcomes are also important: hazard-based duration models with unobserved heterogeneity are non-separable by construction \citep{AbbringVanDenBerg2003}, instrumental variable methods for such settings have been developed by \citet{vandenBergBonevMammen2020}, and, most recently, \citet{AcerenzaPossebomSantAnna2026} extend the MTE framework itself to
right-censored duration outcomes.

\begin{example}[Binary outcome]\label{ex:probit}
Suppose $Y(d) = \mathbf{1}\{ X^\top \beta_d + V(d) > 0 \}$ with
$V(d) \mid X \sim N(0,1)$, so that $\mu_d(X;\beta_d) = \Phi(X^\top \beta_d)$. Thus, the unobservable $V(d)$ cannot be separated additively from $X$ in the outcome equation: $Y(d) - \mu_d(X;\beta_d)$ is heteroskedastic and, by construction, its distribution depends on $X$.
As we discuss below, this immediately breaks the additive-separability restriction that underlies existing tests for essential heterogeneity.
\end{example}

\begin{example}[Duration outcome]\label{ex:duration}
Suppose $Y(d)$ follows a Weibull mixed proportional hazard model. Conditional on $(X,V(d))$, let its hazard be
\[
\lambda_d(y \mid X,V(d))
= \kappa_d y^{\kappa_d-1}
  \exp(X^\top\beta_d)V(d),
\]
where $V(d)>0$ is an unobserved frailty term. Equivalently, let
$\tilde E_d$ be an idiosyncratic shock with $\tilde E_d \sim \mathrm{Exp}(1)$, and let $(\tilde E_0,\tilde E_1)$ be independent of $(X,V(0),V(1))$. Then
\[
Y(d)
=
\left[
\frac{\tilde E_d}
{\exp(X^\top\beta_d)V(d)}
\right]^{1/\kappa_d}.
\]
The conditional mean is therefore $\mu_d(X;\beta_d)
= C_d \exp\!\left(-X^\top\beta_d/\kappa_d\right)$, where
$C_d=\Gamma(1+1/\kappa_d)\mathbb{E}\!\left[V(d)^{-1/\kappa_d}\right].$
\end{example}

Let $Z = (X, Z_0)$, where $Z_0$ denotes the excluded instruments. Selection into treatment is governed by the threshold-crossing rule
\begin{equation}\label{eq:selection}
    D = \mathbf{1}\big\{ P(Z) \geq U_D \big\},
\end{equation}
where $P(Z) \equiv \Pr(D = 1 \mid Z)$ is the propensity score and $U_D$ is a scalar latent variable, normalized so that $U_D \mid X \sim \mathrm{Uniform}[0,1]$.\footnote{In our implementation, $P(Z)$ is specified parametrically, e.g., a probit or logit index $P(Z;\gamma)$; see Section~\ref{sec:test}. The results in this section do not depend on that choice.}

In equation \eqref{eq:selection}, $P(Z)$ is the observed index of willingness to participate, while $U_D$ collects the unobserved determinants of participation: motivation, ability, and preferences. Normalizing $U_D \mid X \sim \mathrm{Uniform}[0,1]$ makes $U_D$ the quantile of unobserved resistance, so that low values identify the most willing individuals.

Following \citet{Heckman2005StructuralEvaluation}, the marginal treatment effect (MTE) is the mean treatment effect at a given level of observed and unobserved heterogeneity,
\begin{equation}\label{MTE(x,u)}
\mathrm{MTE}(x, u) = \mathbb{E}\big[ Y(1) - Y(0) \mid X = x,\, U_D = u \big].
\end{equation}
What matters is whether the unobserved determinants of participation also
determine the gains from treatment, so that $\mathrm{MTE}(x,u)$ varies with $u$; \citet{Heckman2006UnderstandingHeterogeneity} call this variation
\emph{essential heterogeneity}.

We therefore consider the null hypothesis of no essential heterogeneity,
\begin{equation}\label{H_0: Null}
H_0:\quad \mathrm{MTE}(x,u) = \mathbb{E}\left[ Y(1) - Y(0) \mid X = x \right],
\end{equation}
where the equality holds for $\mathbb{P}_X$-almost every $x$ and
Lebesgue-almost every $u \in (0,1)$. We test $H_0$ against its negation. Under $H_0$, $\mathrm{MTE}(x,u)$ collapses to the
conditional average treatment effect (CATE).

The testable content of $H_0$ is a single restriction on the conditional mean of the observed outcome. Under Assumptions~\ref{ass:exclusion}--\ref{ass:relevance} introduced below, $H_0$ implies
\begin{equation}\label{H_0: Linear Model}
H_0^{(1)}: \quad \mathbb{E}[Y \mid X, P]
= \mu_0(X;\beta_0) + P\,[\mu_1(X;\beta_1) - \mu_0(X;\beta_0)] \quad \text{a.s.},
\end{equation}
which states that $\mathbb{E}[Y \mid X, P]$ is affine in the propensity score, with intercept $\mu_0(X;\beta_0)$ and slope equal to the CATE:
$\Delta(X) = \mu_1(X;\beta_1) - \mu_0(X;\beta_0)$, which may vary with the covariates through the parametric outcome means. Because the left-hand side is an observed conditional mean, the restriction can be
empirically assessed only on the conditional support of $P$ given $X$, which we write
$\mathcal{S}_x$ throughout (defined formally in Assumption~\ref{ass:relevance}).
Accordingly, no test can evaluate the regression outside that support.

Without imposing this parametric structure, the same restriction takes the nonparametric form
\begin{equation}\label{H_0: affine}
H_0^{\mathrm{aff}}: \quad \mathbb{E}[Y \mid X, P] = a(X) + P\,b(X) \quad
\text{a.s. for some measurable } a(\cdot),\,b(\cdot).
\end{equation}
Proposition~\ref{prop:transformed-null} below shows that \eqref{H_0: affine} is the strongest implication of $H_0$ that the data can test. We nevertheless take \eqref{H_0: Linear Model} to the data. It is
\eqref{H_0: affine} with $a(\cdot)$ and $b(\cdot)$ replaced by the parametric
outcome means, as in the applied MTE literature
\citep{Cornelissen2016FromInterventions} and in the benchmark test of
\citet{Heckman2010TestingModel}. Parametric outcome means and a parametric first
stage avoid a nonparametric first step, but the null we test is then joint: a
rejection is evidence against no essential heterogeneity or against the
parametric specification.

\subsection{From the Null to Testable Restrictions}\label{sec:assumptions}
The purpose of this subsection is to translate the null hypothesis in \eqref{H_0: Null} into a restriction on $\mathbb{E}[Y \mid X, P]$ that the data directly reveal. This translation rests on the following maintained assumptions, adapted from \citet{Heckman2005StructuralEvaluation}.

\begin{assumption}[Instrument Exclusion]\label{ass:exclusion}
For each $d \in \{0,1\}$, $Y(d, z_0) = Y(d, z_0') \equiv Y(d)$ for all
$z_0, z_0'$ in the support of $Z_0$, almost surely.
\end{assumption}

\begin{assumption}[Random Assignment]\label{ass:random-assignment}
$(Y(0), Y(1), U_D) \perp\!\!\!\perp Z_0 \mid X$.
\end{assumption}

\begin{assumption}[Threshold-Crossing Selection]\label{ass:selection}
Treatment is determined by \eqref{eq:selection}, where $U_D$ is continuously
distributed conditional on $X$ and normalized so that
$U_D \mid X \sim \mathrm{Uniform}[0,1]$.
\end{assumption}

\begin{assumption}[Propensity Score Variation]\label{ass:relevance}
Conditional on $X$, the distribution of $P(Z)$ is nondegenerate and
absolutely continuous, and $0 < \Pr(D = 1 \mid X) < 1$ almost surely.
For $\mathbb{P}_X$-almost every $x$, its conditional support
$\mathcal{S}_x := \operatorname{supp}\mathbb{P}_{P \mid X = x}$ is a finite
union of closed intervals.
\end{assumption}

Assumptions~\ref{ass:exclusion}--\ref{ass:relevance} turn $H_0$ into a restriction on the observable regression $\mathbb{E}[Y\mid X,P]$. The local instrumental variables argument gives $\mathbb{E}[Y \mid X = x, P = p]= \mu_0(x;\beta_0) + \int_0^p \mathrm{MTE}(x,u)\, du$ for almost every $p$ in the conditional support of $P$ given $X = x$:
\begin{equation}\label{eq:deviation}
    \begin{split}
   \mathbb{E}[Y\mid X,P]&= \mu_0(X;\beta_0) + P\,\Delta(X) + \psi(X,P)\\
\psi(x,p)&= \int_0^p \big[\mathrm{MTE}(x,u) - \Delta(x)\big]\, du  
    \end{split}
\end{equation}
The first two terms are exactly the right-hand side of \eqref{H_0: Linear Model};
$\psi$ is therefore the gap between the regression the data generate and the affine
shape that $H_0$ implies on the observed support. By the local instrumental
variables representation above, the restriction \eqref{H_0: Linear Model} holds
precisely when $\psi \equiv 0$. Two features of $\psi$ matter below: it may load on the covariates as well as on the propensity score, and it is identified only on the conditional support $\mathcal{S}_x$.\footnote{Absolute continuity and nondegeneracy imply that no component of the finite union degenerates to a point; and $\{f_{P \mid X}(\cdot \mid x) > 0\}$ has full $\mathbb{P}_{P \mid X = x}$-measure and is dense in $\mathcal{S}_x$, every nonempty relatively open subset of the support carrying positive $\mathbb{P}_{P \mid X = x}$-mass.}

\begin{assumption}[Additive Separability]\label{ass:additive}
$\mathrm{MTE}(x, u) = \mu_1(x;\beta_1) - \mu_0(x;\beta_0)
+ \mathbb{E}[V(1) - V(0) \mid U_D = u]$.
\end{assumption}

Additive separability is typically justified by two conditions: (i) outcomes that are additive in their unobservables, $Y(d)=\mu_d(X;\beta_d)+V(d)$; and (ii) full independence, $(V(0),V(1),U_D)\perp\!\!\!\perp Z$. Here, condition~(i) fails mechanically in nonlinear models such as Examples~\ref{ex:probit} and~\ref{ex:duration}. In addition, condition~(ii) fails even in linear models with heteroskedastic errors, where the unobservables are scaled by the covariates.

The role of Assumption~\ref{ass:additive} is therefore a simplification. Under it, the deviation in \eqref{eq:deviation} collapses from a family $\psi(x,p)$ to a single function of the propensity score alone:
\[
\psi(x,p)=\phi(p):=\int_0^p h(u)\,du,
\qquad
h(u)=\mathbb{E}[V(1)-V(0)\mid U_D=u].
\]
Accordingly, with
$\varepsilon = Y - \mu_0(X;\beta_0) - P\,\Delta(X)$
\[
\mathbb{E}[\varepsilon \mid X,P]
=
\psi(X,P)
=
\phi(P)
\]
Relative to the general case, additive separability collapses the deviation to a single function and justifies a directional test based only on the propensity score.

Our approach does \emph{not} impose Assumption~\ref{ass:additive}. Proposition~\ref{prop:transformed-null} establishes that $H_0$ implies \eqref{H_0: affine}, and that this implication is sharp: no test based on the data can distinguish $H_0$ from \eqref{H_0: affine}. Corollary~\ref{cor:mte-on-support} shows that \eqref{H_0: affine} yields constancy of the MTE in $u$ on the observed support $\mathcal{S}_x$, a support-restricted converse of
Proposition~\ref{prop:transformed-null}(i). Corollary~\ref{cor:separable} gives a separate result under additive separability: with full support of the propensity score, the directional restriction $\mathbb{E}[\varepsilon\mid P]=0$ is equivalent to $H_0$.

\begin{prop}\label{prop:transformed-null}
Consider the model \eqref{Potential Outcomes: binary} and \eqref{eq:selection}, and suppose Assumptions~\ref{ass:exclusion}--\ref{ass:relevance} and the integrability condition $\mathbb{E}\big[\,|Y(1)| + |Y(0)| \mid X\,\big] < \infty$ a.s.\ hold. Then
\begin{enumerate}
\item[(i)] \emph{(Necessity.)} $H_0$ in \eqref{H_0: Null} implies
\eqref{H_0: Linear Model}.

\item[(ii)] \emph{(Sharpness.)} Conversely, suppose \eqref{H_0: affine} holds, i.e.\ there exist measurable $a(\cdot)$ and $b(\cdot)$ such that
$\mathbb{E}[Y \mid X = x, P = p] = a(x) + p\, b(x)$ for $\mathbb{P}_X$-a.e.\ $x$ and
$\mathbb{P}_{P \mid X = x}$-a.e.\ $p \in \mathcal{S}_x$. Then there exist potential outcomes $(\tilde{Y}(0), \tilde{Y}(1))$ and a selection unobservable $\tilde U_D$ satisfying Assumptions~\ref{ass:exclusion}--\ref{ass:relevance} and $H_0$, with $\tilde D = \mathbf 1\{P(Z) \ge \tilde U_D\}$ and $\tilde Y = \tilde D\,\tilde Y(1) + (1 - \tilde D)\,\tilde Y(0)$, such that $(\tilde Y, \tilde D, Z)$ has the same distribution as $(Y, D, Z)$. One may take $\tilde U_D = U_D$, in which case $\tilde D = D$. Consequently, no test based on the data can distinguish $H_0$ from \eqref{H_0: affine}.
\end{enumerate}
\end{prop}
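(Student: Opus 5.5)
For part (i), I will first establish the local instrumental variables representation stated before \eqref{eq:deviation}. Fix $x$ and $p\in\mathcal S_x$. Since $P$ is a function of $Z$, the event $\{X=x,P=p\}$ is determined by $(X,Z_0)$. Assumption~\ref{ass:random-assignment} makes $(Y(0),Y(1),U_D)$ independent of $Z_0$ given $X$, so these variables are also independent of $P$ given $X$. Using $D=\mathbf 1\{U_D\le p\}$ from Assumption~\ref{ass:selection}, I then write
\[
\mathbb E[Y\mid X=x,P=p]=\mathbb E[Y(0)\mid X=x]+\mathbb E\big[(Y(1)-Y(0))\mathbf 1\{U_D\le p\}\mid X=x\big].
\]
Since $U_D\mid X$ is uniform, the second term equals $\int_0^p \mathrm{MTE}(x,u)\,du$. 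The integrability condition justifies this conditioning and makes $\mathrm{MTE}(x,\cdot)$ integrable on $(0,1)$. Under $H_0$, $\mathrm{MTE}(x,u)=\Delta(x)$ for a.e.\ $u$, so $\psi\equiv0$ in \eqref{eq:deviation}. That gives \eqref{H_0: Linear Model}, with $\mu_d(x;\beta_d)=\mathbb E[Y(d)\mid X=x]$.

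For part (ii), the plan is an explicit construction. I keep $\tilde U_D=U_D$, so that $\tilde D=D$, and define
\[
\tilde Y(1)=Y\,\mathbf 1\{D=1\}+\big(a(X)+b(X)+\eta\big)\mathbf 1\{D=0\},\qquad
\tilde Y(0)=Y\,\mathbf 1\{D=0\}+\big(a(X)+\eta'\big)\mathbf 1\{D=1\}.
\]
Here the counterfactual branch is built from functions of $X$ only, and $\eta,\eta'$ are optional mean-zero noises independent of everything; taking them to be zero is fine. Because $\tilde Y=Y$ and $\tilde D=D$ pointwise, the joint law of $(\tilde Y,\tilde D,Z)$ matches that of $(Y,D,Z)$ trivially. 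Assumptions~\ref{ass:exclusion}, \ref{ass:selection} and \ref{ass:relevance} carry over, since the selection equation and $P$ are unchanged and the new outcomes are not indexed by $z_0$.

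Two properties remain to be checked: $H_0$ for $\tilde{\mathrm{MTE}}$, and Assumption~\ref{ass:random-assignment}. The latter is the main obstacle. The observed branch $Y\mathbf 1\{D=1\}$ inherits dependence on $Z_0$ through $D=\mathbf 1\{P\ge U_D\}$, so independence of $(\tilde Y(0),\tilde Y(1),U_D)$ from $Z_0$ given $X$ is not automatic.

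I would handle this by building the potential outcomes directly from $(X,U_D)$ plus independent noise, rather than from observed $Y$. The key step is to define the conditional laws of $\tilde Y(1)$ given $(X,U_D=u)$ for $u\in\mathcal S_x$ through the identified conditional distribution. Matching $\mathbb E[Y D\mid X,P=p]$ forces $\mathbb E[\tilde Y(1)\mathbf 1\{U_D\le p\}\mid X]$ to equal the observed value, and by the affine restriction the derivative $\partial_p$ of $\mathbb E[Y\mid X,P=p]$ is $b(x)$. This pins down $\mathbb E[\tilde Y(1)-\tilde Y(0)\mid X=x,U_D=u]=b(x)$ for a.e.\ $u$ in the interior of $\mathcal S_x$.

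To reproduce the full joint distribution and not only means, I would use the marginal-treatment-response distributions identified from $\partial_p\Pr(Y\le y,D=1\mid X,P=p)$ and its $D=0$ counterpart. I would set the law of $\tilde Y(1)$ given $(X,U_D=u)$ equal to this identified object on $\mathcal S_x$, and off $\mathcal S_x$ choose any law with the means required to keep $\mathrm{MTE}=b(x)$. The conditions to enforce off support are matching the unconditional identified pieces: the mass below $\min\mathcal S_x$ and between intervals of $\mathcal S_x$. Assumption~\ref{ass:relevance} (finite union of intervals) lets me split $(0,1)$ into finitely many gaps. On each gap only the integrated law is constrained, so I can take it constant in $u$ with mean adjusted, for example by a location shift $b(x)$ between the branches.

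Independence from $Z_0$ then holds by construction, because the outcomes depend only on $(X,U_D,\text{noise})$. Equality of $\tilde{\mathrm{MTE}}(x,u)$ with $b(x)=\tilde\Delta(x)$ for a.e.\ $u$ gives $H_0$, and the final claim of (ii) follows because the observable law is identical.
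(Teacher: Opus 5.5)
\textbf{Part (i) and the overall plan.} Part (i) is correct and is essentially the paper's local-IV argument. In part (ii) your second plan also has the paper's architecture. You keep $U_D$ and $D$, retain the identified treatment-response laws on $\mathcal S_x$, fill in $u\notin\mathcal S_x$, and generate $\tilde Y(d)$ from $(X,U_D)$ plus independent noise so that Assumption~\ref{ass:random-assignment} holds.

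\textbf{The gaps are mishandled.} Your treatment of the interior gaps of $\mathcal S_x$ does not work as stated. Take a gap $(t_j,s_{j+1})$ between two support intervals. Evaluating $\Pr(Y\le y,D=1\mid X=x,P=p)=\int_0^p F_1(y\mid x,u)\,du$ and $\Pr(Y\le y,D=0\mid X=x,P=p)=\int_p^1 F_0(y\mid x,u)\,du$ at $p=t_j$ and $p=s_{j+1}$ identifies both $\int_{t_j}^{s_{j+1}}F_1(y\mid x,u)\,du$ and $\int_{t_j}^{s_{j+1}}F_0(y\mid x,u)\,du$. So \emph{both} branches are constrained on a gap, not one.

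A law that is constant in $u$ on the gap is therefore forced to be the gap average of $F_1$, and likewise of $F_0$. There is no mean left to ``adjust.'' Defining one branch as a location shift of the other by $b(x)$ would in general break reproduction of that branch's partial law, because the two average laws need only have means differing by $b(x)$; they need not be translates. The shift device is legitimate only in the tails $u<\underline p$ and $u>\bar p$, where one branch is never observed. You also omit the upper tail.

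\textbf{The missing idea.} What you need is the reason a compatible choice exists on a gap at all. With $G_x(p)=\mu_0(x;\beta_0)+\int_0^p\mathrm{MTE}(x,u)\,du$, the gap averages have mean difference
\[
\frac{1}{s_{j+1}-t_j}\int_{t_j}^{s_{j+1}}\mathrm{MTE}(x,u)\,du=\frac{G_x(s_{j+1})-G_x(t_j)}{s_{j+1}-t_j}.
\]
This equals $b(x)$ only because $G_x(p)=a(x)+p\,b(x)$ holds at the endpoints $t_j$ and $s_{j+1}$. If it failed, no choice on the gap could satisfy both the data and $H_0$.

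Since \eqref{H_0: affine} is assumed only $\mathbb P_{P\mid X=x}$-a.e., this requires the paper's Step~2. $G_x$ is continuous, so $G_x(p)-a(x)-p\,b(x)$ vanishes at every point of the closed support, endpoints included. A nonzero value at a support point would persist on a neighbourhood carrying positive $\mathbb P_{P\mid X=x}$-mass.

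The same continuity argument should replace your ``derivative of $\mathbb E[Y\mid X,P=p]$,'' which is only defined a.e. Applied to $G_x$, it gives $\mathrm{MTE}(x,u)=G_x'(u)=b(x)$ for a.e.\ $u$ in each support interval.

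\textbf{Completing the construction.} With the endpoint identity established, the paper's construction goes through:
\begin{itemize}
\item retain $F_d$ on $\mathcal S_x$;
\item average both branches over each gap;
\item shift by $b(x)$ only in the two tails, taking the observed branch's tail average ($F_1$ below $\underline p$, $F_0$ above $\bar p$).
\end{itemize}
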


The two parts together justify the test, and the asymmetry between them is deliberate. Part (i) shows that the restriction is implied by $H_0$ and thus a legitimate target. Part (ii) shows that it is also the best target the data allow. Its statement is nonparametric and covers the entire distribution of the data: no test can distinguish $H_0$ from \eqref{H_0: affine}, so nothing is lost by testing a
conditional mean.\footnote{For bounded outcomes, as in Example~\ref{ex:probit}, $H_0$ has two further testable implications, which our test does not use; see Remark~\ref{rem:bounded} in the Supplement.} Specifically, \eqref{H_0: Linear Model} is the parametric specialization of \eqref{H_0: affine} obtained by setting $a(X)=\mu_0(X;\beta_0)$ and $b(X)=\mu_1(X;\beta_1)-\mu_0(X;\beta_0)$, where the conditional means are implied by the outcome models in \eqref{Potential Outcomes: binary}.

\begin{corollary}\label{cor:mte-on-support}
Suppose the conditions of Proposition~\ref{prop:transformed-null} hold. If \eqref{H_0: affine} holds, i.e., $\mathbb{E}[Y\mid X=x,P=p] = a(x) + p\,b(x)$ for $\mathbb{P}_X$-a.e.\ $x$ and $\mathbb{P}_{P \mid X = x}$-a.e.\ $p \in \mathcal{S}_x$, then $\mathrm{MTE}(x,u) = b(x)$ for Lebesgue-almost every $u \in \mathcal{S}_x$.
\end{corollary}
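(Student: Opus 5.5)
The argument is to compare two representations of $\mathbb{E}[Y\mid X=x,P=p]$ on $\mathcal{S}_x$: the local instrumental variables (LIV) representation preceding \eqref{eq:deviation}, and the affine form assumed in \eqref{H_0: affine}. Differentiating both in $p$ then recovers the MTE. First, fix $x$ outside a $\mathbb{P}_X$-null set so that both representations hold for $\mathbb{P}_{P\mid X=x}$-a.e.\ $p\in\mathcal{S}_x$. Subtracting them gives
\[
\mu_0(x;\beta_0)+\int_0^p \mathrm{MTE}(x,u)\,du = a(x)+p\,b(x)
\]
for $\mathbb{P}_{P\mid X=x}$-a.e.\ $p\in\mathcal{S}_x$. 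Write $F_x(p):=\int_0^p \mathrm{MTE}(x,u)\,du$. The integrability condition $\mathbb{E}[|Y(1)|+|Y(0)|\mid X]<\infty$, together with Assumptions~\ref{ass:random-assignment}--\ref{ass:selection}, gives $\int_0^1|\mathrm{MTE}(x,u)|\,du<\infty$. Hence $F_x$ is absolutely continuous on $[0,1]$.

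The next step upgrades ``a.e.\ $p$'' to ``every $p$ in $\mathcal{S}_x$''. By Assumption~\ref{ass:relevance} and the footnote after \eqref{eq:deviation}, $\mathcal{S}_x$ is a finite union of nondegenerate closed intervals $[l_j,r_j]$. Every nonempty relatively open subset of $\mathcal{S}_x$ has positive $\mathbb{P}_{P\mid X=x}$-mass, so the set where the identity holds is dense in $\mathcal{S}_x$. Both sides are continuous in $p$, so the identity holds for all $p\in\mathcal{S}_x$. This density-plus-continuity step is the only delicate point. It is where the finite-union-of-intervals structure and the absolute continuity of $P\mid X$ are used. It also explains why nothing can be said between the intervals: there the intercept may jump, because $a(x)$ is a single constant but $F_x$ is only pinned down on each interval.

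On each interval $[l_j,r_j]$, the function $F_x(p)-p\,b(x)$ is constant, equal to $a(x)-\mu_0(x;\beta_0)$. The Lebesgue differentiation theorem gives $F_x'(u)=\mathrm{MTE}(x,u)$ for Lebesgue-a.e.\ $u$. It follows that $\mathrm{MTE}(x,u)=b(x)$ for a.e.\ $u\in(l_j,r_j)$. Alternatively, and without differentiation, $\int_s^t[\mathrm{MTE}(x,u)-b(x)]\,du=0$ for all $l_j\le s<t\le r_j$, and an integrable function with vanishing integrals over all subintervals vanishes a.e. Endpoints have measure zero, and there are finitely many intervals, so $\mathrm{MTE}(x,u)=b(x)$ for Lebesgue-a.e.\ $u\in\mathcal{S}_x$. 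This holds for $\mathbb{P}_X$-a.e.\ $x$.

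A measurability check is also needed: the exceptional $x$-sets come from the LIV representation and from \eqref{H_0: affine}, each $\mathbb{P}_X$-null, so their union is null. I do not expect any obstacle beyond the density step.
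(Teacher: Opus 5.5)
Your proof is correct and is essentially the paper's argument: absolute continuity of $\mu_0(x;\beta_0)+\int_0^p\mathrm{MTE}(x,u)\,du$, the support-plus-continuity upgrade to every $p\in\mathcal{S}_x$, and then reading off the derivative. The only difference is the last step, where the paper uses its density-point Lemma~\ref{lem:zeroset} (which needs no interval structure) and you use the intervals of $\mathcal{S}_x$ directly, equally valid here; one harmless aside is inaccurate, since with a single $a(x)$ the intercept cannot jump across a gap $(r_j,l_{j+1})$---the common line in fact forces $\int_{r_j}^{l_{j+1}}\mathrm{MTE}(x,u)\,du=b(x)\,(l_{j+1}-r_j)$---so only the pointwise MTE is unidentified there.
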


Corollary~\ref{cor:mte-on-support} establishes the support-restricted converse
of Proposition~\ref{prop:transformed-null}(i): the affine restriction implies
constancy of the MTE on the observed support $\mathcal{S}_x$. When the region
outside the support is Lebesgue-negligible,
$\mathrm{Leb}((0,1)\setminus\mathcal{S}_x)=0$ for $\mathbb{P}_X$-almost every
$x$, this constancy extends to the whole of $(0,1)$ and the restriction is
equivalent to the global $H_0$. Next, Corollary~\ref{cor:separable} shows that under additive separability this
requirement weakens: only the marginal distribution of $P$ needs full support on
$(0,1)$.

\begin{corollary}[Additive separability]\label{cor:separable}
Suppose the conditions of Proposition~\ref{prop:transformed-null} hold and Assumption~\ref{ass:additive} holds with $h(u) = \mathbb{E}[V(1) - V(0) \mid U_D = u] \in L^1(0,1)$. Then
\[
\mathbb{E}[\varepsilon \mid X, P] = \mathbb{E}[\varepsilon \mid P] = \phi(P),
\]
If, moreover, the propensity score has full support on $(0,1)$, i.e.\
$\mathbb{P}(P \in \mathcal{O}) > 0$ for every nonempty open
$\mathcal{O} \subseteq (0,1)$, then
\begin{equation}\label{eq:separable-full}
\mathbb{E}\big[\varepsilon \mid P\big] = 0 \ \text{a.s.}
\quad \Longleftrightarrow \quad
H_0 \ \text{in } \eqref{H_0: Null}.
\end{equation}
\end{corollary}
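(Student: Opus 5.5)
The proof has two parts. First we show that $\mathbb{E}[\varepsilon\mid X,P]$ depends on $P$ only. Then we show that, when $P$ has full support on $(0,1)$, the condition $\phi\equiv 0$ is equivalent to the MTE being constant in $u$.

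\textbf{Step 1: the deviation is a function of $P$.} Start from the local instrumental variables representation \eqref{eq:deviation}. It gives $\mathbb{E}[\varepsilon\mid X,P]=\psi(X,P)$, where $\varepsilon = Y-\mu_0(X;\beta_0)-P\,\Delta(X)$. To justify it, note that Assumption~\ref{ass:random-assignment} and Assumption~\ref{ass:selection} give $\mathbb{E}[Y\mid X=x,P=p]=\mathbb{E}[Y(0)\mid X=x]+\int_0^p \mathrm{MTE}(x,u)\,du$. Conditioning on $(X,Z_0)$ and using $D=\mathbf 1\{U_D\le P\}$ with $U_D\mid X$ uniform and independent of $Z_0$ gives $\mathbb{E}[D(Y(1)-Y(0))\mid X,Z_0]=\int_0^{P}\mathrm{MTE}(X,u)\,du$. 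Fubini's theorem applies by the integrability condition. Since $P$ is a function of $Z$, iterating expectations down to $(X,P)$ preserves the formula.

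Under Assumption~\ref{ass:additive}, $\mathrm{MTE}(x,u)-\Delta(x)=h(u)$. Hence $\psi(x,p)=\int_0^p h(u)\,du=\phi(p)$, which is well defined because $h\in L^1(0,1)$. So $\mathbb{E}[\varepsilon\mid X,P]=\phi(P)$. Because this is $\sigma(P)$-measurable, the tower property gives $\mathbb{E}[\varepsilon\mid P]=\mathbb{E}[\phi(P)\mid P]=\phi(P)$.

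\textbf{Step 2: the equivalence under full support.} For the direction ($\Leftarrow$), $H_0$ means $\mathrm{MTE}(x,u)=\Delta(x)$ for $\mathbb{P}_X$-a.e.\ $x$ and a.e.\ $u$. This gives $h(u)=0$ for a.e.\ $u$, because $h(u)=\mathrm{MTE}(x,u)-\Delta(x)$ for a.e.\ $(x,u)$. Hence $\phi\equiv 0$ and $\mathbb{E}[\varepsilon\mid P]=0$ a.s.

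For the direction ($\Rightarrow$), suppose $\phi(P)=0$ a.s. The function $\phi$ is an indefinite integral of an $L^1$ function, so it is (absolutely) continuous on $[0,1]$. The set $\{p\in(0,1):\phi(p)\neq 0\}$ is therefore open. If it were nonempty, full support would give it positive $\mathbb{P}_P$-probability, which contradicts $\phi(P)=0$ a.s. Thus $\phi\equiv 0$ on $(0,1)$. By the Lebesgue differentiation theorem, $h=\phi'=0$ Lebesgue-a.e. Then $\mathrm{MTE}(x,u)=\Delta(x)=\mathbb{E}[Y(1)-Y(0)\mid X=x]$ for a.e.\ $(x,u)$, which is $H_0$.

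\textbf{Expected obstacle.} The delicate step is passing from an almost-sure statement under $\mathbb{P}_P$ to an identity everywhere on $(0,1)$. This is exactly where continuity of $\phi$ combines with the full-support condition on the marginal of $P$. Under additive separability the conditional support $\mathcal{S}_x$ plays no role, because the deviation no longer depends on $x$. This is why only the marginal support matters, in contrast with Corollary~\ref{cor:mte-on-support}. A second point to verify is that $\mathbb{E}[\varepsilon\mid X,P]=\phi(P)$ holds a.s.\ even though \eqref{eq:deviation} is stated only for $p\in\mathcal{S}_x$. This is harmless, since $P\in\mathcal{S}_X$ almost surely.
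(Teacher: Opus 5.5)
Your proposal is correct and follows the paper's proof essentially step for step. The local IV representation under additive separability collapses to $\mathbb{E}[\varepsilon\mid X,P]=\phi(P)$, the tower property gives the $P$-only version, and the equivalence runs through $h=0$ a.e.\ in both directions. The one variation is in the ($\Rightarrow$) direction. You use continuity of $\phi$ and full support to get $\phi\equiv 0$ everywhere on $(0,1)$, then apply the Lebesgue differentiation theorem. The paper instead passes through ``$\phi=0$ Lebesgue-a.e.'' and invokes Lemma~\ref{lem:zeroset}. Your version makes explicit the continuity argument that the paper's step leaves implicit.
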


For implementation, we rewrite the restriction as a residual-based conditional moment restriction. Define
\begin{equation}\label{eq:residual}
\varepsilon(\theta) = Y - \mu_0(X;\beta_0)
- P(Z;\gamma)\,\big[\mu_1(X;\beta_1) - \mu_0(X;\beta_0)\big],
\end{equation}
where $\theta = (\beta_0^\top,\beta_1^\top,\gamma^\top)^\top \in \Theta$. The
null taken to the data is
\begin{equation}\label{CMR restriction}
H_0^{(2)}:\quad \exists\,\theta_0 \in \Theta \ \text{such that} \
\mathbb{E}\big[\varepsilon(\theta_0)\mid W\big] = 0 \quad \text{a.s.},
\end{equation}
with $W = (X^\top, P)^\top$ in the general case\footnote{In the simulations
and the application we evaluate the kernel on the equivalent feature
$P(Z)\,(1, X^\top)^\top$ rather than on $(X^\top, P)^\top$. Because the
propensity score lies in $(0,1)$ almost surely, the map
$(x,p) \mapsto p(1,x)$ is injective on the support of $(X,P)$; hence it
generates the same conditioning $\sigma$-field as $(X,P)$, and the test
characterizes the same conditional moment restriction
\eqref{CMR restriction}.} and $W = P$ under
Assumption~\ref{ass:additive}; these are the omnibus and directional
conditioning sets, respectively. Both use the residual $\varepsilon(\theta)$
of \eqref{eq:residual} and differ only in $W$.

\begin{remark}[Interpretation of a rejection]\label{rem:joint}
Conditional on the MTE identifying assumptions, the null \eqref{CMR restriction} is a conjunction of three components: no essential heterogeneity, correct specification of the outcome means $\mu_d(X;\beta_d)$, and correct specification of the first stage $P(Z;\gamma)$. Only the first is the object of interest; the other two are maintained, so a rejection may also reflect either of them. In practice, the first stage can be screened by the specification tests of \citet{sant2019specification}, and the outcome means can be enriched without altering the procedure, as in Section~\ref{sec:application}.
\end{remark}

\section{Testing Procedure}\label{sec:test}

\subsection{A Projected Gaussian Process Test Statistic}\label{sec:statistic}

Our test is built directly on the conditional moment restriction
\eqref{CMR restriction}. Momentarily treating $\theta_0$ as known, we consider the
quadratic-form statistic
\begin{equation}\label{Test Statistics}
  \hat{T}^2_n(\theta_0) = \frac{1}{n} \sum_{i=1}^n \sum_{j=1}^n
  \varepsilon_i(\theta_0) \, k(W_i, W_j) \, \varepsilon_j(\theta_0),
\end{equation}
where $k(\cdot,\cdot)$ is an integrally strictly positive definite (ISPD) kernel:
\begin{equation}\label{eq:ispd}
    \int\!\!\int f(w) \, k(w, w') \, f(w') \, d\mathbb{P}_W(w)\, d\mathbb{P}_W(w') > 0
    \quad \text{for all nonzero } f \in L^2(\mathbb{P}_W).
\end{equation}

Leading examples of bounded ISPD kernels on $\mathbb{R}^{d_W}$ include the Gaussian kernel $k(w,w') = \exp(-\lVert w - w'\rVert_2^2/\sigma)$ and the Laplacian kernel $\exp(-\lVert w - w'\rVert_1/\sigma)$; see \citet{Sriperumbudur2011Universality} for characterizations of ISPD kernels. By contrast, polynomial kernels generate finite-dimensional RKHSs and are \emph{not} ISPD when the law of $W$ has infinite support: a test built on them checks only finitely many unconditional moments and cannot be consistent against all alternatives. The ISPD requirement is the RKHS counterpart of a condition with a long history in econometrics: the ICM literature asks which families of weight functions ``reveal'' all deviations from a conditional moment restriction. These families range from complex exponentials in \citet{bierens1982consistent} and logistic weights in \citet{bierens1997asymptotic} to indicator functions in \citet{Stute1997Nonparametric} and projections in \citet{Escanciano2006Projections}. \citet{StinchcombeWhite1998} provide a
general characterization of such revealing classes, and \citet{escanciano2024gaussian} makes the correspondence between these weight families and kernels explicit within the Gaussian process framework we adopt here.

The kernel $k$ is a reproducing kernel, not a smoothing device. In kernel
regression--based specification tests, the bandwidth must shrink to zero and $\mathbb{E}[\varepsilon \mid W]$ is estimated nonparametrically, at
dimension-dependent rates. Here, the ISPD property makes the reproducing kernel Hilbert space (RKHS) $\mathcal{H}_k$ generated by $k$ sufficiently rich to characterize the conditional moment restriction exactly, and no smoothing enters the construction. The following result, adapted from
\citet{escanciano2024gaussian}, formalizes this.

\begin{theorem}[Residual embedding]\label{Theorem: Residual Embedding}
Suppose $\mathbb{E}(\varepsilon(\theta_0)\mid W) \in L^2(\mathbb{P}_W)$,
$\mathbb{E}[\varepsilon(\theta_0)^2] < \infty$, and $k$ is continuous, ISPD, and bounded. Then $\mathbb{E}(\varepsilon(\theta_0)\mid W) = 0$ almost surely if and only if $\boldsymbol{m}_{\varepsilon(\theta_0)}
\equiv \mathbb{E}\left[ \varepsilon(\theta_0)\, k(W, \cdot) \right]
= \boldsymbol{0}$ in $\mathcal{H}_k$, in which case $T^2(\theta_0) \equiv
\lVert \boldsymbol{m}_{\varepsilon(\theta_0)} \rVert_{\mathcal{H}_k}^2
= \mathbb{E}\left[ \varepsilon(\theta_0)\, k(W, W')\, \varepsilon(\theta_0)' \right]
= 0$, where $(W', \varepsilon(\theta_0)')$ is an independent copy of
$(W, \varepsilon(\theta_0))$.
\end{theorem}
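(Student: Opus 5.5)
The plan is to work with the mean embedding as a Bochner integral in $\mathcal{H}_k$ and reduce everything to the ISPD condition \eqref{eq:ispd} applied to $f(w)=\mathbb{E}[\varepsilon(\theta_0)\mid W=w]$. Write $\varepsilon=\varepsilon(\theta_0)$ and $g(W)=\mathbb{E}(\varepsilon\mid W)$.

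\textbf{Step 1: $\boldsymbol{m}_\varepsilon$ is a well-defined element of $\mathcal{H}_k$, and it has a reproducing property.} Since $k$ is bounded, say $\sup_w k(w,w)\le \kappa<\infty$, we have $\lVert k(W,\cdot)\rVert_{\mathcal{H}_k}=k(W,W)^{1/2}\le\kappa^{1/2}$. Hence
\[
\mathbb{E}\lVert \varepsilon\,k(W,\cdot)\rVert_{\mathcal{H}_k}\le \kappa^{1/2}\,\mathbb{E}|\varepsilon|\le\kappa^{1/2}\,(\mathbb{E}\varepsilon^2)^{1/2}<\infty .
\]
Continuity of $k$ makes $\mathcal{H}_k$ separable on the (separable) support and $w\mapsto k(w,\cdot)$ measurable, so the Bochner integral $\boldsymbol{m}_\varepsilon=\mathbb{E}[\varepsilon k(W,\cdot)]$ exists. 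Bochner integrals commute with bounded linear functionals, which gives for every $f\in\mathcal{H}_k$:
\[
\langle \boldsymbol{m}_\varepsilon,f\rangle=\mathbb{E}[\varepsilon f(W)]=\mathbb{E}[g(W)f(W)],
\]
where the last equality uses the tower property; $f$ is bounded by $\kappa^{1/2}\lVert f\rVert$, so integrability holds. Conditioning on $W$ inside the integral also yields $\boldsymbol{m}_\varepsilon=\mathbb{E}[g(W)k(W,\cdot)]$.

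\textbf{Step 2: the norm identity.} Expand
\[
\lVert\boldsymbol{m}_\varepsilon\rVert^2=\langle \mathbb{E}[\varepsilon k(W,\cdot)],\mathbb{E}[\varepsilon' k(W',\cdot)]\rangle .
\]
Pulling the two independent Bochner integrals out, via Fubini on the product measure, and using $\langle k(w,\cdot),k(w',\cdot)\rangle=k(w,w')$ gives $T^2(\theta_0)=\mathbb{E}[\varepsilon k(W,W')\varepsilon']$. Integrability follows from $|k|\le\kappa$ and $\mathbb{E}|\varepsilon|<\infty$. Conditioning on $(W,W')$ further rewrites this as $\iint g(w)k(w,w')g(w')\,d\mathbb{P}_W\,d\mathbb{P}_W$.

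\textbf{Step 3: the equivalence.} If $g=0$ a.s., then Step 1 gives $\langle\boldsymbol{m}_\varepsilon,f\rangle=0$ for all $f\in\mathcal{H}_k$, so $\boldsymbol{m}_\varepsilon=0$ and $T^2=0$. Conversely, if $\boldsymbol{m}_\varepsilon=0$, then the double integral in Step 2 vanishes. Since $g\in L^2(\mathbb{P}_W)$ by hypothesis, the ISPD condition \eqref{eq:ispd} forces $g=0$ in $L^2(\mathbb{P}_W)$, i.e.\ $\mathbb{E}(\varepsilon\mid W)=0$ a.s. Applying ISPD with $f=g$ is legitimate only because the kernel quadratic form is well defined for $L^2$ functions, which boundedness guarantees.

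\textbf{Main obstacle.} The main technical point is justifying the interchange of expectation and inner product: Bochner measurability and the Fubini step for the independent copy. I would handle it by noting separability of $\mathcal{H}_k$ from continuity of $k$, and the integrable envelope $\kappa|\varepsilon||\varepsilon'|$. Everything else is direct.
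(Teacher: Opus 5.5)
Your proposal is correct and follows essentially the paper's route. Both arguments reduce the converse to the ISPD condition applied to $g(W)=\mathbb{E}[\varepsilon(\theta_0)\mid W]$, via the identity $\lVert \boldsymbol{m}_{\varepsilon(\theta_0)}\rVert^2=\iint g(w)k(w,w')g(w')\,d\mathbb{P}_W(w)\,d\mathbb{P}_W(w')$; your conditioning on $(W,W')$ is the paper's decomposition $\varepsilon(\theta_0)=g(W)+\eta$ with vanishing cross terms in another form. The differences are presentational: the paper phrases the argument as injectivity of the bounded map $M:L^2(\mathbb{P}_W)\to\mathcal{H}_k$, while you work through the reproducing identity $\langle \boldsymbol{m}_{\varepsilon(\theta_0)},f\rangle=\mathbb{E}[g(W)f(W)]$ and are more explicit about Bochner measurability and the Fubini step.
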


Theorem~\ref{Theorem: Residual Embedding} identifies \eqref{Test Statistics} as the sample analog of $\lVert \boldsymbol{m}_{\varepsilon(\theta_0)}\rVert_{\mathcal{H}_k}^2$: the
conditional moment restriction is summarized by a single element of
$\mathcal{H}_k$, and the statistic is a double sum in the data, with no
nonparametric estimation of $\mathbb{E}[\varepsilon \mid W]$. Because nothing is smoothed, it converges at the parametric rate, regardless of the dimension of $W$, which is what makes the omnibus conditioning set $W = (X^\top,P)^\top$ workable; finite-sample power still declines with dimension, as Section~\ref{sec:simulations} shows. The scale $\sigma$ is correspondingly a tuning constant and not a bandwidth: the asymptotics impose no rate on it. Throughout we use the Gaussian kernel $k(W,W') = \exp(-\lVert W - W'\rVert_2^2/\sigma)$ with the median heuristic $\hat\sigma = \mathrm{med}\{\lVert W_i - W_j\rVert_2^2 : i < j\}$.\footnote{The median is taken on the squared-distance scale, so that the kernel's exponent equals $-1$ at the median pair of observations. The rule goes back to \citet[Sec.~8]{Gretton2012Kernel}; the same squared-median scaling under this kernel parameterization is used by, e.g., \citet{Liu2016Stein}.} Remark~\ref{rem:kernel-choice} discusses the choice.

In practice $\theta_0$ is unknown, and replacing it by an estimator $\hat\theta_n$ contaminates the limit distribution of \eqref{Test Statistics}: the residual depends on $(\beta_0,\beta_1)$, and $W$ itself contains the generated regressor $P(Z;\gamma)$. We deal with both estimation effects through a projection, under the following conditions.

\begin{assumption}\label{Assumption: parameter space}
(i) $\Theta$ is a compact convex subset of $\mathbb{R}^{2K+L}$; (ii) the 
true parameter $\theta_0$ is an interior point of $\Theta$; and (iii) the estimator $\hat{\theta}_n$ satisfies $\lVert \hat{\theta}_n - \theta_0 \rVert = O_p (n^{-\alpha})$ for some $\alpha > 1/4$ under the maintained null and under the local alternatives of Theorem~\ref{thm:asymptotics}(iii), and under a fixed alternative with
probability limit $\theta^\ast$ of $\hat{\theta}_n$, $\lVert \hat{\theta}_n - \theta^\ast \rVert = o_p(1)$. 
\end{assumption}

\begin{assumption}\label{Assumption: residual regularity}
(i) $\varepsilon(\theta)$ is three times continuously differentiable in $\theta$ for every observation, with
$\mathbb{E}\big(\sup_{\theta \in \Theta} |\varepsilon(\theta)|^4\big) < \infty$,
$\mathbb{E}\big(\sup_{\theta \in \Theta} \lVert \nabla_{\theta} \varepsilon(\theta) \rVert^4\big) < \infty$,
$\mathbb{E}\big(\sup_{\theta \in \Theta} \lVert \nabla^2_{\theta} \varepsilon(\theta) \rVert^4\big)
< \infty$, and
$\mathbb{E}\big(\sup_{\theta \in \Theta} \lVert \nabla^3_{\theta} \varepsilon(\theta) \rVert^2\big)
< \infty$; the map $\gamma \mapsto P(Z;\gamma)$ is Lipschitz with
$\mathbb{E}\big(\sup_\gamma \lVert \partial_\gamma P(Z;\gamma) \rVert^2\big) < \infty$;
(ii) $\tilde\Gamma_{\theta} = \mathbb{E}\left[\nabla_{\theta} \varepsilon(\theta) \nabla_{\theta} \varepsilon(\theta)^\top\right]$ is
nonsingular in a neighborhood of $\theta_0$; this holds a fortiori whenever the conditional-score covariance $\Gamma_{\theta_0} = \mathbb{E}[g_{\theta_0}(W) g_{\theta_0}(W)^\top]$, with $g_{\theta_0}(w) = \mathbb{E}[\nabla_{\theta} \varepsilon(\theta_0) \mid W = w]$, is
nonsingular, since $\tilde\Gamma_{\theta_0} = \Gamma_{\theta_0} +
\mathbb{E}[\operatorname{Var}(\nabla_{\theta} \varepsilon(\theta_0) \mid W)]\succeq \Gamma_{\theta_0}$.
\end{assumption}

Assumption~\ref{Assumption: parameter space}(iii) is notably weak: it requires only a weaker convergence rate for $\hat\theta_n$, not an asymptotically linear (influence-function) representation. Throughout, the covariate dimension $K$ is fixed, and in this regime the outcome and propensity score parameters are root-$n$-estimable. The rate condition would also tolerate first steps that converge slower than $n^{-1/2}$. Note that regularized estimators such as the LASSO, the natural first steps when $K$ grows with the sample size, satisfy the rate condition in principle, but extending the theory to such high-dimensional regimes is beyond the scope of this paper. Assumption~\ref{Assumption: residual regularity} imposes standard smoothness and moment conditions on the residual and its first three $\theta$-derivatives, as well as nondegeneracy of the raw-score covariance.

The projection is defined as follows. Let $\mathsf{U}=(Y, Z)$ denote a generic observation and $\mathsf{U}'$ an independent copy; for $\theta \in \Theta$, write
$\tilde g_\theta(\mathsf{U}) = \nabla_\theta \varepsilon(\theta)$ for the raw score, a measurable function of $Z$ alone whose $\gamma$-block is
$-\partial_\gamma P(Z; \gamma)\, [\mu_1(X; \beta_1) - \mu_0(X; \beta_0)]$ and which is therefore not a function of $W$. Define
$\tilde b_\theta(w') = \mathbb{E}[\tilde g_\theta(\mathsf{U}) k(W, w')]$,
$\tilde\Gamma_\theta = \mathbb{E}[\tilde g_\theta(\mathsf{U})\tilde g_\theta(\mathsf{U})^\top]$, and $\tilde M_\theta = \mathbb{E}[\tilde g_\theta(\mathsf{U})\tilde g_\theta(\mathsf{U}')^\top k(W, W')]$. At every $\theta$ with nonsingular $\tilde\Gamma_\theta$, the projected kernel feature is the $\mathcal{H}_k$-element
\begin{equation}\label{eq:projected-kernel}
    (\Pi_\theta k)(\mathsf{u}, \cdot) = k(W, \cdot)
    - \tilde g_\theta(\mathsf{u})^\top \tilde\Gamma_\theta^{-1}\, \tilde b_\theta(\cdot),
\end{equation}
which subtracts from $k(W, \cdot)$ a linear combination of the score-weighted embeddings $\tilde b_\theta(\cdot)$; subscripts are dropped at
$\theta = \theta_0$, and the induced bivariate kernel is $\tilde k_{p,\theta}(\mathsf{u}, \mathsf{u}') = \langle (\Pi_\theta k)(\mathsf{u}, \cdot), (\Pi_\theta k)(\mathsf{u}', \cdot) \rangle$ with $\tilde h_{p,\theta}(\mathsf{u}, \mathsf{u}') = \varepsilon(\theta)\, \tilde k_{p,\theta}(\mathsf{u}, \mathsf{u}')\, \varepsilon(\theta)'$.
The resulting moment functional $m(\theta) = \mathbb{E}[\varepsilon(\theta) (\Pi k)(\mathsf{U}, \cdot)]$, with the projection frozen at $\theta_0$, is Neyman orthogonal: $\partial_\theta\, m(\theta)\big|_{\theta=\theta_0} = \mathbb{E}[\tilde g(\mathsf{U}) (\Pi k)(\mathsf{U}, \cdot)] = \boldsymbol{0}$ in $\mathcal{H}_k$. The verification, together with the four-term form of the bivariate kernel $\tilde k_p$, is given in Supplement~\ref{app:proofs} (Lemma~\ref{lem:orthogonal}).

Intuitively, the projection removes those components of the test that could be contaminated by estimation error in $\hat\theta_n$; what remains is locally insensitive to how well $\theta_0$ is estimated. This is what allows the slow rates permitted by Assumption~\ref{Assumption: parameter space}(iii) and, as we exploit in Section~\ref{sec:procedure}, what renders re-estimation of $\theta$ inside the bootstrap unnecessary.

The feasible statistic replaces $k$ by the projected feature $(\Pi k)$ and $\theta_0$ by $\hat\theta_n$:
\begin{equation}\label{Test Statistics local Projection}
        n \hat T_{np}^2(\hat{\theta}_n)
        = \frac{1}{n}\, \boldsymbol{\varepsilon}(\hat\theta_n)^\top
        \hat{\boldsymbol{\Pi}}^\top \hat{\boldsymbol{K}} \hat{\boldsymbol{\Pi}}\,
        \boldsymbol{\varepsilon}(\hat\theta_n),
\end{equation}
where $\boldsymbol{\varepsilon}(\hat\theta_n) =
(\varepsilon_1(\hat\theta_n),\ldots,\varepsilon_n(\hat\theta_n))^\top$,
$(\hat{\boldsymbol{K}})_{ij} = k(\hat W_i,\hat W_j)$ with
$\hat W_i = (X_i^\top, P(Z_i;\hat\gamma_n))^\top$, and
$\hat{\boldsymbol{\Pi}} = \boldsymbol{I}_{n}
- \hat{\mathbf{g}} (\hat{\mathbf{g}}^\top \hat{\mathbf{g}})^{-1}
\hat{\mathbf{g}}^\top$ is the sample projection matrix built from the estimated raw scores $\hat g_{i} = \nabla_{\theta} \varepsilon_{i}(\theta)\big|_{\theta = \hat\theta_n}$, the sample analogs of $\tilde g(\mathsf{U})$. Algorithm~\ref{alg:test} is unchanged by this notation.

The next theorem collects the large-sample properties of
\eqref{Test Statistics local Projection} under the null, under fixed alternatives, and the Pitman-type local alternatives $H_{1n}: \mathbb{E}\left[\varepsilon(\theta_0) \mid W\right] = R(W)/\sqrt{n}$
with $\mathbb{E}[R^4(W)] < \infty$.

\begin{theorem}\label{thm:asymptotics}
Suppose Assumptions~\ref{ass:exclusion}--\ref{ass:relevance},
\ref{Assumption: parameter space}, and
\ref{Assumption: residual regularity} hold, and suppose
$\mathbb{E}\big[\tilde h_{p,\theta}(\mathsf{U}, \mathsf{U}')^2\big] < \infty$ for all $\theta$ in a neighborhood of $\theta_0$. Then

(i) (\emph{Null distribution}) if $H_0^{(2)}$ is true, the first stage is
correctly specified, $P(Z;\gamma_0) = P(Z)$ a.s., and
$\mathbb{E}[\varepsilon(\theta_0) \mid X, P] = 0$ a.s., then
$n \hat{T}_{np}^2(\hat\theta_n) \overset{d}{\longrightarrow}
\sum_{j=1}^\infty \mu_j A_j^2$,
where $\{A_j\}_{j\geq 1}$ are independent standard normals and
$\{\mu_j\}_{j \geq 1}$ are the eigenvalues of $\tilde h_p$;

(ii) (\emph{Consistency}) under any fixed alternative with
$\theta^\ast = \operatorname{plim} \hat\theta_n$ such that
$\tilde\Gamma_{\theta^\ast}$ is nonsingular,
$\mathbb{E}[\tilde h_{p,\theta^\ast}(\mathsf{U}, \mathsf{U}')^2] < \infty$, and there is no
$v \in \mathbb{R}^{2K+L}$ with
$\mathbb{E}[\varepsilon(\theta^\ast) \mid W] = v^\top \tilde g_{\theta^\ast}(\mathsf{U})$
a.s.,
$\Pr\big( n\hat{T}_{np}^2(\hat\theta_n) > t \big) \to 1$ for any $t > 0$;

(iii) (\emph{Local power}) under $H_{1n}$ and the first-order $Z$-centering
$\mathbb{E}[\varepsilon(\theta_0) \mid Z] = \mathbb{E}[R(W) \mid Z]/\sqrt{n} + o(n^{-1/2})$
in $L^2(\mathbb{P})$ (automatic when the perturbation enters additively with
$Z$-centered base residual),
$n \hat{T}_{np}^2(\hat\theta_n) \overset{d}{\longrightarrow}
\sum_{j=1}^\infty \mu_j A_j^2 + 2 B_p
+ \mathbb{E}\big[ R(W)\, \tilde k_p(\mathsf{U}, \mathsf{U}')\, R(W)' \big]$,
where $B_p$ is a centered normal random variable whose variance is given in
Supplement~\ref{app:proofs}.
\end{theorem}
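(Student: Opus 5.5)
The plan is to reduce the feasible statistic to a degenerate V-statistic in the projected kernel evaluated at $\theta_0$. Then (i)--(iii) follow from standard V-statistic limit theory together with Neyman orthogonality (Lemma~\ref{lem:orthogonal}). Write $n\hat T_{np}^2(\hat\theta_n)=\lVert n^{-1/2}\sum_i \varepsilon_i(\hat\theta_n)(\hat\Pi k)(\mathsf U_i,\cdot)\rVert_{\mathcal H_k}^2$, where $\hat\Pi$ is the empirical projection built from $\hat g_i$ and $\hat W_i$. Define the process $\hat S_n(\theta)=n^{-1/2}\sum_i\varepsilon_i(\theta)(\hat\Pi k)(\mathsf U_i,\cdot)$ and its oracle version $S_n=n^{-1/2}\sum_i\varepsilon_i(\theta_0)(\Pi k)(\mathsf U_i,\cdot)$.

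\textbf{Reduction.} The goal is $\lVert \hat S_n(\hat\theta_n)-S_n\rVert_{\mathcal H_k}=o_p(1)$ under $H_0^{(2)}$ and under $H_{1n}$. I would split the error into three parts.
\begin{enumerate}
\item \emph{Residual perturbation.} Expand $\varepsilon_i(\hat\theta_n)$ to second order around $\theta_0$. The linear term is $n^{-1/2}\sum_i \tilde g(\mathsf U_i)(\hat\Pi k)(\mathsf U_i,\cdot)^\top(\hat\theta_n-\theta_0)$. This vanishes exactly, because the sample projection annihilates the span of the $\hat g_i$. The only residue comes from replacing $\tilde g(\mathsf U_i)$ by $\hat g_i$, which costs $O_p(\sqrt n\,\lVert\hat\theta_n-\theta_0\rVert^2)$. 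The quadratic remainder is of the same order by the moment bounds in Assumption~\ref{Assumption: residual regularity}. Since $\alpha>1/4$, both are $o_p(1)$.
\item \emph{Projection estimation.} Replace the empirical projection coefficients $(\hat{\mathbf g}^\top\hat{\mathbf g}/n)^{-1}$ and the empirical $\tilde b$ by their population counterparts $\tilde\Gamma^{-1}$ and $\tilde b$. These are $O_p(n^{-1/2}+n^{-\alpha})$ by the LLN and a Lipschitz argument. They multiply $n^{-1/2}\sum_i\varepsilon_i\tilde g_i$, which is $O_p(1)$ since $\mathbb E[\varepsilon\mid Z]=0$ under the null. Their product is therefore negligible.
\item \emph{Generated regressor in the kernel argument.} Replace $k(\hat W_i,\cdot)$ by $k(W_i,\cdot)$. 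A Taylor expansion in $\gamma$ gives $n^{-1/2}\sum_i\varepsilon_i\,\partial_p k(W_i,\cdot)\,\partial_\gamma P(Z_i)^\top(\hat\gamma-\gamma_0)$ plus an $O_p(\sqrt n\lVert\hat\gamma-\gamma_0\rVert^2)$ remainder. The first term is $O_p(1)\cdot o_p(1)$, because $\varepsilon_i$ is conditionally mean zero given $Z$ under correct first-stage specification. This yields a degenerate centered sum.
\end{enumerate}
I expect step~3 to be the main obstacle. The difficulty is that $\hat\theta_n$ depends on the whole sample, so the centered sum is not an i.i.d.\ average at a fixed $\gamma$. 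I would handle it with a stochastic-equicontinuity bound over a shrinking $n^{-\alpha}$ neighbourhood of $\gamma_0$. The class $\{\varepsilon\,k(P(Z;\gamma)\cdot,\cdot)\}$ is Lipschitz in $\gamma$ with square-integrable envelope, so a bracketing-entropy argument applies, using the boundedness and smoothness of the Gaussian kernel.

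\textbf{Limits.} For (i), $\lVert S_n\rVert^2$ is a V-statistic with kernel $\tilde h_p$. Under the null it is degenerate: $\mathbb E[\tilde h_p(\mathsf u,\mathsf U')]=0$ because $\mathbb E[\varepsilon\mid Z]=0$, and $\tilde g,W$ are $Z$-measurable. The Hilbert-space CLT, or equivalently the spectral decomposition $\tilde h_p=\sum_j\mu_j\phi_j\phi_j$ in $L^2$ with $\mathbb E\tilde h_p^2<\infty$, then gives $\sum_j\mu_jA_j^2$. The diagonal term converges to $\mathbb E\tilde h_p(\mathsf U,\mathsf U)=\sum_j\mu_j$, which is absorbed in the usual V-statistic centering. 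For (ii), the same reduction with $\theta^\ast$ and consistency alone gives $\hat T_{np}^2\to\lVert\mathbb E[\varepsilon(\theta^\ast)(\Pi_{\theta^\ast}k)(\mathsf U,\cdot)]\rVert^2$. This limit is positive: if it were zero, ISPD via Theorem~\ref{Theorem: Residual Embedding} would force $\mathbb E[\varepsilon(\theta^\ast)\mid W]=v^\top\tilde g_{\theta^\ast}$ for $v=\tilde\Gamma^{-1}\mathbb E[\tilde g\varepsilon]$, which is excluded by hypothesis. Hence $n\hat T^2\to\infty$. For (iii), write $\varepsilon_i=e_i+R(W_i)/\sqrt n$ with $\mathbb E[e\mid Z]=o(n^{-1/2})$ by the assumed $Z$-centering, so that steps~1--3 still go through. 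Then $S_n$ equals the null process plus the deterministic shift $\mathbb E[R(W)(\Pi k)(\mathsf U,\cdot)]+o_p(1)$. Expanding the squared norm yields $\sum\mu_jA_j^2$, a cross term $2B_p$ that is Gaussian with variance $\mathbb E[(\int\tilde h_p\text{-type inner product with }R)^2]$, and the constant $\mathbb E[R\,\tilde k_p\,R']$.
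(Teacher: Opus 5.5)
Your architecture matches the paper's. You reduce $n\hat T_{np}^2(\hat\theta_n)$ to the oracle V-statistic with kernel $\tilde h_p$, linearize the generated regressor and remove its linear term by $Z$-centering, use ISPD for positivity in (ii), and treat (iii) as a shifted version of (i).

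Two of your devices differ from the paper's, usefully. In step~1 you use the fact that the sample projection annihilates the estimated scores exactly: $\sum_i (\hat g_i^\top v)\,(\hat{\boldsymbol{\Pi}}\hat k)(\hat W_i,\cdot)=\boldsymbol{0}$ for every $v$. The residual perturbation therefore costs only $O_p(n^{1/2-2\alpha})$. This does not need the population orthogonality of Lemma~\ref{lem:orthogonal}(ii), a CLT for $n^{-1}\sum_i \tilde g(\mathsf U_i)(\Pi k)(\mathsf U_i,\cdot)$, or any centering, and it plays the role of Lemma~\ref{lem:invariance}. In (ii), working with $\hat T_{np}^2(\hat\theta_n)\to_p T_p^2(\theta^\ast)$ needs only consistency. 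The paper's route instead quotes an $O_p(n^{-\alpha})$ rate under the fixed alternative, which Assumption~\ref{Assumption: parameter space}(iii) does not provide. The stochastic-equicontinuity argument you anticipate for step~3 is unnecessary: after the Taylor expansion the bracket is an i.i.d.\ sum at $\gamma_0$, and $\hat\gamma_n$ enters only through the outside factor.

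\textbf{Gap in (ii): the positivity step does not follow as written.} Injectivity of the embedding (Theorem~\ref{Theorem: Residual Embedding}) applies only to $W$-measurable functions. We have $\mathbb{E}[\varepsilon(\theta^\ast)(\Pi_{\theta^\ast}k)(\mathsf U,\cdot)] = \mathbb{E}[(m_{\theta^\ast}(W)-v^\top a(W))\,k(W,\cdot)]$ with $a(W)=\mathbb{E}[\tilde g_{\theta^\ast}(\mathsf U)\mid W]$.
\begin{itemize}
\item So $T_p^2(\theta^\ast)=0$ yields only $m_{\theta^\ast}(W)=v^\top a(W)$ a.s.
\item The $\gamma$-block $-\partial_\gamma P(Z;\gamma)\Delta(X)$ of the raw score is not a function of $W$. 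Hence this is not the relation $m_{\theta^\ast}(W)=v^\top\tilde g_{\theta^\ast}(\mathsf U)$ that the hypothesis excludes.
\item The paper adds an upgrade step. It shows $v^\top\Omega v=0$ for $\Omega=\mathbb{E}[\operatorname{Var}(\tilde g_{\theta^\ast}(\mathsf U)\mid W)]$, whence $v^\top\tilde g_{\theta^\ast}(\mathsf U)=v^\top a(W)$ a.s.
\item That step rests on $\mathbb{E}[\varepsilon(\theta^\ast)\tilde g_{\theta^\ast}(\mathsf U)^\top]=\mathbb{E}[m_{\theta^\ast}(W)a(W)^\top]$. This holds when $\mathbb{E}[\varepsilon(\theta^\ast)\mid Z]$ is a function of $W$, e.g.\ the omnibus $W$ with a correctly specified first stage.
\end{itemize}
You need to invoke a condition of this kind explicitly.

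\textbf{Gaps in the reduction.} There are two.

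First, step~2 silently replaces $n^{-1/2}\sum_i\varepsilon_i(\theta_0)\hat g_i$ by $n^{-1/2}\sum_i\varepsilon_i(\theta_0)\tilde g(\mathsf U_i)$. This is the factor multiplying $(\hat{\mathbf g}^\top\hat{\mathbf g}/n)^{-1}\hat b_n$, where $\hat b_n$ is your empirical $\tilde b$. A Lipschitz bound makes the difference only $O_p(n^{1/2-\alpha})$, which is not negligible for $\alpha\in(1/4,1/2]$. The device from your step~3 is needed here too:
\begin{itemize}
\item Expand $\hat g_i-\tilde g(\mathsf U_i)=\nabla^2_\theta\varepsilon_i(\theta_0)(\hat\theta_n-\theta_0)$ plus a remainder controlled by the third-derivative moment in Assumption~\ref{Assumption: residual regularity}.
\item Since $\nabla^2_\theta\varepsilon(\theta_0)$ is a function of $Z$, centering gives $n^{-1/2}\sum_i\varepsilon_i(\theta_0)\nabla^2_\theta\varepsilon_i(\theta_0)=O_p(1)$.
\end{itemize}
This is how Lemma~\ref{lem:sampleprojection} handles the term.

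Second, every centering you use requires $\mathbb{E}[\varepsilon(\theta_0)\mid Z]=0$. This includes steps~2--3 and the degeneracy of $\tilde h_p$ in (i). The hypothesis gives only $\mathbb{E}[\varepsilon(\theta_0)\mid X,P]=0$, and because the raw scores are not $W$-measurable, $W$-centering would not suffice. The bridge is index sufficiency, $\mathbb{E}[Y\mid Z]=\mathbb{E}[Y\mid X,P]$ under Assumptions~\ref{ass:exclusion}--\ref{ass:selection}. Combined with $P(Z;\gamma_0)=P(Z)$, this makes $Y-\varepsilon(\theta_0)$ a function of $(X,P)$. This is Lemma~\ref{lem:centering}, and you need to prove it rather than assert it.
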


Part (i) shows that, thanks to the projection, the feasible statistic has the same
limiting null distribution as if $\theta_0$ were known. The centering condition in
part (i) is automatic under the joint null of Remark~\ref{rem:joint}: $H_0$
together with the parametric outcome means and the correct first stage gives
$\mathbb{E}[\varepsilon(\theta_0) \mid X, P] = 0$ by
Proposition~\ref{prop:transformed-null}(i). Lemma~\ref{lem:centering} in
Supplement~\ref{app:proofs} then upgrades it to
$\mathbb{E}[\varepsilon(\theta_0) \mid Z] = 0$, which is what the raw-score
projection requires. The projection is not
innocuous for power in one specific direction, however: by construction, the
projected feature annihilates the span of the score-weighted embeddings, so the
test is blind exactly to the fixed alternatives excluded in part~(ii), namely
those whose conditional mean coincides a.s.\ with a linear combination of the raw
scores, and it has nontrivial local power in part~(iii) only along directions
outside the same finite-dimensional set. To first order, such score-proportional
directions are reparameterizations of the parametric null rather than departures
attributable to essential heterogeneity. The limit in (i) is
non-pivotal, however, as the eigenvalues $\{\mu_j\}$ depend on the data distribution;
the next subsection provides a simple bootstrap that sidesteps their estimation.

\subsection{Bootstrap Critical Values and Implementation}\label{sec:procedure}

Because the null limit in Theorem~\ref{thm:asymptotics}(i) is not pivotal, we approximate its critical values by a multiplier bootstrap. A key practical payoff of the projection is that the bootstrap does \emph{not} require re-estimating $\theta$ in any replication: since the projected moment is locally robust to $\theta$, perturbing the residuals alone reproduces the correct null limit. Each replication therefore costs one matrix quadratic form rather than a full re-estimation of the outcome and propensity score models. Specifically, letting $\{V_i\}_{i=1}^n$ be i.i.d.\ multipliers with $\mathbb{E}(V_i) = 0$ and
$\mathbb{E}(V_i^2) = 1$ (e.g., Rademacher, standard normal, or the two-point distribution of \citealp{mammen1993bootstrap}), drawn independently of the data, the bootstrap statistic is
\begin{equation}\label{eq:bootstrap-stat}
    n \hat T_{np}^{2\ast}(\hat\theta_n)
    = \frac{1}{n} \big(\boldsymbol{\varepsilon}(\hat\theta_n)^{\ast}\big)^\top
    \hat{\boldsymbol{\Pi}}^\top \hat{\boldsymbol{K}} \hat{\boldsymbol{\Pi}}\,
    \boldsymbol{\varepsilon}(\hat\theta_n)^{\ast},
    \qquad
    \boldsymbol{\varepsilon}(\hat\theta_n)^{\ast}
    = \big(\varepsilon_1(\hat\theta_n) V_1, \ldots,
    \varepsilon_n(\hat\theta_n) V_n\big)^\top.
\end{equation}

The full procedure is summarized as follows.

\begin{algorithm}\label{alg:test}
\emph{Input:} data $\{(Y_i, D_i, Z_i)\}_{i=1}^n$; kernel scale $\sigma$;
number of bootstrap replications $B$; nominal level $\alpha$.
\begin{enumerate}
    \item Estimate $\hat{\theta}_n = (\hat\beta_{0n}^\top, \hat\beta_{1n}^\top,
    \hat\gamma_n^\top)^\top$; form $\hat P_i = P(Z_i;\hat\gamma_n)$, the residuals
    $\varepsilon_i(\hat\theta_n)$, and $\hat W_i$.
    \item Compute $\hat{\boldsymbol{K}}$ with the Gaussian kernel and the projection
    matrix $\hat{\boldsymbol{\Pi}}$; form the statistic
    $n\hat T_{np}^2(\hat\theta_n)$ in \eqref{Test Statistics local Projection}.
    \item For $b = 1, \ldots, B$: draw multipliers $\{V_i^b\}_{i=1}^n$ and compute
    $n \hat T_{np}^{2\ast b}$ as in \eqref{eq:bootstrap-stat}.
    \item Reject $H_0^{(2)}$ at level $\alpha$ if $n\hat T_{np}^2(\hat\theta_n)$
    exceeds the empirical $(1-\alpha)$-th quantile $c_{B,\alpha}^{\ast}$ of $\{n \hat T_{np}^{2\ast b}\}_{b=1}^B$; equivalently, report the $p$-value
    $B^{-1} \sum_{b=1}^B \mathbf{1}\{ n\hat T_{np}^{2\ast b}
    > n \hat T_{np}^2(\hat\theta_n) \}$.
\end{enumerate}
\end{algorithm}

\begin{remark}[Choice of kernel and scale]\label{rem:kernel-choice}
Any bounded ISPD kernel yields a consistent test, but $\sigma$ directs power. A large $\sigma$ weights smooth, low-frequency departures from the null, whereas a small $\sigma$ weights localized, high-frequency ones. No choice is uniformly best \citep{Escanciano2009Lack}. In this respect, the fixed-scale statistic and local-smoothing tests are less far apart than they appear, since fixed-bandwidth kernel specification tests are themselves ICM-type \citep{fan2000consistent}. The median heuristic adapts the scale to the dimension and spread of $W$ \citep{Gretton2012Kernel, Garreau2017Median}, and \citet{escanciano2024gaussian} documents this
for the Gaussian kernel with $\sigma = 2$, the classical ICM implementation of \citet{bierens1982consistent}. Our simulations report
$\sigma \in \{\mathrm{med}, 1, 2\}$ and match this pattern; aggregating over a grid of scales is a further option \citep{Schrab2023MMDAgg}.
\end{remark}

We next establish the validity of the bootstrap critical values. The following theorem shows that the multiplier bootstrap distribution consistently estimates the null limit of Theorem~\ref{thm:asymptotics}(i), and hence that the test in Algorithm~\ref{alg:test} has asymptotic size $\alpha$ and power approaching one under any fixed alternative.

\begin{theorem}\label{Theorem: Bootstrap Consistency}
Suppose Assumptions~\ref{ass:exclusion}--\ref{ass:relevance},
\ref{Assumption: parameter space}, and
\ref{Assumption: residual regularity} hold, suppose
$\mathbb{E}\big[\tilde h_p(\mathsf{U}, \mathsf{U}')^2\big] < \infty$, and let $H_0^{(2)}$ be true
with the first stage correctly specified, $P(Z;\gamma_0) = P(Z)$ a.s., and
$\mathbb{E}[\varepsilon(\theta_0) \mid X, P] = 0$ a.s. Then, for any
$\alpha \in (0,1)$,
$c_{B,\alpha}^{\ast} \xrightarrow{p} c_\alpha$ as $n, B \to \infty$, where
$c_\alpha$ is the $(1-\alpha)$ quantile of the null limit in
Theorem~\ref{thm:asymptotics}(i). Consequently, the test in
Algorithm~\ref{alg:test} has asymptotic size $\alpha$ and is consistent against the
alternatives in Theorem~\ref{thm:asymptotics}(ii).
\end{theorem}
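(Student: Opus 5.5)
We would prove Theorem~\ref{Theorem: Bootstrap Consistency} by showing that, conditional on the data, the bootstrap statistic converges weakly in probability to the null limit $\sum_j \mu_j A_j^2$ of Theorem~\ref{thm:asymptotics}(i). Because that limit has a continuous and strictly increasing distribution function (at least one $\mu_j>0$ whenever $\tilde h_p$ is nondegenerate), quantile convergence $c^\ast_{B,\alpha}\to_p c_\alpha$ then follows from the standard argument, with the Monte Carlo error from $B<\infty$ controlled by Glivenko--Cantelli for the empirical bootstrap distribution. Size $\alpha$ follows by combining this with Theorem~\ref{thm:asymptotics}(i). Consistency follows because under fixed alternatives $n\hat T^2_{np}\to\infty$ by Theorem~\ref{thm:asymptotics}(ii), while the bootstrap statistic remains $O_p(1)$: its conditional mean $n^{-1}\sum_i \varepsilon_i(\hat\theta_n)^2(\hat{\boldsymbol\Pi}^\top\hat{\boldsymbol K}\hat{\boldsymbol\Pi})_{ii}$ is bounded in probability, since $k$ is bounded, $\hat{\boldsymbol\Pi}$ is an orthogonal projection and $\mathbb{E}\sup_\theta|\varepsilon(\theta)|^4<\infty$. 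Hence $c^\ast_{B,\alpha}=O_p(1)$.

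The core step is to write $n\hat T^{2\ast}_{np}=\bigl\|n^{-1/2}\sum_i V_i\varepsilon_i(\hat\theta_n)\hat\phi_i\bigr\|^2_{\mathcal H_k}$. Here $\hat\phi_i=k(\hat W_i,\cdot)-\hat g_i^\top(\hat{\mathbf g}^\top\hat{\mathbf g}/n)^{-1}\hat{\mathbf b}(\cdot)$ is the sample projected feature and $\hat{\mathbf b}(\cdot)=n^{-1}\sum_j\hat g_j k(\hat W_j,\cdot)$. We then compare this with the infeasible process $\mathbb{G}^\ast_n=n^{-1/2}\sum_iV_i\varepsilon_i(\theta_0)(\Pi k)(\mathsf U_i,\cdot)$. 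Conditionally on the data, $\mathbb{G}_n^\ast$ is a Gaussian (or multiplier) process in $\mathcal H_k$ with covariance operator $n^{-1}\sum_i\varepsilon_i^2(\Pi k)_i\otimes(\Pi k)_i$. By the Hilbert-space law of large numbers this converges a.s.\ in trace norm to $\mathbb{E}[\varepsilon^2(\Pi k)\otimes(\Pi k)]$, whose eigenvalues are exactly the $\mu_j$ of $\tilde h_p$; the condition $\mathbb{E}\tilde h_p^2<\infty$ and the moment bounds give the required integrability. A conditional CLT in separable Hilbert spaces, or directly the Gaussianity for normal multipliers together with Lindeberg for general ones, then yields $\|\mathbb G^\ast_n\|^2\Rightarrow\sum_j\mu_jA_j^2$ in probability.

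It remains to show that $\bigl\|n^{-1/2}\sum_iV_i[\varepsilon_i(\hat\theta_n)\hat\phi_i-\varepsilon_i(\theta_0)(\Pi k)_i]\bigr\|_{\mathcal H_k}=o_{p}(1)$. Conditionally on the data this quantity has second moment $n^{-1}\sum_i\|\varepsilon_i(\hat\theta_n)\hat\phi_i-\varepsilon_i(\theta_0)(\Pi k)_i\|^2$, so it suffices to show that this average is $o_p(1)$. A crucial simplification is that the multipliers are independent and mean zero, so no Neyman orthogonality is needed here and only consistency of the summands matters. We would bound the average by three pieces. The first is the residual error $|\varepsilon_i(\hat\theta_n)-\varepsilon_i(\theta_0)|\le\sup\|\nabla\varepsilon\|\,\|\hat\theta_n-\theta_0\|$. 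The second is the kernel error $\|k(\hat W_i,\cdot)-k(W_i,\cdot)\|^2_{\mathcal H_k}=2-2k(\hat W_i,W_i)\le C\|\hat W_i-W_i\|$, which is controlled by the Lipschitz property of the Gaussian kernel and of $\gamma\mapsto P(Z;\gamma)$. The third is the score and projection error, handled by uniform laws of large numbers for $\hat{\mathbf g}^\top\hat{\mathbf g}/n\to\tilde\Gamma$ and for $\hat{\mathbf b}\to\tilde b$ in $\mathcal H_k$, using the envelope moments in Assumption~\ref{Assumption: residual regularity} and the nonsingularity of $\tilde\Gamma$. Cauchy--Schwarz with the fourth-moment envelopes closes each bound.

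The main obstacle is the generated conditioning variable $\hat W_i$, which enters $\hat\phi_i$ both directly and through $\hat{\mathbf b}$. We must ensure that the approximation holds uniformly in the $\mathcal H_k$ norm and not merely pointwise. We would handle this through the RKHS isometry $\|k(w,\cdot)-k(w',\cdot)\|^2=k(w,w)+k(w',w')-2k(w,w')$, combined with the fact that $\mathbb E\sup_\gamma\|\partial_\gamma P\|^2<\infty$. Together these give $n^{-1}\sum_i\|k(\hat W_i,\cdot)-k(W_i,\cdot)\|^2=O_p(n^{-\alpha})$.
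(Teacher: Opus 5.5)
Your proof is correct. Its skeleton matches the paper's: a conditional multiplier CLT in $\mathcal H_k$, identification of the nonzero spectrum of the limiting covariance operator with the $\mu_j$ of $\tilde h_p$, quantile convergence from continuity of the limit law, and consistency from a bounded conditional mean plus Markov.

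Where you differ is in how the estimation effects are handled.
\begin{itemize}
\item The paper applies the multiplier CLT directly to the feasible summands $\varepsilon_i(\hat\theta_n)\hat\Phi_i$. It shows that their empirical covariance operator converges to $C_{\tilde p}$, using the idiosyncratic-plus-common decomposition from the proof of Lemma~\ref{lem:sampleprojection}. It then verifies a max-leverage condition and takes tightness and uniform conditional convergence from Theorem~4.4 of \citet{escanciano2024gaussian}.
\item You instead couple the feasible bootstrap process with the infeasible process $\mathbb G_n^\ast$ driven by the same multipliers. The CLT is then applied only to i.i.d.\ population summands, which are exactly Gaussian conditionally when the multipliers are normal. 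All estimation error collapses into the single conditional second moment $n^{-1}\sum_i\|\varepsilon_i(\hat\theta_n)\hat\phi_i-\varepsilon_i(\theta_0)(\Pi k)(\mathsf U_i,\cdot)\|^2$.
\end{itemize}

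Your route has several advantages. It gives a slightly stronger conclusion, namely that the two bootstrap processes are close in conditional probability. It makes transparent that orthogonality plays no role at this step. It also avoids checking Lindeberg or max-leverage for the data-dependent, non-i.i.d.\ estimated summands; the paper handles that step somewhat loosely by transferring a population fourth-moment bound. The cost is a set of Cauchy--Schwarz bounds such as $n^{-1}\sum_i\varepsilon_i(\theta_0)^2\|\hat g_i-\tilde g(\mathsf U_i)\|^2=O_p(n^{-2\alpha})$. The fourth-moment envelopes in Assumption~\ref{Assumption: residual regularity} supply these. You also explicitly control the Monte Carlo error from finite $B$, which the paper leaves implicit.

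One small correction in your consistency step. Bounded $k$ and the projection property of $\hat{\boldsymbol\Pi}$ do not by themselves bound $(\hat{\boldsymbol\Pi}^\top\hat{\boldsymbol K}\hat{\boldsymbol\Pi})_{ii}=\|\hat\phi_i\|^2$. They give only $\|\hat\phi_i\|^2\le 2+2n h_{ii}$, where $h_{ii}$ is the $i$th diagonal entry of $\hat{\mathbf g}(\hat{\mathbf g}^\top\hat{\mathbf g})^{-1}\hat{\mathbf g}^\top$.

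You can repair this in either of two ways:
\begin{itemize}
\item To get $O_p(1)$, use the score moments and nonsingularity of $\tilde\Gamma_{\theta^\ast}$, exactly as in your main step.
\item Alternatively, $\sum_i h_{ii}=2K+L$ and $\max_i\varepsilon_i(\hat\theta_n)^2=o_p(n^{1/2})$ together give $c^\ast_{B,\alpha}=o_p(n^{1/2})$. That already suffices, because $n\hat T^2_{np}(\hat\theta_n)$ grows at rate $n$ under the alternatives of Theorem~\ref{thm:asymptotics}(ii).
\end{itemize}
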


\section{Simulations}\label{sec:simulations}

This section studies the size and power performance of the projected statistic
across conditioning sets and Gaussian scales, with particular attention to how
its performance changes with the dimension and dispersion of $W$.

\subsection{Design and Implementation}\label{sec:sim-design}

Throughout this section, $T^{W}_{\sigma}$ denotes the projected statistic $n\hat T^2_{np}$ of Section~\ref{sec:test}, implemented with conditioning set $W \in \{(X^\top,P)^\top,\, P\}$ and Gaussian scale $\sigma \in \{\mathrm{med},1,2\}$. The key distinction is that $\hat\sigma_{\mathrm{med}} = \mathrm{med}\{\lVert W_i - W_j\rVert_2^2 : i<j\}$ rescales with the dimension and dispersion of $W$, whereas the fixed scales $\sigma \in \{1,2\}$ do not. The benchmark $T^{W}_{2}$ coincides with the classical ICM implementation of \citet{bierens1982consistent}. The main text reports both the omnibus family $\TXP{\sigma}$ and the directional family $\TP{\sigma}$; the HSU polynomial Wald benchmarks $\Qpoly{2}$ and $\Qpoly{5}$ appear only in Supplement~\ref{supp:additional-sims}.

All designs share the selection equation
\[
D = \mathbf{1}\{ Z_0^\top\gamma_0 + X^\top\gamma_1 \ge U \},
\qquad U \sim N(0,1),\ U \perp (X, Z_0),
\]
so that $P(Z) = \Phi(Z_0^\top\gamma_0 + X^\top\gamma_1)$ and $U_D = \Phi(U)$. The covariates are
$X = (X_1,\ldots,X_K)^\top$ with $X_k \stackrel{i.i.d.}{\sim} N(0,\sigma_X^2)$,
and $Z_0 \sim N(0, I_2)$ is drawn independently of $X$. The baseline sets $K = 5$ and $\sigma_X = 1$. In designs with $K > 5$, the first five coordinates
form a signal block, and the remaining $K - 5$ are nuisance covariates with
zero loadings throughout.

Essential heterogeneity is generated by selection on unobserved gains: for each
direction $j \in \{1,2,3,4\}$,
\[
\eta_j = \big[\,c_1 + c_2\, S_j(X)\,\big]\, U,
\]
with $c_1 = c_2 = 0$ under the null, where $c_1$ and $c_2$ scale a component depending on
the $P$ alone and the covariates $X$. The covariate component is
$S_j(X) = \sum_{k=1}^{5} \tilde f_j(X_k)$, where
\[
f_1(x) = x, \qquad
f_2(x) = e^{x/2} - \mathbb{E}\, e^{G/2}, \qquad
f_3(x) = x^2 - \sigma_X^2, \qquad
f_4(x) = \sin\!\big(2x/\sigma_X\big),
\]
with $G$ an independent copy of the covariate vector $X$ and
$\tilde f_j = f_j / \mathrm{sd}\{f_j(G_j)\}$; hence $\mathrm{Var}(S_j) = 5$ for
every direction and every $\sigma_X$. The four directions represent linear,
exponential, quadratic, and oscillatory departures, and the power tables report
$c_1 \in \{0,3\}$ and $c_2 \in \{0.5,1,2\}$. Because the loading $(c_1+c_2S_j(X))$ also multiplies a mean-zero
individual-specific term, raising $c_1$ or $c_2$ inflates the residual variance
along with the deviation; comparisons across $c_2$ are therefore comparisons of
signal to noise rather than of the size of the alternative alone.

We embed this common heterogeneity term in two outcome equations, for
$d \in \{0,1\}$:
\begin{align}
\textbf{Design L:}\quad
& Y(d) = \alpha_d + X^\top\beta_d + d\cdot\eta + V(d),
  \label{eq:design-L}\\
\textbf{Design N:}\quad
& Y(d) = \mathbf{1}\big\{\alpha_d + X^\top\beta_d + d\cdot\eta + V(d) > 0
  \big\},
  \label{eq:design-N}
\end{align}
with $(V(0), V(1))$ standard normal, independent of 
$(X, Z_0, U)$. Parameter values are
$\beta_1=(1,0,0,0,1)^\top$, $\beta_0=(1,1,1,1,1)^\top$,
$(\alpha_1,\alpha_0)=(1,0)$, $\gamma_1=(0.5,0,0,0,0.5)^\top$, and
$\gamma_0=(0.5,0.5)^\top$.

In each of the $1{,}000$ Monte Carlo replications, we estimate $(\beta_0,\beta_1)$ by least squares (Design~L) or maximum likelihood
(Design~N), estimate the propensity score by probit, and apply Algorithm~\ref{alg:test} with $B=500$ multiplier draws from the two-point
distribution of \citet{mammen1993bootstrap}. Each design is reported at the
three scales $\sigma \in \{\mathrm{med},1,2\}$ that index the statistics
$T^{W}_{\sigma}$. 

Notice that the median heuristic is based on the pairwise distances
$\lVert W_i - W_j\rVert_2^2$, which grow with the dispersion $\sigma_X$ of the
covariates: the realized $\hat\sigma_{\mathrm{med}}$ is $2.56$ at the baseline
$\sigma_X = 1$, close to the fixed $\sigma = 2$, and $0.73$ and $10.92$ at
$\sigma_X = 0.5$ and $2$, respectively, both far from the fixed benchmarks.
Reporting $\sigma_X \in \{0.5,1,2\}$ therefore separates the scale choice from
the remaining design dimensions.

\subsection{Simulation Results}\label{sec:sim-results}

Table~\ref{main:tab:size} reports empirical sizes for the omnibus and directional statistics in the probit design (Design N) across covariate scales
$\sigma_X \in \{0.5,1,2\}$ and Gaussian scales $\sigma \in \{\mathrm{med},1,2\}$. Rejection frequencies are close to nominal throughout, so size control is satisfactory for both statistics across both dimensions. The corresponding linear-design results are reported in Supplement~\ref{supp:additional-sims}.

\begin{table}[!tbp]
\caption{Empirical size under the probit design.}
\label{main:tab:size}
\centering
\singlespacing
\setlength{\tabcolsep}{3.5pt}
\footnotesize
\begin{threeparttable}
\begin{tabular*}{\textwidth}{@{\extracolsep{\fill}}ll*{9}{r}@{}}
\toprule
& & \multicolumn{3}{c}{$\sigma_X=0.5$} & \multicolumn{3}{c}{$\sigma_X=1$} & \multicolumn{3}{c}{$\sigma_X=2$}\\
\cmidrule(lr){3-5}\cmidrule(lr){6-8}\cmidrule(lr){9-11}
$\alpha$ & $n$ & $\mathrm{med}$ & $1$ & $2$ & $\mathrm{med}$ & $1$ & $2$ & $\mathrm{med}$ & $1$ & $2$\\
\midrule
\multicolumn{11}{l}{\textit{Panel A}: omnibus $T^{X\!P}_\sigma$}\\
\midrule
\addlinespace[1pt]
0.1 & 500 & 0.118 & 0.119 & 0.114 & 0.116 & 0.098 & 0.105 & 0.107 & 0.117 & 0.111\\
 & 1,000 & 0.108 & 0.103 & 0.100 & 0.093 & 0.090 & 0.090 & 0.114 & 0.103 & 0.093\\
 & 2,000 & 0.092 & 0.087 & 0.103 & 0.095 & 0.095 & 0.093 & 0.110 & 0.108 & 0.116\\
\addlinespace[2pt]
0.05 & 500 & 0.051 & 0.050 & 0.059 & 0.057 & 0.042 & 0.053 & 0.053 & 0.067 & 0.063\\
 & 1,000 & 0.054 & 0.053 & 0.049 & 0.046 & 0.053 & 0.045 & 0.058 & 0.055 & 0.060\\
 & 2,000 & 0.055 & 0.054 & 0.043 & 0.045 & 0.046 & 0.045 & 0.058 & 0.060 & 0.063\\
\addlinespace[2pt]
0.01 & 500 & 0.017 & 0.015 & 0.013 & 0.014 & 0.009 & 0.012 & 0.014 & 0.014 & 0.016\\
 & 1,000 & 0.008 & 0.009 & 0.006 & 0.009 & 0.010 & 0.009 & 0.011 & 0.012 & 0.014\\
 & 2,000 & 0.011 & 0.009 & 0.009 & 0.009 & 0.010 & 0.006 & 0.013 & 0.011 & 0.013\\
\midrule
\multicolumn{11}{l}{\textit{Panel B}: directional $T^{P}_\sigma$}\\
\midrule
\addlinespace[1pt]
0.1 & 500 & 0.092 & 0.105 & 0.106 & 0.096 & 0.113 & 0.108 & 0.109 & 0.099 & 0.095\\
 & 1,000 & 0.106 & 0.099 & 0.095 & 0.101 & 0.087 & 0.095 & 0.111 & 0.113 & 0.112\\
 & 2,000 & 0.083 & 0.094 & 0.095 & 0.090 & 0.094 & 0.092 & 0.106 & 0.094 & 0.095\\
\addlinespace[2pt]
0.05 & 500 & 0.048 & 0.056 & 0.063 & 0.047 & 0.048 & 0.058 & 0.057 & 0.044 & 0.046\\
 & 1,000 & 0.057 & 0.052 & 0.050 & 0.048 & 0.039 & 0.042 & 0.064 & 0.056 & 0.064\\
 & 2,000 & 0.044 & 0.042 & 0.038 & 0.052 & 0.051 & 0.045 & 0.051 & 0.047 & 0.050\\
\addlinespace[2pt]
0.01 & 500 & 0.010 & 0.010 & 0.014 & 0.006 & 0.012 & 0.009 & 0.011 & 0.015 & 0.013\\
 & 1,000 & 0.008 & 0.011 & 0.007 & 0.012 & 0.008 & 0.007 & 0.013 & 0.013 & 0.009\\
 & 2,000 & 0.009 & 0.012 & 0.009 & 0.015 & 0.008 & 0.008 & 0.007 & 0.007 & 0.005\\
\bottomrule
\end{tabular*}
\begin{tablenotes}[flushleft]\scriptsize
\item \emph{Note}: Empirical size over $1{,}000$ replications with $B = 500$
multiplier-bootstrap draws; $K = 5$ and $c_1=c_2=0$. Columns
$\mathrm{med}$, $1$, and $2$ denote the Gaussian scale $\sigma$.
\end{tablenotes}
\end{threeparttable}
\end{table}

Table~\ref{main:tab:power_main} compares the power of the omnibus and directional statistics at the median kernel scale across the four directions $S_1$--$S_4$ and the two loading regimes $c_1=0$ and $c_1=3$. When the deviation loads on the covariates, the omnibus statistic is uniformly more powerful: in Panel~A with $c_2=2$ and $n=2{,}000$, its rejection rates are $0.527$, $0.879$, $1.000$, and $0.888$ across $S_1$--$S_4$, compared with $0.073$, $0.055$, $0.288$, and $0.100$ for the directional statistic. In Panel~B, where $c_1=3$ adds a large propensity-score component deviation, the ranking reverses: at $(c_1,c_2)=(3,0.5)$ and $n=2{,}000$ the directional statistic rejects with probabilities between $0.764$ and $0.822$, against $0.178$ to $0.692$ for the omnibus statistic. The trade-off is asymmetric: the omnibus statistic retains non-trivial power where the directional statistic is favored, whereas against three of the four covariate-loaded alternatives the directional statistic is barely distinguishable from its own size.

Tables~\ref{main:tab:power_N_c0}--\ref{main:tab:power_di_N_c3} compare
rejection rates at the data-driven median scale with those at the two fixed
Gaussian scales, $\sigma=1$ and $\sigma=2$, as the covariate dispersion
$\sigma_X$ varies. A fixed scale is competitive mainly when it happens to lie
close to the median scale. At $\sigma_X=0.5$, for example, against $S_4$ with
$c_2=1$ and $n=2{,}000$, the rejection rates of
$(\TXP{\mathrm{med}},\TXP{1},\TXP{2})$ are $(0.714,0.537,0.261)$; at
$\sigma_X=2$, against $S_2$ with $c_2=2$, they are $(0.956,0.336,0.650)$. As
anticipated in Section~\ref{sec:sim-design}, the median scale of the omnibus
conditioning set is $0.73$ at $\sigma_X=0.5$ and $10.92$ at $\sigma_X=2$. At
$\sigma_X=0.5$, both fixed scales are too large, so the rejection rates follow
the distance to the median: the median rule outperforms $\sigma=1$, which in
turn outperforms $\sigma=2$. At $\sigma_X=2$, both are too small, and the
ranking reverses: the median rule outperforms $\sigma=2$, which outperforms
$\sigma=1$. At the baseline $\sigma_X=1$, the median scale is $2.56$; the
median rule and the fixed $\sigma=2$ then perform similarly. The median rule
also performs well for the directional statistic in the panels with $c_1=3$,
where that statistic has power: it gives the highest rejection rate in 26 of
the 36 cells (four directions, three values of $c_2$, three sample sizes) at
$\sigma_X=1$ and 28 of 36 at $\sigma_X=2$. At $\sigma_X=0.5$, it is highest in
only 12 of 36; where it is not, the shortfall from the best fixed scale is
usually small, the largest being about $0.18$ for $S_3$.

\begin{table}[!tbp]
\caption{Power of the omnibus and directional statistics at the median scale.}
\label{main:tab:power_main}
\centering
\singlespacing
\setlength{\tabcolsep}{3pt}
\footnotesize
\begin{threeparttable}
\begin{tabular*}{\textwidth}{@{\extracolsep{\fill}}ll*{8}{r}@{}}
\toprule
& & \multicolumn{2}{c}{$S_1$ } & \multicolumn{2}{c}{$S_2$ }
& \multicolumn{2}{c}{$S_3$ } & \multicolumn{2}{c}{$S_4$ }\\
\cmidrule(lr){3-4}\cmidrule(lr){5-6}\cmidrule(lr){7-8}\cmidrule(lr){9-10}
$c_2$ & $n$ & $T^{X\!P}$ & $T^{P}$ & $T^{X\!P}$ & $T^{P}$
& $T^{X\!P}$ & $T^{P}$ & $T^{X\!P}$ & $T^{P}$\\
\midrule
\multicolumn{10}{c}{\textit{Panel A}: $c_1=0$ }\\
\midrule
0.5 & 500 & 0.067 & 0.049 & 0.079 & 0.049 & 0.186 & 0.057 & 0.108 & 0.049\\
 & 1,000 & 0.060 & 0.061 & 0.162 & 0.050 & 0.419 & 0.069 & 0.162 & 0.076\\
 & 2,000 & 0.120 & 0.044 & 0.337 & 0.051 & 0.753 & 0.116 & 0.324 & 0.078\\
\addlinespace[2pt]
1 & 500 & 0.099 & 0.048 & 0.164 & 0.053 & 0.388 & 0.075 & 0.189 & 0.051\\
 & 1,000 & 0.107 & 0.062 & 0.343 & 0.064 & 0.769 & 0.108 & 0.330 & 0.069\\
 & 2,000 & 0.280 & 0.052 & 0.699 & 0.049 & 0.986 & 0.221 & 0.703 & 0.087\\
\addlinespace[2pt]
2 & 500 & 0.146 & 0.044 & 0.244 & 0.055 & 0.587 & 0.100 & 0.257 & 0.050\\
 & 1,000 & 0.225 & 0.074 & 0.507 & 0.065 & 0.920 & 0.143 & 0.502 & 0.070\\
 & 2,000 & 0.527 & 0.073 & 0.879 & 0.055 & 1.000 & 0.288 & 0.888 & 0.100\\
\midrule
\multicolumn{10}{c}{\textit{Panel B}: $c_1=3$ }\\
\midrule
0.5 & 500 & 0.087 & 0.309 & 0.082 & 0.285 & 0.186 & 0.275 & 0.089 & 0.266\\
 & 1,000 & 0.092 & 0.511 & 0.080 & 0.488 & 0.316 & 0.443 & 0.118 & 0.448\\
 & 2,000 & 0.178 & 0.822 & 0.184 & 0.808 & 0.692 & 0.774 & 0.211 & 0.764\\
\addlinespace[2pt]
1 & 500 & 0.105 & 0.218 & 0.101 & 0.211 & 0.676 & 0.185 & 0.143 & 0.176\\
 & 1,000 & 0.164 & 0.408 & 0.134 & 0.369 & 0.961 & 0.331 & 0.262 & 0.307\\
 & 2,000 & 0.304 & 0.709 & 0.312 & 0.697 & 1.000 & 0.613 & 0.554 & 0.569\\
\addlinespace[2pt]
2 & 500 & 0.144 & 0.155 & 0.182 & 0.114 & 0.918 & 0.090 & 0.237 & 0.100\\
 & 1,000 & 0.257 & 0.252 & 0.416 & 0.190 & 1.000 & 0.122 & 0.480 & 0.152\\
 & 2,000 & 0.510 & 0.489 & 0.779 & 0.391 & 1.000 & 0.235 & 0.881 & 0.291\\
\bottomrule
\end{tabular*}
\begin{tablenotes}[flushleft]\scriptsize
\item \emph{Note}: Rejection frequencies at the $5\%$ level over $1{,}000$
replications with $B = 500$ multiplier-bootstrap draws; $K = 5$.  Tables~\ref{main:tab:power_N_c0}--\ref{main:tab:power_di_N_c3}
report all scales and covariate dispersions.
\end{tablenotes}
\end{threeparttable}
\end{table}

Figure~\ref{fig:dimension} shows rejection rates as the covariate dimension
increases under Design~L. We use one illustrative alternative,
$S_1$ with $(c_1,c_2)=(0,1)$, $\sigma_X=1$, and $n=2{,}000$. The qualitative
pattern is similar under the other alternatives. The first five covariates
carry the signal, while additional covariates are nuisance variables. The
median scale rises from about $2.5$ at $K=5$ to $18$ at $K=30$, as its power declines from $0.391$ to $0.099$. By contrast, the power for the fixed ICM scale $\sigma=2$ falls from $0.425$
at $K=5$ to $0.060$ at $K=15$, and for $K\geq 20$ the rejection rate is below
the nominal $5\%$ level.

\begin{figure}[htbp]
\centering
\includegraphics[width=0.9\linewidth]{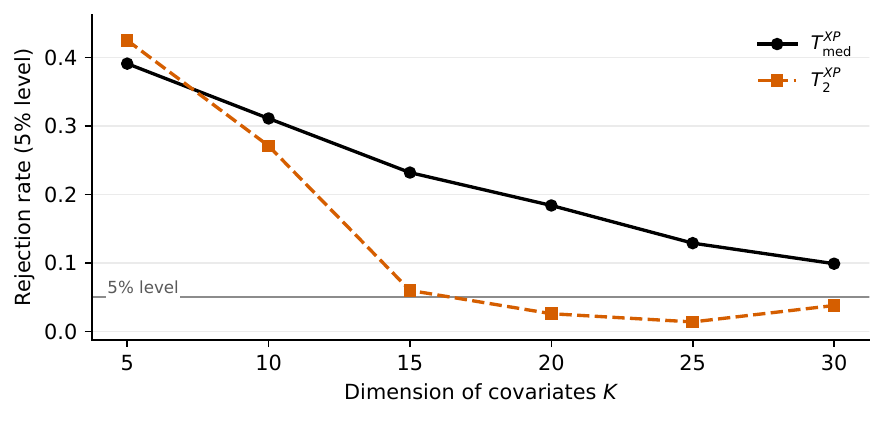}
\caption{Power as the covariate dimension increases.}
\label{fig:dimension}

\vspace{3pt}
\begin{minipage}{\linewidth}\scriptsize \emph{Note}: 
     Empirical rejection rates
at the $5\%$ level under Design~L, direction $S_1$, $(c_1,c_2) = (0,1)$,
$\sigma_X = 1$, $n = 2{,}000$ ($1{,}000$ replications, $B = 500$). The
realized median scale is $\mathrm{med}^2 = 2.5,\, 5.5,\, 8.6,\, 11.7,\,
14.9,\, 18.0$ for $K = 5,\ldots,30$, against the fixed $\sigma = 2$.
\end{minipage}
\end{figure}

\section{Empirical Application}\label{sec:application}

We revisit the Empowerment and Livelihood for Adolescents (ELA) programme of \citet{Bandiera2020WomensEmpowerment} (hereafter BBGGRS). ELA is run by BRAC, a Bangladeshi non-governmental organization, and provides adolescent girls with vocational and life-skills training in a space of their own. Of the $150$ study communities in Uganda, $100$ were randomly assigned a club and $50$ served as controls. Attendance was voluntary: by the midline, 17 percent of girls in treatment communities and 4 percent in control communities had participated, and by the
endline the corresponding figures were 22 and 8 percent. Because club placement was randomized but participation was chosen, girls may sort into treatment on unobserved gains.

Following BBGGRS, the excluded instruments are the distance to the nearest club, its square, the treatment dummy, and the two interactions of distance with treatment. The covariates $X$ collect the baseline value of the outcome
index, nine branch dummies, and age. The estimation samples contain $3{,}984$
and $2{,}942$ observations in $150$ and $148$ villages at midline and endline,
respectively. BBGGRS estimate MTE curves for economic empowerment, control over
the body, and aspirations at the two follow-ups under a joint normal selection
model, and the estimated curves vary in shape across the six outcome--wave
cells: one appears upward-sloping, four downward-sloping, and one nearly flat.
This variation motivates a formal test of essential heterogeneity. Figure~\ref{fig:ps-support} provides context for this qualification: estimated
propensity scores are concentrated at low values, with 96\% of midline
observations and 90\% of endline observations below $0.30$; the first-stage fit and the implied common-support limits are
reported in Table~\ref{tab:application-diag} of the Supplement~\ref{supp:application}. Thus, MTE behavior
beyond the region with substantial empirical support depends heavily on the
parametric specification. Where our test does not reject, the IV estimand may
offer a more defensible summary than the extrapolated MTE curve.

\begin{figure}[htbp]
\centering
\includegraphics[width=0.9\linewidth]{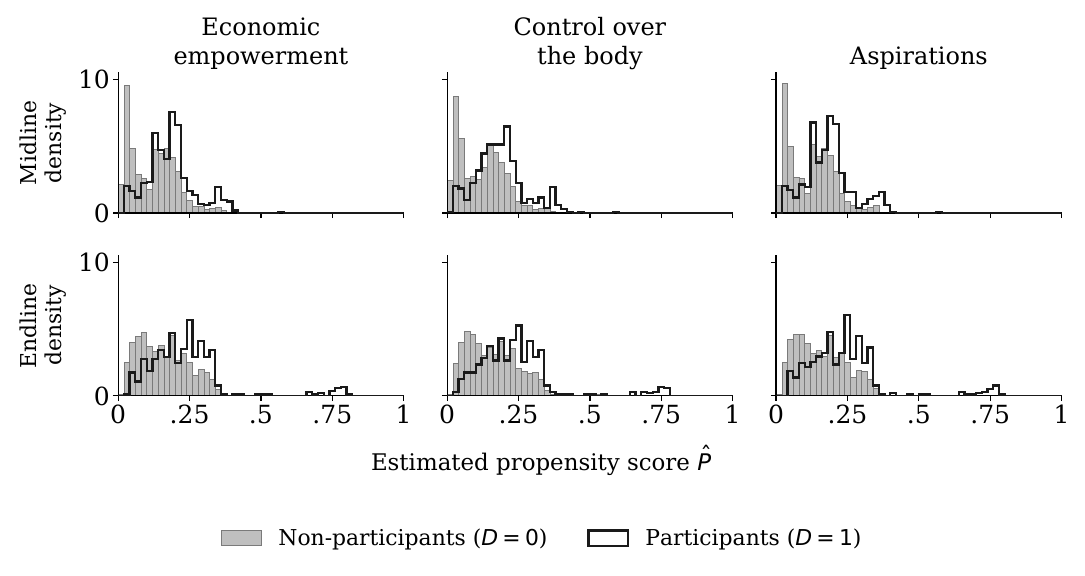}
\caption{Estimated propensity scores in the six ELA samples.}
\label{fig:ps-support}

\vspace{3pt}
\begin{minipage}{\linewidth}\scriptsize \emph{Note}: Estimated first-stage
propensity scores by participation status, on the estimation samples. Rows are
follow-up waves; columns are outcome indices.\end{minipage}
\end{figure}

Table~\ref{main:tab:application} reports bootstrap $p$-values for eight statistics
in each cell: the omnibus $\TXP{\sigma}$ and directional $\TP{\sigma}$ at
$\sigma \in \{\mathrm{med},1,2\}$, together with the polynomial Wald
benchmarks $\Qpoly{2}$ and $\Qpoly{5}$. Inference accounts for village-level
randomization through a cluster bootstrap with $B=999$ draws. The final rows
report the median kernel scales selected by $\TXP{\mathrm{med}}$ and $\TP{\mathrm{med}}$, along with the numbers of observations and villages.

\begin{table}[htbp]
\caption{Tests for essential heterogeneity in the ELA data.}
\label{main:tab:application}
\centering
\singlespacing
\setlength{\tabcolsep}{6pt}
\footnotesize
\begin{threeparttable}
\begin{tabular*}{\textwidth}{@{\extracolsep{\fill}}@{}l rr rr rr@{}}
\toprule
& \multicolumn{2}{c}{Economic} & \multicolumn{2}{c}{Control over}
& \multicolumn{2}{c}{Aspirations}\\
& \multicolumn{2}{c}{empowerment} & \multicolumn{2}{c}{the body} & \multicolumn{2}{c}{}\\
\cmidrule(lr){2-3}\cmidrule(lr){4-5}\cmidrule(lr){6-7}
& Mid & End & Mid & End & Mid & End\\
\midrule
$\TXP{\mathrm{med}}$ & 0.383 & 0.181 & 0.225 & 0.754 & \textbf{0.014} & 0.499\\
$\TXP{1}$ & 0.570 & 0.355 & 0.160 & 0.752 & \textbf{0.023} & 0.784\\
$\TXP{2}$ & 0.376 & 0.203 & 0.226 & 0.764 & \textbf{0.014} & 0.564\\
$\TP{\mathrm{med}}$ & 0.770 & 0.264 & 0.291 & 0.629 & 0.283 & 0.593\\
$\TP{1}$ & 0.053 & 0.246 & 0.415 & 0.962 & 0.791 & 0.523\\
$\TP{2}$ & 0.053 & 0.338 & 0.453 & 0.973 & 0.783 & 0.606\\
$\Qpoly{2}$ & 0.390 & 0.588 & 0.066 & 0.101 & 0.063 & 0.221\\
$\Qpoly{5}$ & 0.204 & 0.351 & 0.322 & 0.612 & 0.373 & 0.623\\
\midrule
\multicolumn{7}{@{}l}{\textit{Robustness: $\mu_d(X)$ of polynomial degree $J$}}\\
$\TXP{\mathrm{med}}$, $J=2$ & 0.556 & 0.154 & 0.389 & 0.841 & \textbf{0.024} & 0.544\\
$\TXP{\mathrm{med}}$, $J=3$ & 0.514 & 0.316 & 0.245 & 0.746 & \textbf{0.021} & 0.618\\
$\TP{\mathrm{med}}$, $J=2$ & 0.674 & 0.215 & 0.680 & 0.553 & 0.246 & 0.578\\
$\TP{\mathrm{med}}$, $J=3$ & 0.659 & 0.149 & 0.616 & 0.512 & 0.244 & 0.598\\
\midrule
Median $\hat\sigma$, $W=(X^\top,P)^\top$ & 1.93 & 2.38 & 1.94 & 2.40 & 1.93 & 2.42\\
Median $\hat\sigma$, $W=P$ & 0.0069 & 0.0081 & 0.0069 & 0.0081 & 0.0069 & 0.0083\\
Observations & 3,984 & 2,943 & 3,985 & 2,942 & 3,985 & 2,942\\
Clusters (villages) & 150 & 148 & 150 & 148 & 150 & 148\\
\bottomrule
\end{tabular*}

\begin{tablenotes}[flushleft]\scriptsize
\item \emph{Note}: Bootstrap $p$-values from $B=999$ two-point multipliers
\citep{mammen1993bootstrap}, one per village. Bold marks rejection at the
$5\%$ level. The robustness block uses degree-$J$ polynomials in the baseline
index and age; $J=1$ is the baseline specification.
\end{tablenotes}
\end{threeparttable}
\end{table}

The clearest result is for aspirations at midline. There, the omnibus statistic
rejects at all three scales ($p=0.014$, $0.023$, $0.014$), whereas the
directional statistic does not reject. These are the
only rejections under the baseline specification. A rejection that conditions on the covariates could also reflect
misspecification of the outcome means $\mu_d(X;\beta_d)$
(Remark~\ref{rem:joint}), but the rejection survives re-estimating them as quadratic and cubic
polynomials: $\TXP{\mathrm{med}}$
returns $p=0.024$ and $0.021$, against $0.014$ under the baseline
specification, while $\TP{\mathrm{med}}$ stays at $p=0.246$ and $0.244$, so the
departure remains visible only to the statistic that conditions on the
covariates. Across the six
cells, neither median-scale statistic changes its decision at the $5\%$
level under either polynomial specification; Table~\ref{stab:robust-J} of
the Supplement reports all cells and statistics for $J = 2$ and $J = 3$.

For the omnibus statistic, whose conditioning set includes the covariates, the median rule selects $\hat\sigma$ between $1.93$ and $2.42$, and the three omnibus columns never disagree at the $5\%$ level. For the directional statistic, which conditions on $P$ alone, the estimated scores lie within a short interval, and the median rule selects $\hat\sigma$ only in the range $0.007$--$0.008$, so the fixed scales $\sigma=1$ and $\sigma=2$ are two orders of magnitude too large. This is the scale mismatch of Section~\ref{sec:sim-results} in an extreme form, which is why we read only $\TP{\mathrm{med}}$ among the directional statistics.

For five of the six outcome--wave cells, neither the omnibus nor the directional test rejects the null of no essential heterogeneity at the $5\%$ level. The exception is aspirations at midline, where the omnibus test rejects at all three scales. Thus, given the limited propensity-score support, we regard the IV estimand as a more credible primary summary than the ATE, ATT, and ATU derived from the extrapolated MTE curve when the test does not reject.

\section{Conclusion}\label{sec:conclusion}
This paper develops a test for essential heterogeneity in the marginal treatment effects model that does not rely on the additive separability restriction maintained by existing procedures. We show that, without additive separability, the null of no essential heterogeneity implies a single conditional moment restriction on the covariates and the propensity score, and that this restriction is sharp in the sense that it is the strongest implication that the data can test. We propose an asymptotically valid multiplier bootstrap procedure based on a projected Gaussian process statistic, whose Neyman-orthogonal moment accommodates the estimated propensity score without requiring nuisance parameter re-estimation in each bootstrap replication.

Monte Carlo simulations demonstrate that the procedure delivers accurate size control and good power against a broad class of alternatives. The omnibus statistic detects covariate-loaded departures to which the directional test is blind by construction, and the data-driven median heuristic keeps the test informative as the conditioning set grows in
dimension, where fixed-scale implementations collapse. In our empirical application to the ELA programme of \citet{Bandiera2020WomensEmpowerment}, the omnibus test rejects the null of no essential heterogeneity for aspirations at the midline follow-up, while the directional test does not. To our knowledge, this is the first essential-heterogeneity test that dispenses with additive separability and conditions on the covariates together with the propensity score. 

Several directions for future work emerge naturally. First, our procedure maintains parametric specifications for the conditional outcome means $\mu_d(X;\beta_d)$, so the tested null is joint with the outcome parametrization. A natural extension is to test the sharp affine condition
\eqref{H_0: affine} directly with nonparametrically estimated $a(\cdot)$ and $b(\cdot)$. Second, the choice of the kernel scale parameter merits further investigation: the median heuristic performs well across our designs, but data-adaptive selection that goes beyond the median is a promising route to more robust power.

\begingroup
\spacingset{1}
\bibliography{references}
\endgroup

\newpage
\setcounter{table}{0}
\renewcommand{\thetable}{A.\arabic{table}}

\newpage

\begin{table}[!tbp]
\caption{Power of $T^{X\!P}_\sigma$ across covariate scales, probit outcome, $c_1=0$ (covariate-dependent heterogeneity only).}
\label{main:tab:power_N_c0}
\centering
\singlespacing
\setlength{\tabcolsep}{3pt}
\footnotesize
\begin{threeparttable}
\begin{tabular*}{\textwidth}{@{\extracolsep{\fill}}ll*{9}{r}@{}}
\toprule
& & \multicolumn{3}{c}{$\sigma_X=0.5$} & \multicolumn{3}{c}{$\sigma_X=1$} & \multicolumn{3}{c}{$\sigma_X=2$}\\
\cmidrule(lr){3-5}\cmidrule(lr){6-8}\cmidrule(lr){9-11}
$c_2$ & $n$ & $\mathrm{med}$ & $1$ & $2$ & $\mathrm{med}$ & $1$ & $2$ & $\mathrm{med}$ & $1$ & $2$\\
\midrule
\multicolumn{11}{c}{\makebox[0pt][c]{\textit{Panel A}: $S_1\propto X$ }}\\
\midrule
0.5 & 500 & 0.070 & 0.065 & 0.062 & 0.067 & 0.067 & 0.067 & 0.067 & 0.060 & 0.070\\
 & 1,000 & 0.077 & 0.068 & 0.054 & 0.060 & 0.076 & 0.059 & 0.079 & 0.058 & 0.072\\
 & 2,000 & 0.131 & 0.120 & 0.094 & 0.120 & 0.113 & 0.131 & 0.106 & 0.093 & 0.100\\
\addlinespace[2pt]
1 & 500 & 0.087 & 0.080 & 0.068 & 0.099 & 0.104 & 0.098 & 0.085 & 0.073 & 0.085\\
 & 1,000 & 0.127 & 0.103 & 0.072 & 0.107 & 0.149 & 0.123 & 0.138 & 0.088 & 0.118\\
 & 2,000 & 0.261 & 0.216 & 0.141 & 0.280 & 0.354 & 0.311 & 0.276 & 0.136 & 0.225\\
\addlinespace[2pt]
2 & 500 & 0.113 & 0.098 & 0.080 & 0.146 & 0.168 & 0.151 & 0.141 & 0.089 & 0.133\\
 & 1,000 & 0.204 & 0.159 & 0.097 & 0.225 & 0.319 & 0.256 & 0.240 & 0.136 & 0.237\\
 & 2,000 & 0.457 & 0.328 & 0.177 & 0.527 & 0.686 & 0.594 & 0.579 & 0.315 & 0.578\\
\midrule
\multicolumn{11}{c}{\makebox[0pt][c]{\textit{Panel B}: $S_2\propto e^{X/2}-Ee^{X/2}$ }}\\
\addlinespace[1pt]
0.5 & 500 & 0.078 & 0.074 & 0.064 & 0.079 & 0.078 & 0.087 & 0.097 & 0.071 & 0.081\\
 & 1,000 & 0.117 & 0.097 & 0.090 & 0.162 & 0.149 & 0.160 & 0.140 & 0.076 & 0.096\\
 & 2,000 & 0.256 & 0.238 & 0.214 & 0.337 & 0.292 & 0.342 & 0.211 & 0.113 & 0.160\\
\addlinespace[2pt]
1 & 500 & 0.117 & 0.109 & 0.088 & 0.164 & 0.168 & 0.178 & 0.190 & 0.078 & 0.103\\
 & 1,000 & 0.209 & 0.186 & 0.141 & 0.343 & 0.315 & 0.356 & 0.331 & 0.113 & 0.174\\
 & 2,000 & 0.476 & 0.424 & 0.322 & 0.699 & 0.646 & 0.696 & 0.621 & 0.186 & 0.330\\
\addlinespace[2pt]
2 & 500 & 0.137 & 0.126 & 0.086 & 0.244 & 0.251 & 0.262 & 0.355 & 0.094 & 0.161\\
 & 1,000 & 0.302 & 0.237 & 0.162 & 0.507 & 0.535 & 0.537 & 0.654 & 0.167 & 0.314\\
 & 2,000 & 0.639 & 0.553 & 0.351 & 0.879 & 0.902 & 0.897 & 0.956 & 0.336 & 0.650\\
\midrule
\multicolumn{11}{c}{\makebox[0pt][c]{\textit{Panel C}: $S_3\propto X^2-\mathrm{Var}(X)$ }}\\
\addlinespace[1pt]
0.5 & 500 & 0.289 & 0.289 & 0.211 & 0.186 & 0.181 & 0.188 & 0.109 & 0.082 & 0.108\\
 & 1,000 & 0.706 & 0.687 & 0.558 & 0.419 & 0.380 & 0.422 & 0.158 & 0.102 & 0.133\\
 & 2,000 & 0.967 & 0.965 & 0.930 & 0.753 & 0.708 & 0.760 & 0.252 & 0.166 & 0.240\\
\addlinespace[2pt]
1 & 500 & 0.613 & 0.608 & 0.497 & 0.388 & 0.350 & 0.392 & 0.177 & 0.094 & 0.154\\
 & 1,000 & 0.950 & 0.946 & 0.880 & 0.769 & 0.694 & 0.764 & 0.336 & 0.155 & 0.231\\
 & 2,000 & 1.000 & 1.000 & 1.000 & 0.986 & 0.963 & 0.983 & 0.610 & 0.279 & 0.437\\
\addlinespace[2pt]
2 & 500 & 0.802 & 0.788 & 0.680 & 0.587 & 0.511 & 0.581 & 0.293 & 0.116 & 0.187\\
 & 1,000 & 0.991 & 0.990 & 0.979 & 0.920 & 0.874 & 0.915 & 0.546 & 0.190 & 0.336\\
 & 2,000 & 1.000 & 1.000 & 1.000 & 1.000 & 0.998 & 1.000 & 0.857 & 0.403 & 0.629\\
\midrule
\multicolumn{11}{c}{\makebox[0pt][c]{\textit{Panel D}: $S_4\propto\sin(2X/\sigma_X)$ }}\\
\addlinespace[1pt]
0.5 & 500 & 0.119 & 0.100 & 0.099 & 0.108 & 0.134 & 0.113 & 0.087 & 0.077 & 0.078\\
 & 1,000 & 0.140 & 0.115 & 0.075 & 0.162 & 0.215 & 0.184 & 0.105 & 0.098 & 0.117\\
 & 2,000 & 0.359 & 0.243 & 0.132 & 0.324 & 0.489 & 0.378 & 0.188 & 0.185 & 0.238\\
\addlinespace[2pt]
1 & 500 & 0.151 & 0.137 & 0.108 & 0.189 & 0.248 & 0.209 & 0.129 & 0.107 & 0.141\\
 & 1,000 & 0.321 & 0.227 & 0.121 & 0.330 & 0.497 & 0.390 & 0.198 & 0.154 & 0.244\\
 & 2,000 & 0.714 & 0.537 & 0.261 & 0.703 & 0.892 & 0.786 & 0.491 & 0.409 & 0.606\\
\addlinespace[2pt]
2 & 500 & 0.210 & 0.171 & 0.125 & 0.257 & 0.366 & 0.293 & 0.169 & 0.112 & 0.207\\
 & 1,000 & 0.461 & 0.337 & 0.173 & 0.502 & 0.708 & 0.564 & 0.338 & 0.243 & 0.423\\
 & 2,000 & 0.867 & 0.721 & 0.353 & 0.888 & 0.980 & 0.935 & 0.723 & 0.657 & 0.866\\
\bottomrule
\end{tabular*}
\begin{tablenotes}[flushleft]\scriptsize
\item \emph{Note}: Rejection frequencies at the $5\%$ level over $1{,}000$
replications with $B = 500$ multiplier-bootstrap draws; $K = 5$. The
heterogeneity term is $\eta=(c_1+c_2S_j(X))U$ with $\mathrm{Var}(S_j)=5$.
Columns $\mathrm{med}$, $1$, and $2$ denote $\sigma$; their realised median
scales are $0.73$, $2.56$, and $10.92$. 
\end{tablenotes}
\end{threeparttable}
\end{table}

\begin{table}[!tbp]
\caption{Power of $T^{X\!P}_\sigma$ across covariate scales, probit outcome, $c_1=3$ (constant plus covariate-dependent heterogeneity).}
\label{main:tab:power_N_c3}
\centering
\singlespacing
\setlength{\tabcolsep}{3pt}
\footnotesize
\begin{threeparttable}
\begin{tabular*}{\textwidth}{@{\extracolsep{\fill}}ll*{9}{r}@{}}
\toprule
& & \multicolumn{3}{c}{$\sigma_X=0.5$} & \multicolumn{3}{c}{$\sigma_X=1$} & \multicolumn{3}{c}{$\sigma_X=2$}\\
\cmidrule(lr){3-5}\cmidrule(lr){6-8}\cmidrule(lr){9-11}
$c_2$ & $n$ & $\mathrm{med}$ & $1$ & $2$ & $\mathrm{med}$ & $1$ & $2$ & $\mathrm{med}$ & $1$ & $2$\\
\midrule
\multicolumn{11}{c}{\makebox[0pt][c]{\textit{Panel A}: $S_1\propto X$ }}\\
\midrule
0.5 & 500 & 0.174 & 0.165 & 0.144 & 0.087 & 0.092 & 0.082 & 0.093 & 0.076 & 0.069\\
 & 1,000 & 0.314 & 0.275 & 0.223 & 0.092 & 0.155 & 0.097 & 0.120 & 0.097 & 0.080\\
 & 2,000 & 0.590 & 0.509 & 0.393 & 0.178 & 0.295 & 0.191 & 0.230 & 0.193 & 0.129\\
\addlinespace[2pt]
1 & 500 & 0.214 & 0.200 & 0.169 & 0.105 & 0.103 & 0.098 & 0.102 & 0.072 & 0.081\\
 & 1,000 & 0.382 & 0.357 & 0.298 & 0.164 & 0.177 & 0.165 & 0.146 & 0.098 & 0.087\\
 & 2,000 & 0.702 & 0.643 & 0.523 & 0.304 & 0.363 & 0.310 & 0.264 & 0.201 & 0.198\\
\addlinespace[2pt]
2 & 500 & 0.194 & 0.165 & 0.130 & 0.144 & 0.143 & 0.145 & 0.133 & 0.093 & 0.115\\
 & 1,000 & 0.359 & 0.323 & 0.233 & 0.257 & 0.270 & 0.272 & 0.208 & 0.129 & 0.167\\
 & 2,000 & 0.705 & 0.599 & 0.400 & 0.510 & 0.570 & 0.534 & 0.485 & 0.261 & 0.411\\
\midrule
\multicolumn{11}{c}{\makebox[0pt][c]{\textit{Panel B}: $S_2\propto e^{X/2}-Ee^{X/2}$ }}\\
\addlinespace[1pt]
0.5 & 500 & 0.160 & 0.149 & 0.116 & 0.082 & 0.114 & 0.086 & 0.077 & 0.096 & 0.080\\
 & 1,000 & 0.234 & 0.192 & 0.152 & 0.080 & 0.177 & 0.094 & 0.088 & 0.119 & 0.092\\
 & 2,000 & 0.506 & 0.403 & 0.288 & 0.184 & 0.350 & 0.202 & 0.143 & 0.257 & 0.183\\
\addlinespace[2pt]
1 & 500 & 0.158 & 0.137 & 0.126 & 0.101 & 0.131 & 0.102 & 0.104 & 0.113 & 0.100\\
 & 1,000 & 0.260 & 0.219 & 0.177 & 0.134 & 0.208 & 0.153 & 0.179 & 0.161 & 0.166\\
 & 2,000 & 0.555 & 0.489 & 0.366 & 0.312 & 0.459 & 0.346 & 0.419 & 0.356 & 0.384\\
\addlinespace[2pt]
2 & 500 & 0.148 & 0.129 & 0.105 & 0.182 & 0.237 & 0.210 & 0.292 & 0.163 & 0.231\\
 & 1,000 & 0.292 & 0.241 & 0.170 & 0.416 & 0.470 & 0.429 & 0.568 & 0.298 & 0.445\\
 & 2,000 & 0.639 & 0.519 & 0.312 & 0.779 & 0.850 & 0.807 & 0.945 & 0.643 & 0.849\\
\midrule
\multicolumn{11}{c}{\makebox[0pt][c]{\textit{Panel C}: $S_3\propto X^2-\mathrm{Var}(X)$ }}\\
\addlinespace[1pt]
0.5 & 500 & 0.163 & 0.140 & 0.107 & 0.186 & 0.223 & 0.197 & 0.106 & 0.119 & 0.106\\
 & 1,000 & 0.315 & 0.257 & 0.167 & 0.316 & 0.414 & 0.349 & 0.167 & 0.170 & 0.147\\
 & 2,000 & 0.711 & 0.617 & 0.433 & 0.692 & 0.795 & 0.714 & 0.375 & 0.366 & 0.371\\
\addlinespace[2pt]
1 & 500 & 0.669 & 0.620 & 0.499 & 0.676 & 0.690 & 0.692 & 0.448 & 0.227 & 0.297\\
 & 1,000 & 0.957 & 0.936 & 0.852 & 0.961 & 0.970 & 0.966 & 0.798 & 0.464 & 0.626\\
 & 2,000 & 1.000 & 1.000 & 0.994 & 1.000 & 1.000 & 1.000 & 0.988 & 0.798 & 0.946\\
\addlinespace[2pt]
2 & 500 & 0.952 & 0.937 & 0.854 & 0.918 & 0.907 & 0.926 & 0.731 & 0.291 & 0.450\\
 & 1,000 & 1.000 & 1.000 & 0.997 & 1.000 & 0.999 & 1.000 & 0.972 & 0.568 & 0.796\\
 & 2,000 & 1.000 & 1.000 & 1.000 & 1.000 & 1.000 & 1.000 & 1.000 & 0.930 & 0.997\\
\midrule
\multicolumn{11}{c}{\makebox[0pt][c]{\textit{Panel D}: $S_4\propto\sin(2X/\sigma_X)$ }}\\
\addlinespace[1pt]
0.5 & 500 & 0.195 & 0.176 & 0.141 & 0.089 & 0.101 & 0.089 & 0.098 & 0.079 & 0.065\\
 & 1,000 & 0.330 & 0.283 & 0.217 & 0.118 & 0.160 & 0.126 & 0.120 & 0.094 & 0.087\\
 & 2,000 & 0.648 & 0.556 & 0.442 & 0.211 & 0.344 & 0.235 & 0.235 & 0.190 & 0.152\\
\addlinespace[2pt]
1 & 500 & 0.242 & 0.218 & 0.179 & 0.143 & 0.185 & 0.148 & 0.112 & 0.096 & 0.098\\
 & 1,000 & 0.464 & 0.389 & 0.290 & 0.262 & 0.350 & 0.288 & 0.157 & 0.153 & 0.182\\
 & 2,000 & 0.840 & 0.769 & 0.580 & 0.554 & 0.741 & 0.614 & 0.360 & 0.337 & 0.413\\
\addlinespace[2pt]
2 & 500 & 0.258 & 0.228 & 0.175 & 0.237 & 0.313 & 0.264 & 0.166 & 0.120 & 0.187\\
 & 1,000 & 0.570 & 0.457 & 0.288 & 0.480 & 0.659 & 0.526 & 0.304 & 0.252 & 0.380\\
 & 2,000 & 0.941 & 0.839 & 0.613 & 0.881 & 0.975 & 0.919 & 0.682 & 0.618 & 0.808\\
\bottomrule
\end{tabular*}
\begin{tablenotes}[flushleft]\scriptsize
\item \emph{Note}: Design, notation, and scale values are as in
Table~\ref{main:tab:power_N_c0}. 
\end{tablenotes}
\end{threeparttable}
\end{table}

\begin{table}[!tbp]
\caption{Power of $T^{P}_\sigma$ across covariate scales, probit outcome, $c_1=0$ (covariate-dependent heterogeneity only).}
\label{main:tab:power_di_N_c0}
\centering
\singlespacing
\setlength{\tabcolsep}{3pt}
\footnotesize
\begin{threeparttable}
\begin{tabular*}{\textwidth}{@{\extracolsep{\fill}}ll*{9}{r}@{}}
\toprule
& & \multicolumn{3}{c}{$\sigma_X=0.5$} & \multicolumn{3}{c}{$\sigma_X=1$} & \multicolumn{3}{c}{$\sigma_X=2$}\\
\cmidrule(lr){3-5}\cmidrule(lr){6-8}\cmidrule(lr){9-11}
$c_2$ & $n$ & $\mathrm{med}$ & $1$ & $2$ & $\mathrm{med}$ & $1$ & $2$ & $\mathrm{med}$ & $1$ & $2$\\
\midrule
\multicolumn{11}{c}{\makebox[0pt][c]{\textit{Panel A}: $S_1\propto X$ }}\\
\midrule
0.5 & 500 & 0.050 & 0.051 & 0.058 & 0.049 & 0.060 & 0.054 & 0.048 & 0.044 & 0.042\\
 & 1,000 & 0.061 & 0.046 & 0.053 & 0.061 & 0.052 & 0.051 & 0.071 & 0.055 & 0.058\\
 & 2,000 & 0.058 & 0.053 & 0.055 & 0.044 & 0.050 & 0.054 & 0.068 & 0.063 & 0.060\\
\addlinespace[2pt]
1 & 500 & 0.051 & 0.057 & 0.057 & 0.048 & 0.068 & 0.073 & 0.052 & 0.054 & 0.052\\
 & 1,000 & 0.055 & 0.061 & 0.066 & 0.062 & 0.056 & 0.053 & 0.071 & 0.061 & 0.055\\
 & 2,000 & 0.050 & 0.055 & 0.080 & 0.052 & 0.053 & 0.059 & 0.068 & 0.050 & 0.048\\
\addlinespace[2pt]
2 & 500 & 0.055 & 0.071 & 0.068 & 0.044 & 0.067 & 0.057 & 0.060 & 0.048 & 0.048\\
 & 1,000 & 0.075 & 0.070 & 0.084 & 0.074 & 0.054 & 0.045 & 0.079 & 0.056 & 0.060\\
 & 2,000 & 0.052 & 0.095 & 0.096 & 0.073 & 0.057 & 0.053 & 0.091 & 0.054 & 0.047\\
\midrule
\multicolumn{11}{c}{\makebox[0pt][c]{\textit{Panel B}: $S_2\propto e^{X/2}-Ee^{X/2}$ }}\\
\addlinespace[1pt]
0.5 & 500 & 0.051 & 0.057 & 0.051 & 0.049 & 0.072 & 0.067 & 0.067 & 0.046 & 0.044\\
 & 1,000 & 0.062 & 0.047 & 0.057 & 0.050 & 0.060 & 0.061 & 0.059 & 0.064 & 0.062\\
 & 2,000 & 0.053 & 0.058 & 0.080 & 0.051 & 0.075 & 0.077 & 0.071 & 0.068 & 0.062\\
\addlinespace[2pt]
1 & 500 & 0.048 & 0.065 & 0.061 & 0.053 & 0.062 & 0.071 & 0.060 & 0.065 & 0.056\\
 & 1,000 & 0.056 & 0.068 & 0.085 & 0.064 & 0.078 & 0.083 & 0.086 & 0.089 & 0.079\\
 & 2,000 & 0.049 & 0.081 & 0.113 & 0.049 & 0.112 & 0.127 & 0.112 & 0.147 & 0.131\\
\addlinespace[2pt]
2 & 500 & 0.050 & 0.069 & 0.078 & 0.055 & 0.063 & 0.069 & 0.062 & 0.062 & 0.056\\
 & 1,000 & 0.072 & 0.082 & 0.109 & 0.065 & 0.079 & 0.081 & 0.099 & 0.096 & 0.087\\
 & 2,000 & 0.058 & 0.121 & 0.150 & 0.055 & 0.111 & 0.120 & 0.117 & 0.119 & 0.095\\
\midrule
\multicolumn{11}{c}{\makebox[0pt][c]{\textit{Panel C}: $S_3\propto X^2-\mathrm{Var}(X)$ }}\\
\addlinespace[1pt]
0.5 & 500 & 0.055 & 0.062 & 0.068 & 0.057 & 0.078 & 0.080 & 0.054 & 0.049 & 0.047\\
 & 1,000 & 0.079 & 0.120 & 0.147 & 0.069 & 0.094 & 0.104 & 0.076 & 0.061 & 0.066\\
 & 2,000 & 0.083 & 0.255 & 0.336 & 0.116 & 0.207 & 0.213 & 0.098 & 0.079 & 0.079\\
\addlinespace[2pt]
1 & 500 & 0.064 & 0.072 & 0.089 & 0.075 & 0.077 & 0.083 & 0.092 & 0.062 & 0.064\\
 & 1,000 & 0.110 & 0.146 & 0.183 & 0.108 & 0.122 & 0.136 & 0.141 & 0.104 & 0.100\\
 & 2,000 & 0.144 & 0.255 & 0.355 & 0.221 & 0.257 & 0.257 & 0.274 & 0.200 & 0.147\\
\addlinespace[2pt]
2 & 500 & 0.093 & 0.082 & 0.100 & 0.100 & 0.073 & 0.074 & 0.115 & 0.078 & 0.057\\
 & 1,000 & 0.137 & 0.128 & 0.167 & 0.143 & 0.099 & 0.087 & 0.227 & 0.103 & 0.077\\
 & 2,000 & 0.231 & 0.182 & 0.208 & 0.288 & 0.120 & 0.093 & 0.409 & 0.150 & 0.090\\
\midrule
\multicolumn{11}{c}{\makebox[0pt][c]{\textit{Panel D}: $S_4\propto\sin(2X/\sigma_X)$ }}\\
\addlinespace[1pt]
0.5 & 500 & 0.048 & 0.066 & 0.058 & 0.049 & 0.072 & 0.070 & 0.055 & 0.060 & 0.065\\
 & 1,000 & 0.057 & 0.055 & 0.060 & 0.076 & 0.060 & 0.065 & 0.080 & 0.067 & 0.069\\
 & 2,000 & 0.061 & 0.054 & 0.057 & 0.078 & 0.074 & 0.075 & 0.074 & 0.092 & 0.093\\
\addlinespace[2pt]
1 & 500 & 0.044 & 0.050 & 0.049 & 0.051 & 0.075 & 0.081 & 0.051 & 0.064 & 0.071\\
 & 1,000 & 0.062 & 0.061 & 0.062 & 0.069 & 0.081 & 0.083 & 0.082 & 0.078 & 0.078\\
 & 2,000 & 0.057 & 0.059 & 0.061 & 0.087 & 0.102 & 0.093 & 0.085 & 0.085 & 0.100\\
\addlinespace[2pt]
2 & 500 & 0.048 & 0.050 & 0.051 & 0.050 & 0.082 & 0.089 & 0.058 & 0.063 & 0.062\\
 & 1,000 & 0.067 & 0.068 & 0.068 & 0.070 & 0.089 & 0.104 & 0.071 & 0.065 & 0.065\\
 & 2,000 & 0.052 & 0.058 & 0.061 & 0.100 & 0.125 & 0.128 & 0.097 & 0.084 & 0.093\\
\bottomrule
\end{tabular*}
\begin{tablenotes}[flushleft]\scriptsize
\item \emph{Note}: Design, notation, and scale values are as in
Table~\ref{main:tab:power_N_c0}. 
\end{tablenotes}
\end{threeparttable}
\end{table}

\begin{table}[!tbp]
\caption{Power of $T^{P}_\sigma$ across covariate scales, probit outcome, $c_1=3$ (constant plus covariate-dependent heterogeneity).}
\label{main:tab:power_di_N_c3}
\centering
\singlespacing
\setlength{\tabcolsep}{3pt}
\footnotesize
\begin{threeparttable}
\begin{tabular*}{\textwidth}{@{\extracolsep{\fill}}ll*{9}{r}@{}}
\toprule
& & \multicolumn{3}{c}{$\sigma_X=0.5$} & \multicolumn{3}{c}{$\sigma_X=1$} & \multicolumn{3}{c}{$\sigma_X=2$}\\
\cmidrule(lr){3-5}\cmidrule(lr){6-8}\cmidrule(lr){9-11}
$c_2$ & $n$ & $\mathrm{med}$ & $1$ & $2$ & $\mathrm{med}$ & $1$ & $2$ & $\mathrm{med}$ & $1$ & $2$\\
\midrule
\multicolumn{11}{c}{\makebox[0pt][c]{\textit{Panel A}: $S_1\propto X$ }}\\
\midrule
0.5 & 500 & 0.380 & 0.377 & 0.223 & 0.309 & 0.211 & 0.113 & 0.213 & 0.157 & 0.130\\
 & 1,000 & 0.619 & 0.677 & 0.475 & 0.511 & 0.398 & 0.200 & 0.379 & 0.272 & 0.217\\
 & 2,000 & 0.904 & 0.931 & 0.812 & 0.822 & 0.739 & 0.350 & 0.707 & 0.536 & 0.428\\
\addlinespace[2pt]
1 & 500 & 0.293 & 0.265 & 0.166 & 0.218 & 0.173 & 0.094 & 0.194 & 0.087 & 0.075\\
 & 1,000 & 0.456 & 0.465 & 0.294 & 0.408 & 0.255 & 0.107 & 0.349 & 0.146 & 0.118\\
 & 2,000 & 0.786 & 0.774 & 0.603 & 0.709 & 0.558 & 0.153 & 0.654 & 0.269 & 0.180\\
\addlinespace[2pt]
2 & 500 & 0.167 & 0.125 & 0.086 & 0.155 & 0.094 & 0.074 & 0.140 & 0.069 & 0.061\\
 & 1,000 & 0.284 & 0.253 & 0.151 & 0.252 & 0.127 & 0.072 & 0.262 & 0.075 & 0.064\\
 & 2,000 & 0.528 & 0.461 & 0.308 & 0.489 & 0.236 & 0.111 & 0.497 & 0.092 & 0.060\\
\midrule
\multicolumn{11}{c}{\makebox[0pt][c]{\textit{Panel B}: $S_2\propto e^{X/2}-Ee^{X/2}$ }}\\
\addlinespace[1pt]
0.5 & 500 & 0.369 & 0.380 & 0.227 & 0.285 & 0.213 & 0.121 & 0.199 & 0.162 & 0.136\\
 & 1,000 & 0.606 & 0.679 & 0.477 & 0.488 & 0.429 & 0.217 & 0.352 & 0.307 & 0.269\\
 & 2,000 & 0.905 & 0.932 & 0.837 & 0.808 & 0.767 & 0.393 & 0.690 & 0.672 & 0.563\\
\addlinespace[2pt]
1 & 500 & 0.273 & 0.258 & 0.168 & 0.211 & 0.183 & 0.115 & 0.178 & 0.133 & 0.125\\
 & 1,000 & 0.454 & 0.472 & 0.332 & 0.369 & 0.321 & 0.162 & 0.317 & 0.294 & 0.252\\
 & 2,000 & 0.775 & 0.772 & 0.617 & 0.697 & 0.616 & 0.255 & 0.619 & 0.614 & 0.498\\
\addlinespace[2pt]
2 & 500 & 0.147 & 0.130 & 0.080 & 0.114 & 0.100 & 0.071 & 0.108 & 0.071 & 0.065\\
 & 1,000 & 0.260 & 0.223 & 0.149 & 0.190 & 0.135 & 0.076 & 0.176 & 0.120 & 0.107\\
 & 2,000 & 0.482 & 0.416 & 0.251 & 0.391 & 0.222 & 0.093 & 0.365 & 0.224 & 0.176\\
\midrule
\multicolumn{11}{c}{\makebox[0pt][c]{\textit{Panel C}: $S_3\propto X^2-\mathrm{Var}(X)$ }}\\
\addlinespace[1pt]
0.5 & 500 & 0.340 & 0.381 & 0.251 & 0.275 & 0.221 & 0.129 & 0.176 & 0.154 & 0.137\\
 & 1,000 & 0.580 & 0.668 & 0.499 & 0.443 & 0.423 & 0.236 & 0.307 & 0.314 & 0.274\\
 & 2,000 & 0.882 & 0.939 & 0.846 & 0.774 & 0.746 & 0.421 & 0.608 & 0.658 & 0.575\\
\addlinespace[2pt]
1 & 500 & 0.241 & 0.353 & 0.290 & 0.185 & 0.228 & 0.168 & 0.135 & 0.151 & 0.139\\
 & 1,000 & 0.412 & 0.592 & 0.524 & 0.331 & 0.396 & 0.264 & 0.195 & 0.360 & 0.337\\
 & 2,000 & 0.716 & 0.852 & 0.787 & 0.613 & 0.663 & 0.422 & 0.476 & 0.759 & 0.682\\
\addlinespace[2pt]
2 & 500 & 0.098 & 0.203 & 0.235 & 0.090 & 0.127 & 0.124 & 0.079 & 0.078 & 0.086\\
 & 1,000 & 0.149 & 0.299 & 0.331 & 0.122 & 0.166 & 0.160 & 0.084 & 0.156 & 0.158\\
 & 2,000 & 0.243 & 0.397 & 0.419 & 0.235 & 0.273 & 0.207 & 0.143 & 0.349 & 0.346\\
\midrule
\multicolumn{11}{c}{\makebox[0pt][c]{\textit{Panel D}: $S_4\propto\sin(2X/\sigma_X)$ }}\\
\addlinespace[1pt]
0.5 & 500 & 0.367 & 0.358 & 0.205 & 0.266 & 0.185 & 0.097 & 0.164 & 0.123 & 0.097\\
 & 1,000 & 0.601 & 0.650 & 0.450 & 0.448 & 0.361 & 0.175 & 0.302 & 0.185 & 0.150\\
 & 2,000 & 0.894 & 0.930 & 0.800 & 0.764 & 0.680 & 0.319 & 0.583 & 0.364 & 0.257\\
\addlinespace[2pt]
1 & 500 & 0.269 & 0.284 & 0.181 & 0.176 & 0.165 & 0.099 & 0.127 & 0.076 & 0.066\\
 & 1,000 & 0.451 & 0.516 & 0.373 & 0.307 & 0.274 & 0.162 & 0.190 & 0.099 & 0.080\\
 & 2,000 & 0.768 & 0.825 & 0.709 & 0.569 & 0.572 & 0.275 & 0.378 & 0.140 & 0.091\\
\addlinespace[2pt]
2 & 500 & 0.137 & 0.161 & 0.115 & 0.100 & 0.124 & 0.096 & 0.086 & 0.062 & 0.055\\
 & 1,000 & 0.235 & 0.266 & 0.212 & 0.152 & 0.199 & 0.148 & 0.121 & 0.069 & 0.065\\
 & 2,000 & 0.418 & 0.504 & 0.443 & 0.291 & 0.365 & 0.238 & 0.241 & 0.081 & 0.067\\
\bottomrule
\end{tabular*}
\begin{tablenotes}[flushleft]\scriptsize
\item \emph{Note}: Design, notation, and scale values are as in
Table~\ref{main:tab:power_di_N_c0}. 
\end{tablenotes}
\end{threeparttable}
\end{table}

\clearpage
\setcounter{section}{0}
\counterwithout{table}{section}
\counterwithout{figure}{section}
\counterwithout{equation}{section}
\renewcommand{\thesection}{S\arabic{section}}
\renewcommand{\thetable}{S.\arabic{table}}
\renewcommand{\thefigure}{S.\arabic{figure}}
\renewcommand{\theequation}{S.\arabic{equation}}
\setcounter{table}{0}
\setcounter{figure}{0}
\setcounter{equation}{0}
\section{Proofs}\label{app:proofs}

Throughout this appendix, $\lVert \cdot \rVert$ denotes the $\mathcal{H}_k$-norm for $\mathcal{H}_k$-valued arguments, the Euclidean norm for vectors, and the spectral norm for matrices, unless stated otherwise, and $c$ denotes a generic finite constant that may change from line to line. We maintain $\mathbb{E}[\varepsilon(\theta_0)^2] < \infty$ and the second-moment condition $\mathbb{E}[\tilde h_p(\mathsf{U}, \mathsf{U}')^2] < \infty$ required by the V-statistic limit theorems. Primes on scalar random variables mark evaluation at the independent copy: $\varepsilon(\theta_0)' = \varepsilon(\theta_0)(\mathsf{U}')$ and $R(W)' = R(W')$ for $\mathsf{U}' = (Y', Z')$ an independent copy of $\mathsf{U}$; they never denote transposition.

\subsection{Proof for Section~\ref{sec:test-framework}}

\begin{lemma}\label{lem:zeroset}
Let $F:[0,1] \to \mathbb{R}$ be absolutely continuous and let $A \subseteq [0,1]$
be Lebesgue measurable with $F$ constant on $A$. Then $F' = 0$ almost everywhere
on $A$.
\end{lemma}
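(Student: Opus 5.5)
The plan is to reduce to a density-point argument. Since $F$ is absolutely continuous, it is differentiable at Lebesgue-almost every $t\in[0,1]$. By the Lebesgue density theorem, almost every point of $A$ is a density point of $A$. It therefore suffices to show that $F'(t)=0$ at every $t\in A$ that is both a point of differentiability of $F$ and a density point of $A$. The set of points failing either condition is Lebesgue-null, which gives the conclusion.

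Fix such a $t$, and let $c$ denote the constant value of $F$ on $A$, so $F(t)=c$. The key step is to produce a sequence $s_m\in A\setminus\{t\}$ with $s_m\to t$. Because $t$ is a density point, $\mathrm{Leb}(A\cap(t-h,t+h))/(2h)\to 1$ as $h\downarrow 0$. In particular, $A\cap(t-h,t+h)$ has positive measure for every small $h$, so it contains points other than $t$. Choosing $h=1/m$ yields the sequence. Then
\[
F'(t)=\lim_{m\to\infty}\frac{F(s_m)-F(t)}{s_m-t}=\lim_{m\to\infty}\frac{c-c}{s_m-t}=0,
\]
since the derivative exists at $t$ and may be computed along any sequence converging to $t$.

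If $t$ is an endpoint of $[0,1]$, we simply use one-sided neighbourhoods; a single point is null anyway, so endpoints can be discarded outright. There is no real obstacle here. The only point needing care is to note that differentiability holds a.e. because absolutely continuous functions are of bounded variation (Lebesgue's theorem), and that the density theorem applies to arbitrary measurable $A$, not only to closed or open sets. If one wanted to avoid the density theorem, an alternative would be to use the fact that the set of isolated points of $A$ is countable, hence null. Every non-isolated point of $A$ is a limit of other points of $A$, and the same difference-quotient argument then applies. I would present this simpler version as the main proof.
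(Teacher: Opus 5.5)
Your proof is correct. Your first two paragraphs (the density-point version) are essentially the paper's proof: discard the non-differentiability set, the non-density points of $A$, and the endpoints, then pick $p_n\in A\cap(p-1/n,p+1/n)\setminus\{p\}$ and pass to the limit in a zero difference quotient.

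The isolated-point version, which you propose as the main proof, is a genuinely more elementary route. The only thing the paper takes from the density property is a point of $A$ other than $p$ in every neighbourhood of $p$, which says exactly that $p$ is not isolated in $A$. The set of isolated points of any $A\subseteq\mathbb{R}$ is countable. You should state why, since this fact now carries the whole argument: each isolated point $t$ has an interval with rational endpoints that meets $A$ only in $t$, and distinct isolated points get distinct intervals.

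Your route therefore drops the Lebesgue density theorem and never uses measurability of $A$. It uses absolute continuity only through a.e.\ differentiability, so it holds for any a.e.\ differentiable $F$, for example one of bounded variation. The paper's route invokes a heavier theorem and gains nothing extra here. In its applications (Corollaries~\ref{cor:mte-on-support} and~\ref{cor:separable}), $A$ is the zero set of a continuous function and hence (relatively) closed and measurable anyway.
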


\begin{proof}

Since $F$ is absolutely continuous, it is differentiable at almost every point
of $(0,1)$; that is, the set
\[
N_1 = \left\{ x \in (0,1) : \text{$F$ is not differentiable at } x \right\}
\]
has Lebesgue measure zero. Separately, the Lebesgue density theorem applies to
the measurable set $A$ \citep{stein2009real}: almost every point of
$A$ is a density point of $A$, so the set
\[
N_2 = \left\{ x \in A : \text{$x$ is not a density point of } A \right\}
\]
also has measure zero. Hence, outside the zero measure set $N_1 \cup N_2 \cup \{0,1\}$,
every point of $A$ is both a differentiability point of $F$ and a density point
of $A$.

Let $p \in A \cap (0,1) \setminus (N_1 \cup N_2 \cup \{0,1\})$. It suffices to prove that $F'(p)=0$. Since $p$ is a density point of $A$, for every $n\in\mathbb{N}$ the set
\[
A \cap \left(p-\frac1n,p+\frac1n\right)
\]
has positive measure. In particular, it contains a point $p_n\neq p$. Choosing
such a point for each $n$, we have $p_n\in A$ and $p_n\to p$.

Because $F$ is constant on $A$,
\[
\frac{F(p_n)-F(p)}{p_n-p}=0
\]
for every $n$. Since $F$ is differentiable at $p$, these difference quotients
converge to $F'(p)$. Hence $F'(p)=0$.

Since $A \cap (N_1 \cup N_2 \cup \{0,1\})$ has Lebesgue measure zero, $F'=0$
almost everywhere on $A$.
\end{proof}

\begin{proof}[Proof of Proposition \ref{prop:transformed-null}]
For $d \in \{0,1\}$ let $m_d(x,u) = \mathbb{E}[Y(d) \mid X = x, U_D = u]$, so that
$\mathrm{MTE}(x,u) = m_1(x,u)-m_0(x,u)$.

\emph{Part (i): Necessity.}  By the integrability condition in
the proposition, the local instrumental variables representation holds under Assumptions~\ref{ass:exclusion}--\ref{ass:relevance},
\begin{equation}\label{eq:liv}
\mathbb{E}[Y \mid X = x, P = p]
= \mu_0(x;\beta_0) + \int_0^p \mathrm{MTE}(x,u)\,du ,
\end{equation}
for $\mathbb{P}_{P \mid X = x}$-almost every $p$
\citep{Heckman2005StructuralEvaluation}. Indeed, $YD = Y(1)\mathbf 1\{U_D \le P(Z)\}$
and $(Y(1), U_D) \perp P(Z) \mid X$ by
Assumptions~\ref{ass:exclusion} and~\ref{ass:random-assignment}, so the tower
property gives
\[
\mathbb{E}[Y(1)D \mid X = x, P = p]
= \mathbb{E}\big[Y(1)\mathbf 1\{U_D \le p\} \mid X = x\big]
= \int_0^p m_1(x,u)\,du ,
\]
the last equality by $U_D \mid X \sim \mathrm{Uniform}[0,1]$ from
Assumption~\ref{ass:selection}, the integral being finite because
$\int_0^1 \lvert m_d(x,u)\rvert\,du \le \mathbb{E}[\lvert Y(d)\rvert \mid X = x] < \infty$.
Symmetrically $\mathbb{E}[Y(0)(1-D) \mid X = x, P = p] = \int_p^1 m_0(x,u)\,du$,
and $\int_0^1 m_0(x,u)\,du = \mathbb{E}[Y(0) \mid X = x] = \mu_0(x;\beta_0)$ by
the maintained outcome means; adding the two pieces gives \eqref{eq:liv}.

Under $H_0$, the MTE does not depend on $u$, so that
$\mathrm{MTE}(x,u) = \Delta(x) \equiv \mathbb{E}[Y(1) - Y(0) \mid X = x]$ for almost
every $u \in (0,1)$, and the integral in \eqref{eq:liv} equals $p\,\Delta(x)$. Substituting the specified outcome means
$\mathbb{E}[Y(d) \mid X = x] = \mu_d(x;\beta_d)$, $d \in \{0,1\}$, then gives
$\Delta(x) = \mu_1(x;\beta_1) - \mu_0(x;\beta_0)$ and hence
\[
\mathbb{E}[Y \mid X = x, P = p]
= \mu_0(x;\beta_0) + p\,\big[\mu_1(x;\beta_1) - \mu_0(x;\beta_0)\big]
\]
for almost every $x$ and $\mathbb{P}_{P \mid X = x}$-almost every $p$, which is
\eqref{H_0: Linear Model}.

\emph{Part (ii): Sharpness.}
Part (ii) is the converse of (i): it shows that \eqref{H_0: affine} is not merely
necessary but the sharp characterization the data can deliver. The proof proceeds in steps: we first collect
the objects and identities established in Part~(i) (Step~1); deduce that $G_x$ is
affine on the support (Step~2); construct the potential outcomes (Step~3); verify
that the constructed model satisfies $H_0$ (Step~4), reproduces the observed
distribution (Step~5), and is admissible (Step~6); and conclude.

\emph{Step 1: Notation and preliminaries.} 
Recall $m_d(x,u) = \mathbb{E}[Y(d) \mid X = x, U_D = u]$, $d \in \{0,1\}$, and the partial-mean identities
$\mathbb{E}[YD \mid X = x, P = p] = M_1^x(p)$ and
$\mathbb{E}[Y(1-D) \mid X = x, P = p] = M_0^x(p)$; then we write
\begin{equation}\label{eq:lativ}
    G_x(p) = \mu_0(x;\beta_0) + \int_0^p \mathrm{MTE}(x,u)\,du,
\qquad p \in [0,1].
\end{equation}
Adding the two partial-mean identities gives
$\mathbb{E}[Y \mid X = x, P = p] = G_x(p)$ for
$\mathbb{P}_{P \mid X = x}$-almost every $p \in \mathcal{S}_x$:

Since $m_0(x,\cdot), m_1(x,\cdot) \in L^1(0,1)$, the functions $M_1^x$, $M_0^x$
and $G_x$ are absolutely continuous on $[0,1]$, with
$G_x' = \mathrm{MTE}(x,\cdot)$ and $(M_1^x)' = m_1(x,\cdot)$,
$(M_0^x)' = -m_0(x,\cdot)$ almost everywhere; no smoothness beyond this is
assumed. Substituting the indicator-valued versions
$(y_1,u) \mapsto \mathbf 1\{y_1 \le y\}\mathbf 1\{u \le p\}$ and
$(y_0,u) \mapsto \mathbf 1\{y_0 \le y\}\mathbf 1\{u > p\}$ for
$(y_1,u) \mapsto y_1\mathbf 1\{u \le p\}$ and
$(y_0,u) \mapsto y_0\mathbf 1\{u > p\}$ in the derivation of $M_1^x$ and $M_0^x$ gives, for
every $y$ and almost every $p \in \mathcal{S}_x$,
\begin{equation}\label{eq:partial-laws}
    \begin{split}
\Pr(Y \le y, D = 1 \mid X = x, P = p) &= \int_0^p F_1(y \mid x,u)\,du,\\
\Pr(Y \le y, D = 0 \mid X = x, P = p) &= \int_p^1 F_0(y \mid x,u)\,du,
\end{split}
\end{equation}
where $F_d(\cdot \mid x,u)$ denotes the conditional law of $Y(d)$ given
$(X, U_D) = (x,u)$, so that $m_d(x,u) = \int y\,F_d(dy\mid x,u)$; these partial
laws are what the construction below must reproduce.

\emph{Step 2: The affine conditional mean forces $G_x$ to be affine on the support.} 
Write $F_x(p) \equiv G_x(p) - a(x) - p\,b(x)$, an absolutely continuous function
on $[0,1]$. By \eqref{H_0: affine} and the local IV representation,
$F_x(P) = 0$ almost surely under $\mathbb{P}_{P \mid X = x}$.
 Continuity
carries this from $\mathbb{P}_{P \mid X = x}$-almost sure to Lebesgue-almost
everywhere on $\mathcal{S}_x$. Indeed, if $F_x(p_0) \ne 0$ for some
$p_0 \in \mathcal{S}_x$, continuity would give an open neighborhood
$\mathcal{O}$ of $p_0$ on which $F_x$ is nonzero. Because $p_0$ belongs
to the support, $\mathbb{P}_{P \mid X = x}(\mathcal{O}) > 0$, contradicting
$F_x(P) = 0$ almost surely. Thus $F_x = 0$ on $\mathcal{S}_x$.

By Assumption~\ref{ass:relevance}, $\mathcal{S}_x$ is a finite union of
intervals. Since it is the support of the conditional distribution, it is
closed; write $\mathcal{S}_x = \bigcup_{j=1}^J [s_j,t_j]$. By the argument
above, $F_x$ vanishes at every point of $\mathcal{S}_x$, including the
interval endpoints. Hence
\[
C_x := \overline{\mathcal{S}_x} = \mathcal{S}_x,
\qquad
G_x(p) = a(x) + p\,b(x) \quad \text{for every } p \in C_x.
\]

\emph{Step 3: Construction of the potential outcomes.}
It remains to construct potential outcomes and a selection unobservable that
satisfy Assumptions~\ref{ass:exclusion}--\ref{ass:relevance} and $H_0$ in \eqref{H_0: affine}
whose observed distribution of $(Y, D, Z)$ coincides with that of the data. We
keep $Z$, $U_D$, and the threshold-crossing rule \eqref{eq:selection}
unchanged, so that $\tilde U_D = U_D$ and $\tilde D = D$; only the conditional
laws $\tilde F_d(\cdot \mid x,u)$ of $\tilde Y(d)$ given $(X, U_D) = (x,u)$
remain to be chosen. The affinity of $G_x$ on $C_x$ established above is
what makes a suitable choice available.

We define $\tilde F_d$ separately on the support, the gaps between its
components, and the regions outside the support. First, on $C_x$, retain the
original conditional laws:
\[
\tilde F_d(\cdot \mid x,u)=F_d(\cdot \mid x,u), \qquad u\in C_x.
\]
Next, on each gap $(t_j,s_{j+1})$ between adjacent support intervals, replace
the conditional law by its average over that gap:
\[
\tilde F_d(\cdot \mid x,u)
= \frac{1}{s_{j+1}-t_j}
  \int_{t_j}^{s_{j+1}} F_d(\cdot \mid x,v)\,dv,
\qquad u\in(t_j,s_{j+1}).
\]
This preserves the integrated law over each gap. Below the support, if
$\underline p>0$, set $\tilde F_1$ equal to the average of the original
treated-outcome laws on $[0,\underline p]$, and define $\tilde F_0$ as the
law shifted downward by $b(x)$:
\[
\tilde F_1(\cdot \mid x,u)
= \frac{1}{\underline p}\int_0^{\underline p}
  F_1(\cdot \mid x,v)\,dv,
\qquad
\tilde F_0(y \mid x,u)=\tilde F_1(y+b(x)\mid x,u),
\qquad u<\underline p.
\]
Similarly, above the support, if $\bar p<1$, set $\tilde F_0$ equal to the
average of the original control-outcome laws on $[\bar p,1]$, and define
$\tilde F_1$ as the law shifted upward by $b(x)$:
\[
\tilde F_0(\cdot \mid x,u)
= \frac{1}{1-\bar p}\int_{\bar p}^{1}
  F_0(\cdot \mid x,v)\,dv,
\qquad
\tilde F_1(y \mid x,u)=\tilde F_0(y-b(x)\mid x,u),
\qquad u>\bar p.
\]

\emph{Step 4: Verify
that the constructed model satisfies $H_0$.}
We now verify that the constructed model satisfies $H_0$. Fix $x$, and define
the conditional mean associated with $\tilde F_d$ by
\[
\tilde m_d(x,u)
:= \int_{\mathbb R} y\,\tilde F_d(dy \mid x,u),
\qquad d\in\{0,1\}.
\]
For $u\in C_x$, the construction gives
$\tilde F_d(\cdot\mid x,u)=F_d(\cdot\mid x,u)$, and hence
$\tilde m_d(x,u)=m_d(x,u)$. Moreover, $G_x$ is absolutely continuous and
affine on each interval making up $C_x$, so
$\mathrm{MTE}(x,u)=G_x'(u)=b(x)$ for Lebesgue-almost every $u\in C_x$.
Therefore,
\[
\tilde m_1(x,u)-\tilde m_0(x,u)=b(x)
\qquad\text{for Lebesgue-almost every }u\in C_x.
\]

For $u\in (t_j, s_{j+1})$, inserting the definition of $\tilde F_d$ and
applying Fubini's theorem,
\[
\tilde m_d(x,u) = \int y\,\tilde F_d(dy \mid x,u)
= \frac{1}{s_{j+1}-t_j}\int_{t_j}^{s_{j+1}}\int y\,F_d(dy \mid x,v)\,dv
= \frac{1}{s_{j+1}-t_j}\int_{t_j}^{s_{j+1}} m_d(x,v)\,dv ,
\]
and therefore
\begin{align*}
\tilde m_1(x,u) - \tilde m_0(x,u)
&= \frac{1}{s_{j+1}-t_j}\int_{t_j}^{s_{j+1}} \mathrm{MTE}(x,v)\,dv
 = \frac{G_x(s_{j+1}) - G_x(t_j)}{s_{j+1}-t_j} \\
&= \frac{\{a(x) + s_{j+1} b(x)\} - \{a(x) + t_j b(x)\}}{s_{j+1}-t_j}
 = b(x),
\end{align*}
where the second equality is the definition of $G_x$ and the third uses the
affinity of $G_x$ at the endpoints $t_j, s_{j+1} \in C_x$.

Below the support, fix $u<\underline p$ and let $W_1 \sim \tilde F_1(\cdot \mid x,u) = \frac{1}{\underline p}\int_0^{\underline p}
  F_1(\cdot\mid x,v)\,dv$. By the construction of $\tilde F_0$, its distribution is that of
$W_1-b(x)$, since
\[
\Pr\{W_1-b(x)\le y\}
= \tilde F_1(y+b(x)\mid x,u)
= \tilde F_0(y\mid x,u).
\]
Consequently, $\tilde m_0(x,u)
= \mathbb E[W_1]-b(x)
= \tilde m_1(x,u)-b(x)$. 

Symmetrically, above the support, $u > \bar p$, $\tilde F_1(\cdot \mid x,u)$
is the law of $W_0 + b(x)$ with $W_0 \sim \tilde F_0(\cdot \mid x,u)= \frac{1}{1-\bar p}\int_{\bar p}^{1} F_0(\cdot\mid x,v)\,dv$, so
$\tilde m_1(x,u) = \tilde m_0(x,u) + b(x)$. The marginal treatment effect of
the constructed model is therefore $b(x)$ for almost every $u \in (0,1)$.

\emph{Step 5: The data are reproduced.} Under Assumptions~\ref{ass:exclusion}--\ref{ass:selection}, the conditional
law of $(Y,D)$ given $Z$ depends on $Z$ only through $(X,P(Z))$. Since the
law of $Z$ is unchanged, it suffices to reproduce the conditional law of
$(Y,D)$ given $(X,P)$.
By \eqref{eq:partial-laws}, this amounts to showing that replacing $F_d$
by $\tilde F_d$ leaves both expressions unchanged for every $y\in\mathbb R$
and almost every $p\in\mathcal S_x$.

Fix such $y$ and $p$, with $p\in[s_k,t_k]$.
Splitting the integral over $[0,p]$ into the lower tail, the support
intervals, the intervening gaps, and the final segment gives
\[
\begin{aligned}
\int_0^p \tilde F_1(y\mid x,u)\,du
={}& \int_0^{\underline p}\tilde F_1(y\mid x,u)\,du \\
&+ \sum_{j<k}\left\{
\int_{s_j}^{t_j}\tilde F_1(y\mid x,u)\,du
+\int_{t_j}^{s_{j+1}}\tilde F_1(y\mid x,u)\,du
\right\}
+\int_{s_k}^{p}\tilde F_1(y\mid x,u)\,du .
\end{aligned}
\]
Each part agrees with the corresponding integral under $F_1$. On the lower
tail, the construction makes $\tilde F_1(y\mid x,u)$ constant in $u$ and
equal to the average of $F_1(y\mid x,\cdot)$ over $[0,\underline p]$; hence
\[
\int_0^{\underline p}\tilde F_1(y\mid x,u)\,du
=\underline p\left\{\frac{1}{\underline p}
\int_0^{\underline p}F_1(y\mid x,v)\,dv\right\}
=\int_0^{\underline p}F_1(y\mid x,v)\,dv.
\]
On each support interval, $\tilde F_1=F_1$. On each gap
$(t_j,s_{j+1})$, $\tilde F_1(y\mid x,u)$ is the average of
$F_1(y\mid x,\cdot)$ over that gap, so its integral over the gap equals
the integral of $F_1$ there. Therefore,
\[
\int_0^p \tilde F_1(y\mid x,u)\,du
=\int_0^p F_1(y\mid x,u)\,du.
\]
The same argument, applied to the upper tail, support intervals, and gaps
in $[p,1]$, gives
\[
\int_p^1 \tilde F_0(y\mid x,u)\,du
=\int_p^1 F_0(y\mid x,u)\,du.
\]

\emph{Step 6: The construction is admissible.} Retain $(X, Z_0, U_D)$ and the
selection rule \eqref{eq:selection}, so that $\tilde U_D = U_D$,
$\tilde D = D$, and the law of $Z$ is unchanged. Let
$\xi_0, \xi_1 \sim \mathrm{Uniform}[0,1]$ be independent of each other and of
$(X, Z_0, U_D)$, and set
$\tilde Y(d) = \tilde F_d^{-1}(\xi_d \mid X, U_D)$ with $\tilde F_d^{-1}$ the
quantile function. Since $\tilde F_0$ and $\tilde F_1$ are measurable
conditional distributions, the conditional joint law of
$(\tilde Y(0),\tilde Y(1))$ given $X=x,U_D=u$ is the product measure
$\tilde F_0(\cdot\mid x,u)\otimes \tilde F_1(\cdot\mid x,u)$. 

Because the outcomes are real-valued, this conditional law can be realized
on an extension of the probability space, independently of $Z_0$
conditional on $(X,U_D)$. Its marginals are the prescribed
$\tilde F_d$, so the constructed potential outcomes are well defined and
satisfy Assumption~\ref{ass:exclusion}. $(\tilde Y(0), \tilde Y(1), U_D)$ is a
function of $(X, U_D, \xi_0, \xi_1)$ with
$(U_D, \xi_0, \xi_1) \perp Z_0 \mid X$, giving
Assumption~\ref{ass:random-assignment}; Assumptions~\ref{ass:selection}
and~\ref{ass:relevance} restrict only the law of $(U_D, P)$, which is
unchanged. The constructed
model therefore satisfies Assumptions~\ref{ass:exclusion}--\ref{ass:relevance},
and it satisfies $H_0$, as shown above.

\emph{Conclusion.} By the argument following \eqref{eq:partial-laws}, which
applies to any model satisfying these assumptions, the partial laws determine
the joint law of $(Y, D, Z)$. Since they coincide in the two models, as
verified above, $(\tilde Y, D, Z)$ and $(Y, D, Z)$ have the same distribution.
A model satisfying Assumptions~\ref{ass:exclusion}--\ref{ass:relevance} and
$H_0$ thus generates the distribution of the data, and no procedure based on
that distribution can separate $H_0$ from \eqref{H_0: affine}. \qedhere
\end{proof}

\begin{remark}[Bounded outcomes]\label{rem:bounded}

Below the lower support bound, every unit is treated at every observed
propensity score, so the data identify the average law of $Y(1)$ over
$[0,\underline p]$, but not the law of $Y(0)$ there. Above the upper support
bound, the analogous statement holds for $Y(0)$ over $[\bar p,1]$. Interior
gaps between support intervals are handled separately by the averaging
construction in Step~3. Under $H_0$, the unobserved potential-outcome laws
in these two tails can be constructed by shifting the corresponding observed
average laws by the constant MTE $b(x)$. Write
\[
A_1(x)=\frac{M_1^x(\underline p)}{\underline p},
\qquad
A_0(x)=\frac{M_0^x(\bar p)}{1-\bar p}.
\]
The corresponding unobserved means under this constant-MTE construction
are $A_1(x)-b(x)$ and $A_0(x)+b(x)$. If outcomes are bounded,
$Y(d)\in[y_L,y_U]$, these means must also lie in $[y_L,y_U]$; otherwise no
bounded-outcome model satisfying $H_0$ can reproduce the data. Thus, when
$\underline p>0$ and $\bar p<1$, boundedness imposes the additional
restrictions
\[
A_1(x)-y_U \le b(x) \le A_1(x)-y_L,
\qquad
y_L-A_0(x) \le b(x) \le y_U-A_0(x).
\]

\end{remark}

\begin{proof}[Proof of Corollary \ref{cor:mte-on-support}]
Suppose \eqref{H_0: affine} holds and define
$F_x(p):=G_x(p)-a(x)-p\,b(x)$. By Step~1, $F_x$ is absolutely continuous
on $[0,1]$ and
$F_x'=\mathrm{MTE}(x,\cdot)-b(x)$ almost everywhere. Moreover,
\eqref{eq:lativ} and \eqref{H_0: affine} imply
$F_x(P)=0$ almost surely conditional on $X=x$. Since $F_x$ is continuous,
the support argument in Step~2 gives $F_x(p)=0$ for every
$p\in\mathcal S_x$.

Applying Lemma~\ref{lem:zeroset} with
$A = \{p \in [0,1] : F_x(p) = 0\}$, which contains $\mathcal{S}_x$,
gives $F_x' = 0$ almost everywhere on $\mathcal{S}_x$, that is,
\[
\mathrm{MTE}(x,u) = b(x)
\qquad \text{for almost every } u \in \mathcal{S}_x,
\]
for almost every $x$. \qedhere
\end{proof}

\begin{proof}[Proof of Corollary \ref{cor:separable}]
Under Assumption~\ref{ass:additive} the deviation collapses:
$\mathrm{MTE}(x,u) - \Delta(x) = h(u)$, where
$\Delta(x) = \mu_1(x;\beta_1) - \mu_0(x;\beta_0)$ and
$h(u) = \mathbb{E}[V(1) - V(0) \mid U_D = u]$. The local IV representation of
Step 1 therefore gives the explicit regression
\[
\mathbb{E}[Y \mid X = x, P = p]
= \mu_0(x;\beta_0) + p\,\Delta(x) + \phi(p),
\qquad
\phi(p) := \int_0^p h(u)\,du .
\]
The right-hand side is an absolutely continuous function of $p$ defined on all
of $[0,1]$. With
$\varepsilon = Y - \mu_0(X;\beta_0) - P\,\Delta(X)$ this yields the collapse
$\mathbb{E}[\varepsilon \mid X, P] = \phi(P)$ and, by the law of iterated
expectations, $\mathbb{E}[\varepsilon \mid P] = \phi(P)$.

It remains to prove \eqref{eq:separable-full}. Suppose
$\mathbb{E}[\varepsilon \mid P] = 0$ almost surely, that is $\phi(P) = 0$
almost surely. By full support of $P$ on $(0,1)$, this is equivalent to
$\phi(p) = 0$ for Lebesgue-almost every $p \in (0,1)$. Since $\phi$ is
absolutely continuous, Lemma~\ref{lem:zeroset} applied with
$A = \{p \in (0,1) : \phi(p) = 0\}$ gives $h = \phi' = 0$ almost everywhere
on $(0,1)$. Thus
$\mathrm{MTE}(x,u) = \Delta(x)$ for every $x$ and almost every $u \in (0,1)$,
which is $H_0$ in \eqref{H_0: Null}. Conversely, if $H_0$ holds, then $\mathrm{MTE}(x,u) = \mathrm{CATE}(x)$ for
almost every $u$, and
$\mathrm{CATE}(x) = \mathbb{E}[Y(1)-Y(0) \mid X = x]
= \Delta(x)$ by the definition. Under \eqref{ass:additive} the
same object equals $\Delta(x)+h(u)$, so $h = 0$ almost everywhere. Hence
$\phi \equiv 0$ and $\mathbb{E}[\varepsilon \mid P] = 0$.
\qedhere
\end{proof}

\subsection{Proof for Section~\ref{sec:test}}

\begin{proof}[Proof of Theorem \ref{Theorem: Residual Embedding}]
Let $m(W) = \mathbb{E}[\varepsilon(\theta_0)\mid W] \in L^2(\mathbb{P}_W)$ and consider the linear map $M: L^2(\mathbb{P}_W) \to \mathcal{H}_k$, $M(f) = \int f(w)\, k(w,\cdot)\, d\mathbb{P}_W(w)$. By the Cauchy--Schwarz inequality and the reproducing property,
\[
\lVert M(f)\rVert \le \int \lvert f(w)\rvert\, k(w,w)^{1/2}\, d\mathbb{P}_W(w) \le \lVert f\rVert_{L^2}\left(\int k(w,w)\, d\mathbb{P}_W(w)\right)^{1/2} < \infty,
\]
where the last bound uses boundedness of $k$; hence $M$ is a well-defined bounded Bochner integral, and $\boldsymbol{m}_{\varepsilon(\theta_0)} = M(m)$ is well defined under $\mathbb{E}\lvert\varepsilon(\theta_0)\rvert < \infty$.

If $\mathbb{E}[\varepsilon(\theta_0)\mid W] = 0$, the law of iterated expectations gives $\boldsymbol{m}_{\varepsilon(\theta_0)} = \mathbb{E}[\mathbb{E}[\varepsilon(\theta_0)\mid W]\,k(W,\cdot)] = \boldsymbol{0}$. Conversely, suppose $\boldsymbol{m}_{\varepsilon(\theta_0)} = \boldsymbol{0}$. By the reproducing property, for all $f, g \in L^2(\mathbb{P}_W)$,
\[
\left\langle M(f), M(g)\right\rangle = \iint f(w)\, k(w,w')\, g(w')\, d\mathbb{P}_W(w)\, d\mathbb{P}_W(w').
\]
Taking $f = g = m$, the ISPD property \eqref{eq:ispd} yields $\lVert M(m)\rVert^2 = \iint m(w)k(w,w')m(w')\,d\mathbb{P}_W\,d\mathbb{P}_W > 0$ whenever $m \ne 0$ in $L^2(\mathbb{P}_W)$; hence $M$ is injective and $m = 0$ almost surely.

It remains to verify the identification of $T^2(\theta_0)$ with the squared norm of the embedding. Write $\varepsilon(\theta_0) = m(W) + \eta$ with $\mathbb{E}[\eta \mid W] = 0$, and let $(W',\varepsilon')$ be an independent copy. The cross terms vanish pointwise: $\mathbb{E}[\eta' k(W,W')\mid W] = \mathbb{E}[k(W,W')\mathbb{E}[\eta'\mid W']\mid W] = 0$, and similarly for $\mathbb{E}[\eta k(W,W')\mid W']$ and for the $\eta\eta'$ term. Therefore
\[
T^2(\theta_0) = \mathbb{E}\left[\varepsilon(\theta_0) k(W,W') \varepsilon(\theta_0)'\right] = \iint m(w)k(w,w')m(w')\, d\mathbb{P}_W(w)\, d\mathbb{P}_W(w') = \lVert \boldsymbol{m}_{\varepsilon(\theta_0)}\rVert^2,
\]
which vanishes if and only if $m = 0$ almost surely, i.e.\ exactly under the null. \qedhere
\end{proof}

\emph{Discussion.} Theorem~\ref{Theorem: Residual Embedding} underlines why the statistic \eqref{Test Statistics} is not a local-smoothing device: the conditional moment restriction is summarized by a single element of $\mathcal{H}_k$, no nonparametric estimator of $\mathbb{E}[\varepsilon \mid W]$ is involved, and the kernel scale is a fixed tuning constant rather than a bandwidth tending to zero. Consequently the statistic converges at the parametric rate regardless of the dimension of $W$, which is what makes the omnibus case $W = (X^\top, P)^\top$ workable; see also Remark~\ref{rem:kernel-choice}.

The sample projection in \eqref{Test Statistics local Projection} is built from the estimated scores $\hat g_i = \nabla_\theta \varepsilon_i(\theta)\big\vert_{\theta = \hat\theta_n}$, whose probability limits are the raw derivatives $\nabla_\theta \varepsilon(\theta_0)$ rather than conditional expectations thereof: the $\gamma$-block of $\nabla_\theta \varepsilon(\theta_0)$ is $-\partial_\gamma P(Z; \gamma_0)\, \Delta(X)$, which is not measurable with respect to $W$. The population analysis is therefore developed in terms of these raw scores. Let $\mathsf{U} = (Y, Z)$ denote a generic observation and $\mathsf{U}'$ an independent copy. For $\theta \in \Theta$, write $\tilde g_\theta(\mathsf{U}) = \nabla_\theta \varepsilon(\theta)$, a measurable function of $Z$ alone, and define
\begin{equation}\label{eq:raw-objects}
\tilde b_\theta(w') = \mathbb{E}\big[\tilde g_\theta(\mathsf{U})\, k(W, w')\big],
\qquad
\tilde\Gamma_\theta = \mathbb{E}\big[\tilde g_\theta(\mathsf{U})\tilde g_\theta(\mathsf{U})^\top\big],
\qquad
\tilde M_\theta = \mathbb{E}\big[\tilde g_\theta(\mathsf{U})\tilde g_\theta(\mathsf{U}')^\top k(W, W')\big],
\end{equation}
so that $\tilde b_\theta$ is $\mathcal{H}_k$-valued in each coordinate and $\tilde\Gamma_\theta$, $\tilde M_\theta$ are $(2K+L) \times (2K+L)$ matrices; integrability of $\tilde g_\theta$ follows from Assumption~\ref{Assumption: residual regularity}(i) below. At every $\theta$ with nonsingular $\tilde\Gamma_\theta$, define the projected feature
\begin{equation}\label{eq:Pi-section}
(\Pi_\theta k)(\mathsf{u}, \cdot) = k(W, \cdot) - \tilde g_\theta(\mathsf{u})^\top \tilde\Gamma_\theta^{-1}\, \tilde b_\theta(\cdot) \;\in\; \mathcal{H}_k,
\end{equation}
the induced bivariate kernel $\tilde k_{p,\theta}(\mathsf{u}, \mathsf{u}') = \big\langle (\Pi_\theta k)(\mathsf{u}, \cdot), (\Pi_\theta k)(\mathsf{u}', \cdot) \big\rangle$, and the kernel $\tilde h_{p,\theta}(\mathsf{u}, \mathsf{u}') = \varepsilon(\theta)\, \tilde k_{p,\theta}(\mathsf{u}, \mathsf{u}')\, \varepsilon(\theta)'$. Throughout, subscripts are dropped when these objects are evaluated at $\theta_0$.

\begin{lemma}\label{lem:centering}
Suppose Assumptions~\ref{ass:exclusion}--\ref{ass:relevance} hold, the first
stage is correctly specified, $P(Z;\gamma_0) = P(Z)$, and
\eqref{H_0: Linear Model} holds at $\theta_0$, and
Assumption~\ref{Assumption: residual regularity} holds. Then
$\mathbb{E}[\varepsilon(\theta_0) \mid Z] = 0$ almost surely, and consequently
\[
\mathbb{E}\big[\varepsilon(\theta_0)
\, (\Pi k)(\mathsf{U}, \cdot)\big] = \boldsymbol{0} \quad \text{in } \mathcal{H}_k,
\qquad
\mathbb{E}\big[\varepsilon(\theta_0)\, \tilde k_p(\mathsf{U}, \mathsf{U}') \mid \mathsf{U}'\big] = 0 \quad \text{a.s.}
\]
\end{lemma}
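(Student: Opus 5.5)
The plan has two parts: first upgrade the centering from $(X,P)$ to $Z$, then read off the two displayed identities. For the first part, with $P(Z;\gamma_0)=P(Z)$ the residual is
\[
\varepsilon(\theta_0)=Y-\mu_0(X;\beta_0)-P(Z)\,\Delta(X),
\]
and $\mu_0$, $P(Z)$ and $\Delta(X)$ are all $Z$-measurable. Hence $\mathbb{E}[\varepsilon(\theta_0)\mid Z]=\mathbb{E}[Y\mid Z]-\mu_0(X;\beta_0)-P(Z)\Delta(X)$, and it suffices to show
\[
\mathbb{E}[Y\mid Z]=\mathbb{E}[Y\mid X,P(Z)].
\]
I will reuse the local IV computation from the proof of Proposition~\ref{prop:transformed-null}(i), now conditioning on $Z=(X,Z_0)$ rather than on $(X,P)$. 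We have $YD=Y(1)\mathbf 1\{U_D\le P(Z)\}$. By Assumptions~\ref{ass:exclusion} and~\ref{ass:random-assignment}, $(Y(1),U_D)\perp Z_0\mid X$, and $P(Z)$ is $Z$-measurable. A freezing argument (conditional Fubini) then gives
\[
\mathbb{E}[YD\mid Z]=\mathbb{E}\big[Y(1)\mathbf 1\{U_D\le p\}\mid X\big]\big|_{p=P(Z)}=M_1^X(P(Z)).
\]
In the same way, $\mathbb{E}[Y(1-D)\mid Z]=M_0^X(P(Z))$. Integrability holds because $\mathbb{E}|Y(d)|<\infty$, which follows from the fourth-moment condition in Assumption~\ref{Assumption: residual regularity}(i) or from the integrability condition of the proposition. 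Adding the two pieces gives $\mathbb{E}[Y\mid Z]=G_X(P(Z))$. This is a function of $(X,P)$, so by the tower property it equals $\mathbb{E}[Y\mid X,P]$. By \eqref{H_0: Linear Model} at $\theta_0$ that quantity is $\mu_0+P\Delta$, and therefore $\mathbb{E}[\varepsilon(\theta_0)\mid Z]=0$ a.s.

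For the second part, write $(\Pi k)(\mathsf U,\cdot)=k(W,\cdot)-\tilde g(\mathsf U)^\top\tilde\Gamma^{-1}\tilde b(\cdot)$. Both $W$ and $\tilde g(\mathsf U)=\nabla_\theta\varepsilon(\theta_0)$ are measurable functions of $Z$ alone, as noted above the lemma. So $(\Pi k)(\mathsf U,\cdot)$ is a $Z$-measurable $\mathcal H_k$-valued random element. It is Bochner integrable against $\varepsilon$ because $k$ is bounded, $\tilde b$ has bounded norm, and $\varepsilon$ and $\tilde g$ have finite second moments; Cauchy--Schwarz then gives $\mathbb{E}\|\varepsilon(\Pi k)\|<\infty$. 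Conditioning on $Z$ inside the Bochner expectation, which is permitted because bounded linear functionals commute with it, gives
\[
\mathbb{E}\big[\varepsilon(\theta_0)\,(\Pi k)(\mathsf U,\cdot)\big]=\mathbb{E}\big[\mathbb{E}[\varepsilon(\theta_0)\mid Z]\,(\Pi k)(\mathsf U,\cdot)\big]=\boldsymbol 0 .
\]

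The last claim follows from the same argument applied to the inner product. Since $\tilde k_p(\mathsf U,\mathsf U')=\langle(\Pi k)(\mathsf U,\cdot),(\Pi k)(\mathsf U',\cdot)\rangle$ and $\mathsf U'$ is independent of $\mathsf U$, we get
\[
\mathbb{E}[\varepsilon\,\tilde k_p(\mathsf U,\mathsf U')\mid\mathsf U']=\big\langle\mathbb{E}[\varepsilon\,(\Pi k)(\mathsf U,\cdot)],(\Pi k)(\mathsf U',\cdot)\big\rangle=0
\]
a.s. Here the interchange of expectation and inner product is justified by the same integrability bound.

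The main obstacle is the first step: justifying $\mathbb{E}[Y\mid Z]=\mathbb{E}[Y\mid X,P]$ rigorously, i.e.\ the freezing of $P(Z)$ under conditional independence given $X$. I will do this through a regular conditional distribution of $(Y(1),U_D)$ given $X$, using that the joint law of $(Y(1),U_D,Z_0)$ given $X$ factorizes. The remaining steps are routine Bochner-integral bookkeeping.
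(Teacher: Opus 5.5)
Your proposal is correct and follows essentially the same route as the paper. You show that $\mathbb{E}[Y\mid Z]$ depends on $Z$ only through $(X,P)$, making explicit via the local IV freezing argument what the paper simply asserts; you then pass $\mathbb{E}[\cdot\mid Z]$ through the Bochner integral and use independence of $\mathsf{U}'$ for the second display. One small correction: the fourth-moment condition on $\varepsilon(\theta)$ does not by itself give $\mathbb{E}\lvert Y(d)\rvert<\infty$ for the counterfactual outcome, but you only need integrability of $YD=Y(1)\mathbf 1\{U_D\le P(Z)\}$ and $Y(1-D)$, which follows from $\mathbb{E}\lvert Y\rvert<\infty$, already implicit in the conditional mean in \eqref{H_0: Linear Model} being well defined.
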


\begin{proof}

By Assumptions~\ref{ass:exclusion}--\ref{ass:selection} and the
threshold-crossing model, the conditional distribution of $Y$ given $Z$
depends on $Z$ only through $(X,P)$. Hence
$\mathbb{E}[Y\mid Z]=\mathbb{E}[Y\mid X,P]$ almost surely. Under
\eqref{H_0: Linear Model} and the correctly specified first stage,
\[
\mathbb{E}[Y\mid Z]
=\mu_0(X;\beta_0)+P\,\Delta(X).
\]
Moreover, \eqref{eq:residual} gives
$Y-\varepsilon(\theta_0)=\mu_0(X;\beta_0)+P\,\Delta(X)$ almost surely.
Taking conditional expectations given $Z$ and comparing the two expressions
yields $\mathbb{E}[\varepsilon(\theta_0)\mid Z]=0$ almost surely.

For the two consequences, note that $(\Pi k)(\mathsf{U}, \cdot)$ is $\sigma(Z)$-measurable
and $\tilde k_p(\mathsf{U}, \mathsf{U}')$ is $\sigma(Z, Z')$-measurable, since $W$ and $\tilde g$
are functions of $Z$ and $\tilde b$ of $Z'$. The tower property for Bochner
integrals, which follows by approximating the integrand by simple functions, gives
$\mathbb{E}[\varepsilon(\theta_0) (\Pi k)(\mathsf{U}, \cdot)] =
\mathbb{E}\big[\mathbb{E}[\varepsilon(\theta_0) \mid Z] (\Pi k)(\mathsf{U}, \cdot)\big] =
\boldsymbol{0}$. For the second display, fix $\mathsf{u}'$ and compute the section:
$\int \varepsilon(\mathsf{u})\, \tilde k_p(\mathsf{u}, \mathsf{u}')\, d\mathbb{P}_{\mathsf{U}}(\mathsf{u}) =
\mathbb{E}\big[\tilde k_p(\mathsf{U}, \mathsf{u}')\, \mathbb{E}[\varepsilon(\theta_0) \mid Z]\big] = 0$,
using $\mathsf{U} \perp \mathsf{U}'$; since this holds for every $\mathsf{u}'$, the conditional expectation
vanishes almost surely.
\end{proof}

\begin{lemma}\label{lem:orthogonal}
Suppose that $\tilde\Gamma_{\theta_0}$ is nonsingular,
$\mathbb{E}[\lVert \tilde g(\mathsf{U}) \rVert^2] < \infty$, and
Assumption~\ref{Assumption: residual regularity} holds. Then:
(i) $(\Pi k)(\mathsf{u}, \cdot) \in \mathcal{H}_k$ for every $\mathsf{u}$, and the bivariate projected kernel is
\begin{equation}\label{eq:bivariate-kp}
\begin{aligned}
    \tilde k_p(\mathsf{u}, \mathsf{u}') &:= \left\langle (\Pi k)(\mathsf{u}, \cdot), (\Pi k)(\mathsf{u}', \cdot)\right\rangle \\
    &= k(w, w') - \tilde g(\mathsf{u})^\top \tilde\Gamma^{-1} \tilde b(w') - \tilde g(\mathsf{u}')^\top \tilde\Gamma^{-1} \tilde b(w) \\
    &\quad + \tilde g(\mathsf{u})^\top \tilde\Gamma^{-1} \tilde M \tilde\Gamma^{-1} \tilde g(\mathsf{u}'),
\end{aligned}
\end{equation}
which is the population counterpart of the sample entries
$(\hat{\boldsymbol{\Pi}}^\top \hat{\boldsymbol{K}} \hat{\boldsymbol{\Pi}})_{ij}$ of
\eqref{Test Statistics local Projection} (Lemma~\ref{lem:sampleprojection} below);
(ii) the moment functional $m(\theta) = \mathbb{E}[\varepsilon(\theta) (\Pi k)(\mathsf{U}, \cdot)]$,
with the projection frozen at $\theta_0$, is Neyman orthogonal:
\[
\partial_\theta\, m(\theta)\Big\vert_{\theta = \theta_0}
= \mathbb{E}\big[\tilde g(\mathsf{U}) (\Pi k)(\mathsf{U}, \cdot)\big]
= \tilde b - \tilde\Gamma \tilde\Gamma^{-1} \tilde b
= \boldsymbol{0} \quad \text{in } \mathcal{H}_k .
\]
\end{lemma}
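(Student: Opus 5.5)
The plan is to work entirely with Bochner integrals in $\mathcal{H}_k$ and the reproducing property. Boundedness of $k$ (say $\sup_w k(w,w)\le\kappa$) together with square integrability of the raw score $\tilde g(\mathsf U)$ will drive every step.

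\emph{Part (i), membership.} Each coordinate of $\tilde b(\cdot)=\mathbb{E}[\tilde g(\mathsf U)k(W,\cdot)]$ is a Bochner integral of the $\mathcal{H}_k$-valued map $\mathsf u\mapsto \tilde g_\ell(\mathsf u)k(w,\cdot)$. This map is strongly measurable because $k$ is continuous, so $\mathcal{H}_k$ is separable on the range of $W$. It is integrable since
\[
\mathbb{E}\lVert \tilde g_\ell(\mathsf U)k(W,\cdot)\rVert\le\kappa^{1/2}\,\mathbb{E}|\tilde g_\ell(\mathsf U)|<\infty .
\]
Hence $\tilde b_\ell\in\mathcal{H}_k$, and the standard fact that bounded linear functionals commute with Bochner integrals gives $\langle \tilde b_\ell,f\rangle=\mathbb{E}[\tilde g_\ell(\mathsf U)f(W)]$ for all $f\in\mathcal{H}_k$. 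For fixed $\mathsf u$, $(\Pi k)(\mathsf u,\cdot)$ is $k(w,\cdot)$ minus a finite linear combination of the $\tilde b_\ell$ with real coefficients $\tilde g(\mathsf u)^\top\tilde\Gamma^{-1}$. This combination is well defined because $\tilde\Gamma$ is nonsingular, so $(\Pi k)(\mathsf u,\cdot)$ lies in $\mathcal{H}_k$.

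\emph{Part (i), the four-term formula.} Expand the inner product bilinearly into four terms:
\begin{itemize}
\item $\langle k(w,\cdot),k(w',\cdot)\rangle=k(w,w')$ by the reproducing property;
\item $\langle k(w,\cdot),\tilde b_\ell\rangle=\tilde b_\ell(w)$, which gives the cross term $\tilde g(\mathsf u')^\top\tilde\Gamma^{-1}\tilde b(w)$;
\item symmetrically, the other cross term is $\tilde g(\mathsf u)^\top\tilde\Gamma^{-1}\tilde b(w')$;
\item the last term requires $\langle\tilde b_\ell,\tilde b_m\rangle=\tilde M_{\ell m}$.
\end{itemize}
For the last identity, apply the commuting property twice with an independent copy $\mathsf U'$:
\[
\langle\tilde b_\ell,\tilde b_m\rangle=\mathbb{E}[\tilde g_\ell(\mathsf U)\tilde b_m(W)]=\mathbb{E}\big[\tilde g_\ell(\mathsf U)\tilde g_m(\mathsf U')k(W,W')\big].
\]
The second equality holds by Fubini, which is justified because $|k|\le\kappa$ and $\mathbb{E}|\tilde g_\ell||\tilde g_m'|<\infty$. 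Since $\tilde\Gamma^{-1}$ is symmetric, the quadratic term becomes $\tilde g(\mathsf u)^\top\tilde\Gamma^{-1}\tilde M\tilde\Gamma^{-1}\tilde g(\mathsf u')$. The remark that this is the population counterpart of the sample entries is only a pointer to Lemma~\ref{lem:sampleprojection}, so I will not prove it here.

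\emph{Part (ii).} With the projection frozen at $\theta_0$, the moment is
\[
m(\theta)=\mathbb{E}[\varepsilon(\theta)\,\xi(\mathsf U)],\qquad \xi=(\Pi k)(\mathsf U,\cdot).
\]
The feature $\xi$ is $\theta$-free and bounded in norm by $\kappa^{1/2}+\lVert\tilde g\rVert\,\lVert\tilde\Gamma^{-1}\rVert\,\lVert\tilde b\rVert$, so $\mathbb{E}\lVert\xi\rVert^2<\infty$.

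The main obstacle is differentiating under the Bochner integral in the Fréchet sense. I would use the mean value form
\[
\varepsilon(\theta)-\varepsilon(\theta_0)=\int_0^1\nabla_\theta\varepsilon\big(\theta_0+t(\theta-\theta_0)\big)\,dt\,(\theta-\theta_0).
\]
Combined with $\mathbb{E}\sup_\theta\lVert\nabla^2_\theta\varepsilon\rVert^4<\infty$ from Assumption~\ref{Assumption: residual regularity}(i) and Cauchy--Schwarz against $\mathbb{E}\lVert\xi\rVert^2$, this yields the bound
\[
\big\lVert m(\theta)-m(\theta_0)-\mathbb{E}[\xi\,\tilde g^\top](\theta-\theta_0)\big\rVert=O(\lVert\theta-\theta_0\rVert^2).
\]
So the derivative is $\mathbb{E}[\tilde g(\mathsf U)\xi(\mathsf U)]$, with components $\mathbb{E}[\tilde g_\ell\,\xi]\in\mathcal{H}_k$.

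It remains to evaluate this derivative. By linearity of the Bochner integral,
\[
\mathbb{E}[\tilde g\,\xi]=\mathbb{E}[\tilde g\,k(W,\cdot)]-\mathbb{E}[\tilde g\tilde g^\top]\tilde\Gamma^{-1}\tilde b=\tilde b-\tilde\Gamma\tilde\Gamma^{-1}\tilde b=\boldsymbol 0 .
\]
This uses the fact that the coefficients $\tilde g^\top\tilde\Gamma^{-1}$ are scalar random variables multiplying the fixed elements $\tilde b$.
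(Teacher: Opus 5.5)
Your proposal is correct and follows the paper's proof essentially step for step: Bochner-integral membership of $\tilde b$ via boundedness of $k$, the four-term kernel from the reproducing property together with $\langle \tilde b,\tilde b\rangle=\tilde M$, and $\partial_\theta m(\theta_0)=\tilde b-\tilde\Gamma\tilde\Gamma^{-1}\tilde b=\boldsymbol 0$. The only variation is how you justify interchanging $\partial_\theta$ and $\mathbb{E}$: the paper applies dominated convergence to the difference quotients with envelope $\sup_{\theta}\lVert\nabla_\theta\varepsilon(\theta)\rVert\,\lVert(\Pi k)(\mathsf{U},\cdot)\rVert$, which needs only first-derivative moments, whereas your second-order mean-value bound uses the second-derivative moment but gives Fr\'echet differentiability with an explicit $O(\lVert\theta-\theta_0\rVert^2)$ remainder.
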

\begin{proof}
(i) Each coordinate of $\tilde b(\cdot) = \mathbb{E}[\tilde g(\mathsf{U}) k(W, \cdot)]$ is a
Bochner integral of $\mathcal{H}_k$-valued elements, bounded in norm by
$\mathbb{E}\lVert \tilde g(\mathsf{U}) \rVert\, k(w, w)^{1/2} < \infty$ by the
Cauchy--Schwarz inequality and boundedness of $k$; hence
$\tilde b \in \mathcal{H}_k \otimes \mathbb{R}^{2K+L}$ and the sections in
\eqref{eq:Pi-section} belong to $\mathcal{H}_k$ for every $\mathsf{u}$; no membership
condition on $\tilde g$ itself is required. The four-term form follows from the
reproducing property: by symmetry of $k$,
$\langle k(w, \cdot), \tilde b(\cdot) \rangle = \mathbb{E}[k(w, W)\tilde g(\mathsf{U})] = \tilde b(w)$
coordinatewise, and
$\big\langle \tilde b(\cdot), \tilde b(\cdot) \big\rangle =
\mathbb{E}[\tilde g(\mathsf{U})\tilde g(\mathsf{U}')^\top k(W, W')] = \tilde M$. For the sample
counterpart,
$(\hat{\boldsymbol{\Pi}}^\top \hat{\boldsymbol{K}} \hat{\boldsymbol{\Pi}})_{ij}
= \big\langle (\hat{\boldsymbol{\Pi}} \hat k)(\hat W_i, \cdot),
(\hat{\boldsymbol{\Pi}} \hat k)(\hat W_j, \cdot) \big\rangle$
with $(\hat{\boldsymbol{\Pi}} \hat k)(\hat W_i, \cdot) = k(\hat W_i, \cdot) -
\hat g_i^\top (\hat{\mathbf g}^\top \hat{\mathbf g} / n)^{-1} \hat b_n(\cdot)$ and
$\hat b_n(\cdot) = n^{-1} \sum_{\ell=1}^n \hat g_\ell k(\hat W_\ell, \cdot)$,
which reduces entrywise to the four-term form with population objects replaced by
sample analogs; Lemma~\ref{lem:sampleprojection} below makes the convergence
precise.

(ii) The projection is frozen at $\theta_0$, so only $\varepsilon(\theta)$
differentiates. For each $\mathsf{u}$, the difference quotients of
$\theta \mapsto \mathbb{E}[\varepsilon(\theta)\, (\Pi k)(\mathsf{U}, \cdot)]$ are dominated
by $\sup_{\theta \in \Theta} \lVert \nabla_\theta \varepsilon(\theta) \rVert\,
\lVert (\Pi k)(\mathsf{U}, \cdot) \rVert$, which is integrable by the Cauchy--Schwarz
inequality, since $\mathbb{E}[\sup_\theta \lVert \nabla_\theta \varepsilon(\theta) \rVert^2] < \infty$
and $\mathbb{E}\lVert (\Pi k)(\mathsf{U}, \cdot) \rVert^2 < \infty$ under
Assumption~\ref{Assumption: residual regularity}; dominated convergence
interchanges differentiation and expectation, and by \eqref{eq:raw-objects},
\begin{align*}
\partial_\theta\, m(\theta)\big\vert_{\theta_0}
&= \mathbb{E}\big[\tilde g(\mathsf{U})\, k(W, \cdot)\big]
   - \mathbb{E}\big[\tilde g(\mathsf{U})\tilde g(\mathsf{U})^\top\big]\, \tilde\Gamma^{-1} \tilde b(\cdot) \\
&= \tilde b(\cdot) - \tilde\Gamma\tilde\Gamma^{-1} \tilde b(\cdot) = \boldsymbol{0} .
\end{align*}
\end{proof}

For raw scores the conditional centering
$\mathbb{E}[\tilde g(\mathsf{U})\, \tilde k_p(\mathsf{U}, \mathsf{U}') \mid \mathsf{U}'] = \boldsymbol{0}$ need not
hold, and no step below uses it: the arguments consume only the orthogonality in
(ii) and, under the null, the economic centering of Lemma~\ref{lem:centering}.

Under the omnibus conditioning set $W = (X^\top, P)^\top$, the hypothesis of
Lemma~\ref{lem:centering} is $H_0^{(2)}$ under the correctly specified first
stage, i.e.\
$\mathbb{E}[\varepsilon(\theta_0) \mid X, P] = 0$ almost surely; the same holds for the directional set $W = P$,
since $\mathbb{E}[\varepsilon(\theta_0) \mid P]
= \mathbb{E}\big[\mathbb{E}[\varepsilon(\theta_0) \mid X, P] \mid P\big] = 0$
by the tower property.

\begin{lemma}\label{lem:invariance}
Suppose Assumptions~\ref{Assumption: parameter space} and~\ref{Assumption: residual regularity} hold
and $\lVert \sqrt{n}\,\hat m(\theta_0) \rVert = O_p(1)$. Then
\[
n\left(\hat T_{np}^2(\hat\theta_n) - \hat T_{np}^2(\theta_0)\right) = o_p(1),
\]
where $\hat T_{np}^2(\theta) = n^{-2}\,\boldsymbol{\varepsilon}(\theta)^\top \hat{\boldsymbol{\Pi}}^\top \hat{\boldsymbol{K}} \hat{\boldsymbol{\Pi}}\,\boldsymbol{\varepsilon}(\theta)$.
\end{lemma}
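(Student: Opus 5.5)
The plan is to exploit the fact that, in \eqref{Test Statistics local Projection}, both $\hat{\boldsymbol{\Pi}}$ and $\hat{\boldsymbol{K}}$ are computed once at $\hat\theta_n$ and are held fixed as $\theta$ varies in $\hat T_{np}^2(\theta)$. Only the residual vector $\boldsymbol{\varepsilon}(\theta)$ moves. The key structural fact is exact orthogonality in the sample: $\hat{\boldsymbol{\Pi}}\hat{\mathbf g}=\boldsymbol 0$, because $\hat{\boldsymbol{\Pi}}$ is the residual-maker of the estimated raw scores. This is the sample analogue of Lemma~\ref{lem:orthogonal}(ii). Since $\hat{\mathbf g}$ is evaluated at $\hat\theta_n$, I will Taylor-expand around $\hat\theta_n$ rather than around $\theta_0$, so that the linear term is annihilated exactly rather than only approximately.

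First, $\hat{\mathbf g}^\top\hat{\mathbf g}/n$ is invertible with probability approaching one. I would show this from Assumption~\ref{Assumption: residual regularity}(ii), a uniform law of large numbers using $\mathbb{E}\sup_\theta\lVert\nabla_\theta\varepsilon(\theta)\rVert^4<\infty$, and consistency of $\hat\theta_n$. On that event $\hat{\boldsymbol{\Pi}}$ is a symmetric idempotent matrix. Next, write $\hat{\boldsymbol K}=\Phi\Phi^\ast$, where $\Phi^\ast v=\sum_i v_i k(\hat W_i,\cdot)\in\mathcal H_k$, and set
\[
\sqrt n\,\hat m(\theta)=n^{-1/2}\Phi^\ast\hat{\boldsymbol{\Pi}}\boldsymbol{\varepsilon}(\theta),
\]
so that $n\hat T_{np}^2(\theta)=\lVert\sqrt n\,\hat m(\theta)\rVert^2$. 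The operator bound I need is
\[
\lVert n^{-1/2}\Phi^\ast\hat{\boldsymbol{\Pi}}v\rVert^2
= n^{-1}v^\top\hat{\boldsymbol{\Pi}}\hat{\boldsymbol K}\hat{\boldsymbol{\Pi}}v
\le n^{-1}\lambda_{\max}(\hat{\boldsymbol K})\lVert v\rVert^2
\le \Big(\sup_w k(w,w)\Big)\lVert v\rVert^2 .
\]
It uses only $\lambda_{\max}(\hat{\boldsymbol K})\le\operatorname{tr}\hat{\boldsymbol K}\le n\sup k$ and $\lVert\hat{\boldsymbol{\Pi}}\rVert\le1$. This holds uniformly in the realized $\hat W_i$, so the generated regressor in the kernel argument causes no difficulty.

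Now set
\[
v=\boldsymbol{\varepsilon}(\theta_0)-\boldsymbol{\varepsilon}(\hat\theta_n)-\hat{\mathbf g}(\theta_0-\hat\theta_n).
\]
Because $\hat{\boldsymbol{\Pi}}\hat{\mathbf g}=\boldsymbol 0$, we have $\sqrt n\,\hat m(\theta_0)-\sqrt n\,\hat m(\hat\theta_n)=n^{-1/2}\Phi^\ast\hat{\boldsymbol{\Pi}}v$. Convexity of $\Theta$ (Assumption~\ref{Assumption: parameter space}(i)) keeps the segment between $\hat\theta_n$ and $\theta_0$ inside $\Theta$. A second-order mean-value expansion of each $\varepsilon_i$ then gives
\[
\lvert v_i\rvert\le \tfrac12\sup_\theta\lVert\nabla^2_\theta\varepsilon_i(\theta)\rVert\,\lVert\hat\theta_n-\theta_0\rVert^2 .
\]
Hence $\lVert v\rVert^2\le\tfrac14\sum_i\sup_\theta\lVert\nabla^2_\theta\varepsilon_i\rVert^2\,\lVert\hat\theta_n-\theta_0\rVert^4=O_p(n)\,O_p(n^{-4\alpha})$, using the moment bound in Assumption~\ref{Assumption: residual regularity}(i) and Markov's inequality. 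So $\lVert v\rVert=O_p(n^{1/2-2\alpha})=o_p(1)$ precisely because $\alpha>1/4$. This is the step where the rate condition enters, and I expect it to be the only delicate point. The care needed is to expand around $\hat\theta_n$, since $\hat{\mathbf g}$ is evaluated there. Expanding around $\theta_0$ instead would leave a cross term $(\hat{\mathbf g}-\mathbf g(\theta_0))(\hat\theta_n-\theta_0)$, which is also $O_p(n^{1/2-2\alpha})$ but requires an extra step.

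Finally, put $a=\sqrt n\,\hat m(\hat\theta_n)$ and $b=\sqrt n\,\hat m(\theta_0)$. Then
\[
\lvert\lVert a\rVert^2-\lVert b\rVert^2\rvert\le\lVert a-b\rVert\big(\lVert a-b\rVert+2\lVert b\rVert\big).
\]
Here $\lVert a-b\rVert\le(\sup k)^{1/2}\lVert v\rVert=o_p(1)$ by the operator bound, and $\lVert b\rVert=O_p(1)$ by hypothesis. This yields $n(\hat T_{np}^2(\hat\theta_n)-\hat T_{np}^2(\theta_0))=o_p(1)$.
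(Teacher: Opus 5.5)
Your proof is correct, and it takes a different and more economical route than the paper's.

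\emph{The paper's route.} It expands around $\theta_0$, writing $\sqrt{n}(\hat m(\hat\theta_n)-\hat m(\theta_0))=\sqrt{n}(\hat\theta_n-\theta_0)^\top\partial_\theta\hat m(\bar\theta_n)$. It then shows $\partial_\theta\hat m(\theta_0)$ is small by comparing it with $n^{-1}\sum_i\tilde g(\mathsf{U}_i)(\Pi k)(\mathsf{U}_i,\cdot)$. That average has mean zero by the population Neyman orthogonality of Lemma~\ref{lem:orthogonal}(ii) and is $O_p(n^{-1/2})$ by a central limit theorem. This route forces it to control several replacement errors:
\begin{itemize}
\item the score displacement $\hat g_i-\tilde g(\mathsf{U}_i)$;
\item the kernel displacement $k(\hat W_i,\cdot)-k(W_i,\cdot)$, via the Lipschitz condition on $P(Z;\gamma)$;
\item the errors in $\hat{\mathbf{g}}^\top\hat{\mathbf{g}}/n$ and $\hat b_n$.
\end{itemize}
A second mean-value step is then needed for $\partial_\theta\hat m(\bar\theta_n)-\partial_\theta\hat m(\theta_0)$.

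\emph{Your route.} You expand around $\hat\theta_n$. There the linear term is removed exactly by the identity $\hat{\boldsymbol{\Pi}}\hat{\mathbf{g}}=\boldsymbol{0}$, which holds because $\hat g_i=\nabla_\theta\varepsilon_i(\hat\theta_n)$ by definition. Since $\hat{\boldsymbol{K}}$ and $\hat{\boldsymbol{\Pi}}$ are shared by both statistics, the deterministic bound $n^{-1}v^\top\hat{\boldsymbol{\Pi}}\hat{\boldsymbol{K}}\hat{\boldsymbol{\Pi}}v\le\sup_w k(w,w)\,\lVert v\rVert^2$ reduces everything to the second-order Taylor remainder, with $\lVert v\rVert=O_p(n^{1/2-2\alpha})$.

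\emph{What each buys.} Your argument needs no Hilbert-space CLT, no population projected objects, no Lipschitz condition on the propensity score, and no rates for $\hat b_n$ or $\hat{\mathbf{g}}^\top\hat{\mathbf{g}}/n$ beyond invertibility with probability approaching one. It also shows that the generated regressor in the kernel argument plays no role in this lemma; it matters only in Lemma~\ref{lem:sampleprojection}, where $\hat W_i$ is replaced by $W_i$. The paper's argument rests on population orthogonality rather than an exact sample identity. It would therefore carry over more readily to projections that only estimate $\Pi$ consistently without annihilating the sample gradients, such as cross-fitted projections or projections onto estimated conditional scores $\mathbb{E}[\nabla_\theta\varepsilon\mid W]$. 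In those cases your exact cancellation is unavailable.
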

\begin{proof}
Write $(\hat{\boldsymbol{\Pi}} \hat k)(\hat W_i, \cdot)$ for the sample section defined in the proof of Lemma~\ref{lem:orthogonal} and
$\hat m(\theta) = n^{-1}\sum_{i=1}^n \varepsilon_i(\theta)
\, (\hat{\boldsymbol{\Pi}}\hat k)(\hat W_i,\cdot) \in \mathcal{H}_k$, so that
$n\hat T_{np}^2(\theta) = \lVert \sqrt{n}\,\hat m(\theta)\rVert^2$. The projection,
the kernel matrix, and $\hat W$ are frozen as $\theta$ varies, so
$\theta \mapsto \hat m(\theta)$ enters only through $\varepsilon_i(\theta)$, and
componentwise application of the mean value theorem gives
\[
\sqrt{n}\left(\hat m(\hat\theta_n) - \hat m(\theta_0)\right)
= \sqrt{n}\,(\hat\theta_n - \theta_0)^\top \frac{\partial \hat m}{\partial \theta}(\bar\theta_n)
\]
for some $\bar\theta_n$ on the segment joining $\hat\theta_n$ and $\theta_0$. The
key estimate bounds the derivative at $\theta_0$: by the decomposition of
$(\hat{\boldsymbol{\Pi}}\hat k)(\hat W_i, \cdot)$ into its population and
replacement parts,
\[
\frac{\partial \hat m}{\partial \theta}(\theta_0)
= \frac{1}{n}\sum_{i=1}^n \tilde g(\mathsf{U}_i) (\Pi k)(\mathsf{U}_i, \cdot) + \bar r_n,
\qquad
\lVert \bar r_n \rVert = O_p(n^{-\alpha} + n^{-1/2}) .
\]
Indeed, the first term is $O_p(n^{-1/2})$ by the central limit theorem, its mean
$\mathbb{E}[\tilde g(\mathsf{U})(\Pi k)(\mathsf{U}, \cdot)]$ vanishing by
Lemma~\ref{lem:orthogonal}(ii). Writing
$L_{2,i} = \sup_{\theta \in \Theta} \lVert \nabla^2_\theta \varepsilon_i(\theta) \rVert$,
the mean value theorem gives $\hat g_i - \tilde g(\mathsf{U}_i) = A_i (\hat\theta_n - \theta_0)$
with $\lVert A_i \rVert \le L_{2,i}$, and the Cauchy--Schwarz inequality bounds every
average that enters $\bar r_n$ without recourse to maxima over $i$:
$n^{-1} \sum_i \lVert \hat g_i - \tilde g(\mathsf{U}_i) \rVert \, \lVert (\Pi k)(\mathsf{U}_i, \cdot) \rVert
\le \lVert \hat\theta_n - \theta_0 \rVert
\big( n^{-1} \sum_i L_{2,i}^2 \big)^{1/2}
\big( n^{-1} \sum_i \lVert (\Pi k)(\mathsf{U}_i, \cdot) \rVert^2 \big)^{1/2} = O_p(n^{-\alpha})$, the
first factor by Assumption~\ref{Assumption: parameter space}(iii), the second by
Chebyshev's inequality under
$\mathbb{E}[\sup_\theta \lVert \nabla^2_\theta \varepsilon(\theta) \rVert^2] < \infty$,
and the third by the same route, since
$\mathbb{E}\lVert (\Pi k)(\mathsf{U}, \cdot) \rVert^2 < \infty$ follows from boundedness of
$k$ and the gradient moment;
the displaced kernel terms obey
$\lVert k(\hat W_i, \cdot) - k(W_i, \cdot) \rVert \le c(Z_i)\, \lVert \hat\gamma_n - \gamma_0 \rVert$
pointwise in $i$, with Lipschitz constant
$c(Z_i) = \sup_\gamma \lVert \partial_\gamma P(Z_i; \gamma) \rVert$ satisfying
$\mathbb{E}[c(Z)^2] < \infty$ under Assumption~\ref{Assumption: residual regularity},
so their averages with $\tilde g(\mathsf{U}_i)$ or
$\hat g_i$ are $O_p(n^{-\alpha})$ by Cauchy--Schwarz likewise; and the common factors
$\hat{\mathbf g}^\top \hat{\mathbf g}/n - \tilde\Gamma$ and
$\hat b_n - \tilde b$, with $\hat b_n(\cdot) = n^{-1} \sum_\ell \hat g_\ell k(\hat W_\ell, \cdot)$,
are $O_p(n^{-1/2})$ by Chebyshev's inequality under the maintained second moments
and Slutsky's theorem, the latter
also picking up an $O_p(n^{-\alpha})$ displacement from $\hat W$.

Second, the mean value theorem applied to $\partial_\theta \hat m$ gives
$\lVert \partial_\theta \hat m(\bar\theta_n) - \partial_\theta \hat m(\theta_0) \rVert = O_p(n^{-\alpha})$,
because $\sup_{\theta} \lVert \partial^2_\theta \hat m(\theta) \rVert
\le n^{-1} \sum_{i=1}^n L_{2,i} \lVert (\hat{\boldsymbol{\Pi}} \hat k)(\hat W_i, \cdot) \rVert = O_p(1)$:
the feature norms obey
$\lVert (\hat{\boldsymbol{\Pi}} \hat k)(\hat W_i, \cdot) \rVert \le 1 + \lVert \hat g_i \rVert
\lVert (\hat{\mathbf{g}}^\top \hat{\mathbf{g}}/n)^{-1} \rVert \lVert \hat b_n \rVert$
with $\lVert \hat{\mathbf{g}}^\top \hat{\mathbf{g}}/n - \tilde\Gamma \rVert = O_p(n^{-1/2})$,
$\tilde\Gamma$ nonsingular, and $\lVert \hat b_n - \tilde b \rVert = O_p(n^{-1/2})$
by the law of large numbers, while $\lVert \hat g_i \rVert \le \lVert \tilde g(\mathsf{U}_i) \rVert
+ L_{2,i} \lVert \hat\theta_n - \theta_0 \rVert$; the Cauchy--Schwarz inequality,
with $\mathbb{E}[\sup_\theta \lVert \nabla^2_\theta \varepsilon(\theta) \rVert^2] < \infty$ and
$\mathbb{E}[\lVert \tilde g(\mathsf{U}) \rVert^2] < \infty$ under
Assumption~\ref{Assumption: residual regularity}, then bounds the average term
by term and gives $O_p(1)$. Combining the two estimates,
\[
\big\lVert \sqrt{n}\big(\hat m(\hat\theta_n) - \hat m(\theta_0)\big) \big\rVert
\le \sqrt{n}\, \lVert \hat\theta_n - \theta_0 \rVert
\big( \lVert \partial_\theta \hat m(\theta_0) \rVert + O_p(n^{-\alpha}) \big)
= O_p(n^{-\alpha}) + O_p(n^{1/2 - 2\alpha}) = o_p(1),
\]
since $n^{-\alpha} \to 0$ and $n^{1/2 - 2\alpha} \to 0$ for $\alpha > 1/4$.
Consequently,
\begin{align*}
n\left(\hat T_{np}^2(\hat\theta_n) - \hat T_{np}^2(\theta_0)\right)
&= 2\left\langle \sqrt{n}\,\hat m(\theta_0),\, \sqrt{n}\left(\hat m(\hat\theta_n) - \hat m(\theta_0)\right)\right\rangle + \left\lVert \sqrt{n}\left(\hat m(\hat\theta_n) - \hat m(\theta_0)\right)\right\rVert^2 \\
&= O_p(1)\cdot o_p(1) + o_p(1) = o_p(1),
\end{align*}
where the last step uses the Cauchy--Schwarz inequality. \qedhere
\end{proof}

\begin{lemma}\label{lem:sampleprojection}
Suppose Assumptions~\ref{ass:exclusion}--\ref{ass:relevance},
\ref{Assumption: parameter space}, and~\ref{Assumption: residual regularity} hold,
$H_0^{(2)}$ is true, the first stage is correctly specified,
$P(Z;\gamma_0) = P(Z)$ a.s., $\mathbb{E}[\varepsilon(\theta_0) \mid X, P] = 0$
a.s., and $\mathbb{E}[\tilde h_p(\mathsf{U}, \mathsf{U}')^2] < \infty$. Then,
with $\{\mu_j\}_{j\ge 1}$ the eigenvalues of the integral operator with kernel
$\tilde h_p$ on $L^2(\mathbb{P}_{\mathsf{U}})$, i.e.\ the solutions of
$\mu_j\psi_j(\mathsf{u}) = \int \tilde h_p(\mathsf{u}, \mathsf{u}')\psi_j(\mathsf{u}')\, d\mathbb{P}_{\mathsf{U}}(\mathsf{u}')$,
\[
n\hat T_{np}^2(\theta_0) = \frac{1}{n}\,\boldsymbol{\varepsilon}(\theta_0)^\top \hat{\boldsymbol{\Pi}}^\top \hat{\boldsymbol{K}} \hat{\boldsymbol{\Pi}}\,\boldsymbol{\varepsilon}(\theta_0) \xrightarrow{d} \sum_{j\ge 1} \mu_j A_j^2,
\]
with $\{A_j\}_{j\ge 1}$ independent standard normal.
\end{lemma}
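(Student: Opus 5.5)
Mirroring Lemma~\ref{lem:invariance}, I would split the sample statistic at the true parameter into a population V-statistic plus a remainder. Write $\hat m(\theta_0) = n^{-1}\sum_i \varepsilon_i(\theta_0)(\hat{\boldsymbol\Pi}\hat k)(\hat W_i,\cdot)$, so that $n\hat T_{np}^2(\theta_0) = \lVert \sqrt n\,\hat m(\theta_0)\rVert^2$. Let $m_n^{\circ} = n^{-1}\sum_i \varepsilon_i(\theta_0)(\Pi k)(\mathsf U_i,\cdot)$ be the oracle counterpart built from the population projection \eqref{eq:Pi-section}. The plan has three steps. First, show $\lVert\sqrt n(\hat m(\theta_0)-m_n^\circ)\rVert = o_p(1)$. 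Second, derive the limit of $\lVert\sqrt n\,m_n^\circ\rVert^2 = n^{-1}\sum_{i,j}\tilde h_p(\mathsf U_i,\mathsf U_j)$. Third, combine the two by the continuous mapping theorem and Slutsky.

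For the oracle limit, Lemma~\ref{lem:centering} gives $\mathbb E[\varepsilon(\theta_0)\mid Z]=0$. Hence $\mathbb E[\tilde h_p(\mathsf U,\mathsf U')\mid \mathsf U']=0$, so the kernel $\tilde h_p$ is degenerate. It is symmetric and positive semidefinite, being $\langle \varepsilon(\Pi k)(\mathsf u,\cdot),\varepsilon'(\Pi k)(\mathsf u',\cdot)\rangle$.

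\begin{itemize}
\item Off-diagonal part: since $\mathbb E[\tilde h_p^2]<\infty$, the standard limit theory for degenerate U-statistics applies. Via the spectral decomposition $\tilde h_p=\sum_j\mu_j\psi_j(\mathsf u)\psi_j(\mathsf u')$ with $\mathbb E\psi_j=0$, it gives $n^{-1}\sum_{i\ne j}\tilde h_p\to_d\sum_j\mu_j(A_j^2-1)$.
\item Diagonal part: $n^{-1}\sum_i\tilde h_p(\mathsf U_i,\mathsf U_i)\to \mathbb E[\varepsilon^2\lVert(\Pi k)(\mathsf U,\cdot)\rVert^2]=\sum_j\mu_j$ by the LLN. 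Its integrability follows from boundedness of $k$ and the fourth moments in Assumption~\ref{Assumption: residual regularity}(i), via Cauchy--Schwarz.
\end{itemize}

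Adding the two pieces gives $\sum_j\mu_jA_j^2$. Equivalently, and more cleanly, $\sqrt n\,m_n^\circ$ is a normalized sum of i.i.d.\ mean-zero $\mathcal H_k$-valued elements with finite second moment. The Hilbert-space CLT then gives a Gaussian limit $\mathbb G$ whose covariance operator has eigenvalues $\mu_j$, and $\lVert\mathbb G\rVert^2\overset d=\sum_j\mu_jA_j^2$. I would use this route.

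The main obstacle is the replacement step, because estimation error enters in two places: the scores $\hat g_i$ and the conditioning argument $\hat W_i$. Subtracting the oracle term gives the decomposition
\[
(\hat{\boldsymbol\Pi}\hat k)(\hat W_i,\cdot)-(\Pi k)(\mathsf U_i,\cdot)=\{k(\hat W_i,\cdot)-k(W_i,\cdot)\}-\{\hat g_i^\top\hat\Gamma_n^{-1}\hat b_n-\tilde g_i^\top\tilde\Gamma^{-1}\tilde b\},
\]
with $\hat\Gamma_n=\hat{\mathbf g}^\top\hat{\mathbf g}/n$. The second brace is handled term by term.

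\begin{itemize}
\item Terms in $\hat\Gamma_n^{-1}-\tilde\Gamma^{-1}$ and $\hat b_n-\tilde b$: these are $o_p(1)$ factors, shown as in Lemma~\ref{lem:invariance}. Each multiplies $n^{-1/2}\sum_i\varepsilon_i\tilde g_i$, which is $O_p(1)$ by the CLT because $\mathbb E[\varepsilon\tilde g]=0$ (Lemma~\ref{lem:centering}, as $\tilde g$ is $Z$-measurable). The product is therefore $o_p(1)$.
\item Term in $\hat g_i-\tilde g_i=A_i(\hat\theta_n-\theta_0)$: I would Taylor-expand and write it as $(\hat\theta_n-\theta_0)^\top n^{-1/2}\sum_i\varepsilon_iA_i$. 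Centering $\varepsilon_iA_i$ (its mean is zero at $\theta_0$, since $\nabla^2_\theta\varepsilon$ is $Z$-measurable) gives $O_p(n^{-\alpha})$.
\end{itemize}

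The kernel displacement is the delicate part, since naive Lipschitz bounds only give $\sqrt n\,O_p(n^{-\alpha})$. I would expand $k(\hat W_i,\cdot)-k(W_i,\cdot)=\partial_w k(W_i,\cdot)\,\partial_\gamma P(Z_i)(\hat\gamma_n-\gamma_0)+R_i$. The linear term is $(\hat\gamma_n-\gamma_0)\cdot n^{-1/2}\sum_i\varepsilon_i\,\partial_wk(W_i,\cdot)\partial_\gamma P(Z_i)$. This sum is again centered by $\mathbb E[\varepsilon\mid Z]=0$, so the term is $O_p(n^{-\alpha})$. The remainder satisfies $\lVert R_i\rVert\le c\,c(Z_i)^2\lVert\hat\gamma_n-\gamma_0\rVert^2$ by smoothness of the Gaussian kernel, contributing $\sqrt n\,O_p(n^{-2\alpha})=o_p(1)$ because $\alpha>1/4$.

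This is exactly where the conditioning-argument issue is resolved: the $Z$-centering, not orthogonality, kills the first-order term. The one point I would need to double-check is that the derivative feature $\partial_wk$ is Bochner-integrable in $\mathcal H_k$, which holds for the Gaussian kernel. With the replacement step in hand, Slutsky completes the proof.
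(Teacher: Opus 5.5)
Your proposal is correct and follows the paper's proof in all essentials. It uses the same oracle replacement with the idiosyncratic/common split of the feature error, and the same $Z$-centering from Lemma~\ref{lem:centering} (not orthogonality) to bring the linearized score and kernel-displacement terms down to $O_p(n^{-\alpha})$, with second-order remainders of order $n^{1/2-2\alpha}$; for the score term, write the mean-value matrix $A_i$ as $\nabla^2_\theta\varepsilon_i(\theta_0)$ plus a third-derivative remainder before centering, as the paper does. It also uses the same degeneracy of $\tilde h_p$.

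The only difference is the final step. The paper applies Serfling's spectral expansion for degenerate U-statistics plus a law of large numbers for the diagonal. You instead prefer the Hilbert-space CLT with the $BB^\ast$/$B^\ast B$ eigenvalue identification, which the paper itself uses in Theorem~\ref{thm:asymptotics}(iii). Both yield $\sum_j\mu_jA_j^2$, and yours avoids the U-to-V-statistic bookkeeping.
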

\begin{proof}
First, replace the sample sandwich by the population features. Write
$\Phi_i = (\Pi k)(\mathsf{U}_i, \cdot)$, $\hat\Phi_i = (\hat{\boldsymbol{\Pi}}\hat k)(\hat W_i, \cdot)$,
so that $n \hat T_{np}^2(\theta_0) = \lVert \sqrt{n}\, \hat m(\theta_0) \rVert^2$ with
$\hat m(\theta_0) = n^{-1} \sum_{i=1}^n \varepsilon_i(\theta_0) \hat\Phi_i
= n^{-1} \sum_{i=1}^n \varepsilon_i(\theta_0) \Phi_i + \rho_n$ and
\[
n \hat T_{np}^2(\theta_0)
= \Big\lVert \frac{1}{\sqrt{n}} \sum_{i=1}^n \varepsilon_i(\theta_0) \Phi_i \Big\rVert^2 + o_p(1)
= \frac{1}{n} \sum_{i,j} \varepsilon_i(\theta_0) \varepsilon_j(\theta_0) \tilde k_p(\mathsf{u}_i, \mathsf{u}_j) + o_p(1),
\]
provided $\sqrt{n}\, \rho_n = o_p(1)$. To verify the rate, decompose the feature
deviation $\hat\Phi_i - \Phi_i = \delta_i^{\mathrm{id}} + \delta^{\mathrm{co}}$.
The idiosyncratic part collects
$(\hat g_i - \tilde g(\mathsf{U}_i))^\top (\hat{\mathbf g}^\top \hat{\mathbf g}/n)^{-1} \tilde b$
and $k(\hat W_i, \cdot) - k(W_i, \cdot)$ (the remaining cross products of an
idiosyncratic and a common deviation are bounded by Cauchy--Schwarz at
$O_p(n^{-1/2 - \alpha})$, the same order as the idiosyncratic piece, and are
absorbed in the final bound). The kernel
displacement $k(\hat W_i, \cdot) - k(W_i, \cdot)$ is a function of $\{Z_\ell\}$
alone, Lipschitz in $\hat\gamma_n - \gamma_0$ pointwise in $i$ under
Assumption~\ref{Assumption: residual regularity} and of size
$O_p(n^{-\alpha})$. For the score displacement, write
$\bar\mu(\theta) = \mu_0(X;\beta_0) + P(Z;\gamma)\,\Delta(X)$, a function of $Z$
alone, so that $\varepsilon(\theta) = Y - \bar\mu(\theta)$ and
$\nabla^2_\theta \varepsilon(\theta) = -\nabla^2_\theta \bar\mu(\theta)$ never involves
$Y$; the mean value theorem gives
$\hat g_i - \tilde g(\mathsf{U}_i) = -\nabla^2_\theta \bar\mu(\bar\theta_i)(Z_i)(\hat\theta_n - \theta_0)$
with $\lVert \nabla^2_\theta \bar\mu(\bar\theta_i)(Z_i) \rVert \le L_{2,i}
:= \sup_{\theta \in \Theta} \lVert \nabla^2_\theta \varepsilon_i(\theta) \rVert$. The
contribution splits exactly:
\[
n^{-1} \sum_i \varepsilon_i(\theta_0) \big( \hat g_i - \tilde g(\mathsf{U}_i) \big)^\top \tilde\Gamma^{-1} \tilde b
= - (\hat\theta_n - \theta_0)^\top \Big[ n^{-1} \sum_i \varepsilon_i(\theta_0) \nabla^2_\theta \bar\mu(\theta_0)(Z_i) \Big] \tilde\Gamma^{-1} \tilde b
   + O_p(n^{-2\alpha}),
\]
where the first term is $O_p(n^{-1/2 - \alpha})$: the bracket has mean zero,
$\mathbb{E}[\varepsilon(\theta_0) \nabla^2_\theta \bar\mu(\theta_0)] =
\mathbb{E}\big[ \mathbb{E}[\varepsilon(\theta_0) \mid Z]\, \nabla^2_\theta \bar\mu(\theta_0)(Z) \big] = \boldsymbol{0}$
by Lemma~\ref{lem:centering}, $\nabla^2_\theta \bar\mu(\theta_0)(Z)$ being a function of
$Z$, its variance is finite under Assumption~\ref{Assumption: residual regularity},
and the central limit theorem applies. The remainder is bounded by Cauchy--Schwarz:
writing $L_{3,i} = \sup_{\theta \in \Theta} \lVert \nabla^3_\theta \varepsilon_i(\theta) \rVert$,
the displacement obeys
$\lVert \nabla^2_\theta \bar\mu(\bar\theta_i)(Z_i) - \nabla^2_\theta \bar\mu(\theta_0)(Z_i) \rVert
\le L_{3,i} \lVert \hat\theta_n - \theta_0 \rVert$, so the remainder is at most
$\lVert \hat\theta_n - \theta_0 \rVert^2
\big( n^{-1} \sum_i \varepsilon_i(\theta_0)^2 \big)^{1/2}
\big( n^{-1} \sum_i L_{3,i}^2 \big)^{1/2} = O_p(n^{-2\alpha})$. The common part
collects $\tilde g(\mathsf{U}_i)^\top \big( (\hat{\mathbf g}^\top \hat{\mathbf g}/n)^{-1} -
\tilde\Gamma^{-1} \big) \tilde b$ and $\tilde g(\mathsf{U}_i)^\top \tilde\Gamma^{-1}
(\hat b_n - \tilde b)$, with the matrix deviation $O_p(n^{-1/2})$ by the law of
large numbers, $\lVert \hat b_n - \tilde b \rVert = O_p(n^{-1/2})$ by the law of
large numbers for Bochner integrals, and both refined by Slutsky's theorem,
$\tilde\Gamma$ being nonsingular. The kernel-displacement contribution is handled by linearizing in
$\hat\gamma_n - \gamma_0$. For kernels twice continuously differentiable in
their first argument (in particular the Gaussian kernel used throughout),
$k(\hat W_i, \cdot) - k(W_i, \cdot) = \partial_\gamma k(W_i, \cdot)(Z_i)
(\hat\gamma_n - \gamma_0)$ up to a second-order term that is bounded, uniformly
in $i$, by Cauchy--Schwarz at $O_p(n^{-2\alpha})$ under
Assumption~\ref{Assumption: residual regularity}. Hence
\[
n^{-1} \sum_i \varepsilon_i(\theta_0) \big[ k(\hat W_i, \cdot) - k(W_i, \cdot) \big]
= \Big[ n^{-1} \sum_i \varepsilon_i(\theta_0) \, \partial_\gamma k(W_i, \cdot)(Z_i) \Big] (\hat\gamma_n - \gamma_0)
   + O_p(n^{-2\alpha}),
\]
and the bracket has mean zero exactly,
$\mathbb{E}[\varepsilon(\theta_0) \, \partial_\gamma k(W, \cdot)(Z)] =
\mathbb{E}\big[ \mathbb{E}[\varepsilon(\theta_0) \mid Z]\, \partial_\gamma k(W, \cdot)(Z) \big] = \boldsymbol{0}$
by Lemma~\ref{lem:centering} -- the summand being a function of $Z$ -- and is
$O_p(n^{-1/2})$ by the central limit theorem; conditional Chebyshev on
$n^{-2} \sum_i \mathbb{E}[\varepsilon(\theta_0)^2 \mid Z_i]\,
\lVert \partial_\gamma k(W_i, \cdot)(Z_i) \rVert^2 = O_p(n^{-1})$ gives the same
rate conditionally. The contribution is therefore
$O_p(n^{-1/2 - \alpha} + n^{-2\alpha})$.
The common contribution factors as
$\big( n^{-1} \sum_i \varepsilon_i(\theta_0)\, \tilde g(\mathsf{U}_i)^\top \big) \, O_p(n^{-1/2})$,
whose bracket has mean zero by Lemma~\ref{lem:centering}
($\mathbb{E}[\varepsilon(\theta_0)\tilde g(\mathsf{U})^\top] =
\mathbb{E}[\mathbb{E}[\varepsilon(\theta_0) \mid Z] \tilde g(\mathsf{U})^\top] = \boldsymbol{0}$)
and is $O_p(n^{-1/2})$ by the central limit theorem, giving $O_p(n^{-1})$ in
total. Hence $\lVert \rho_n \rVert = O_p(n^{-1/2 - \alpha} + n^{-2\alpha})$ and
$\sqrt{n}\, \rho_n = O_p(n^{-\alpha} + n^{1/2 - 2\alpha}) = o_p(1)$ for
$\alpha > 1/4$.
Second, $\tilde h_p$ is degenerate under the maintained hypotheses: by
Lemma~\ref{lem:centering},
\[
\mathbb{E}\big[\tilde h_p(\mathsf{U}, \mathsf{U}') \mid \mathsf{U}'\big]
= \varepsilon(\theta_0)\,'\; \mathbb{E}\big[\varepsilon(\theta_0)\, \tilde k_p(\mathsf{U}, \mathsf{U}') \mid \mathsf{U}'\big] = 0
\quad \text{a.s.},
\]
and $\mathbb{E}[\tilde h_p(\mathsf{U}, \mathsf{U}')^2] < \infty$ is part of the hypothesis. Third, apply the
spectral expansion for degenerate U-statistics (Serfling, 1980, Theorem 5.5.2):
\[
n \cdot \frac{2}{n(n-1)} \sum_{i < j} \tilde h_p(\mathsf{U}_i, \mathsf{U}_j)
\; \xrightarrow{d} \; \sum_{j \ge 1} \mu_j \big( A_j^2 - 1 \big),
\]
a result stated by Serfling for U-statistics and transposed here to
V-statistics, the diagonal contributing $O_p(n^{-1})$. The V-statistic differs
from $n$ times the U-statistic only by the diagonal:
\[
\frac{1}{n} \sum_{i,j} \tilde h_p(\mathsf{U}_i, \mathsf{U}_j)
= \frac{1}{n} \sum_i \tilde h_p(\mathsf{U}_i, \mathsf{U}_i)
+ \frac{n-1}{n} \cdot n \cdot \frac{2}{n(n-1)} \sum_{i < j} \tilde h_p(\mathsf{U}_i, \mathsf{U}_j),
\]
and the diagonal average converges in probability to
$\mathbb{E}[\tilde h_p(\mathsf{U}, \mathsf{U})] = \sum_{j \ge 1} \mu_j$ by the law of large
numbers, this being the trace of the positive trace-class operator with kernel
$\tilde h_p$, equal to the integral of the kernel over the diagonal; finiteness
holds since $\tilde h_p(\mathsf{U}, \mathsf{U}) = \varepsilon(\theta_0)^2 \lVert (\Pi k)(\mathsf{U}, \cdot) \rVert^2$
and $\mathbb{E}[\varepsilon(\theta_0)^2 \lVert (\Pi k)(\mathsf{U}, \cdot) \rVert^2]
\le \big( \mathbb{E}[\varepsilon(\theta_0)^4] \big)^{1/2}
\big( \mathbb{E}[\tilde k_p(\mathsf{U}, \mathsf{U})^2] \big)^{1/2} < \infty$ under the maintained
moments, so the trace is finite and $\{\mu_j\} \in \ell^1$. Slutsky's
theorem combines the two displays and gives the claim. \qedhere
\end{proof}

\begin{proof}[Proof of Theorem \ref{thm:asymptotics}]
(i) The hypothesis $\lVert \sqrt{n}\,\hat m(\theta_0) \rVert = O_p(1)$ of
Lemma~\ref{lem:invariance} holds under the maintained null: by the reduction in
the proof of Lemma~\ref{lem:sampleprojection},
$\sqrt{n}\,\hat m(\theta_0) = n^{-1/2} \sum_i \varepsilon_i(\theta_0) (\Pi k)(\mathsf{U}_i, \cdot) + o_p(1)$,
whose first term has mean zero in $\mathcal{H}_k$ by Lemma~\ref{lem:centering}
and conditional variance $\mathbb{E}[\varepsilon(\theta_0)^2 \tilde k_p(\mathsf{U}, \mathsf{U})] < \infty$,
the cross terms vanishing by the same lemma. Then, by
Lemmas~\ref{lem:invariance} and~\ref{lem:sampleprojection},
$n\hat T_{np}^2(\hat\theta_n) = n\hat T_{np}^2(\theta_0) + o_p(1) \to_d \sum_{j\ge 1}\mu_j A_j^2$, where $\{\mu_j\}$ are the eigenvalues of $\tilde h_p$.

(ii) Under a fixed alternative, $\hat\theta_n \to_p \theta^\ast$ and, by the same replacement argument as in the proof of Lemma~\ref{lem:sampleprojection} evaluated at $\theta^\ast$, whose population objects are well defined under the hypotheses of part (ii), $\hat T_{np}^2(\theta^\ast) \to_p T_p^2(\theta^\ast)$, where
\[
T_p^2(\theta^\ast) = \mathbb{E}\big[\varepsilon(\theta^\ast)\, \tilde k_{p,\theta^\ast}(\mathsf{U}, \mathsf{U}')\, \varepsilon(\theta^\ast)'\big] = \left\lVert \mathbb{E}\left[\varepsilon(\theta^\ast) (\Pi_{\theta^\ast} k)(\mathsf{U}, \cdot)\right]\right\rVert^2 \ge 0,
\]
the equality following from the four-term form \eqref{eq:bivariate-kp} and Fubini's theorem, exactly as in the proof of Lemma~\ref{lem:orthogonal}. By \eqref{eq:Pi-section} and \eqref{eq:raw-objects},
\[
\mathbb{E}\big[\varepsilon(\theta^\ast) (\Pi_{\theta^\ast} k)(\mathsf{U}, \cdot)\big]
= \mathbb{E}\big[\varepsilon(\theta^\ast) k(W, \cdot)\big] - \mathbb{E}\big[\varepsilon(\theta^\ast) \tilde g_{\theta^\ast}(\mathsf{U})^\top\big] \tilde\Gamma_{\theta^\ast}^{-1} \tilde b_{\theta^\ast}(\cdot),
\]
and, by the tower property applied to the first term and the linearity of the reproducing map, this vanishes if and only if
$\mathbb{E}\big[ \big( m_{\theta^\ast}(W) - v^\top \tilde g_{\theta^\ast}(\mathsf{U}) \big) k(W, \cdot) \big] = \boldsymbol{0}$
for $m_{\theta^\ast}(w) = \mathbb{E}[\varepsilon(\theta^\ast) \mid W = w]$ and
$v^\top = \mathbb{E}[\varepsilon(\theta^\ast) \tilde g_{\theta^\ast}(\mathsf{U})^\top] \tilde\Gamma_{\theta^\ast}^{-1}$;
by the injectivity argument of Theorem~\ref{Theorem: Residual Embedding} (ISPD),
applied to the $W$-measurable function $m_{\theta^\ast}(W) - v^\top a(W)$ with
$a(W) = \mathbb{E}[\tilde g_{\theta^\ast}(\mathsf{U}) \mid W]$, this holds if and only if
$m_{\theta^\ast}(W) = v^\top a(W)$ a.s. The conditional form upgrades to the raw
one: $v^\top a(W) = \mathbb{E}[v^\top \tilde g_{\theta^\ast}(\mathsf{U}) \mid W]$ by the tower
property, and $\mathbb{E}[m_{\theta^\ast}(W) a(W)^\top] = \mathbb{E}[\varepsilon(\theta^\ast) a(W)^\top]
= v^\top \tilde\Gamma_{\theta^\ast}$, so $m_{\theta^\ast}(W) = v^\top a(W)$ a.s. forces
$v^\top (\tilde\Gamma_{\theta^\ast} - C) = \boldsymbol{0}$ with
$C = \mathbb{E}[a(W) a(W)^\top]$, i.e. $v^\top \Omega = \boldsymbol{0}$ for
$\Omega = \mathbb{E}[\operatorname{Var}(\tilde g_{\theta^\ast}(\mathsf{U}) \mid W)] \succeq 0$; positive
semidefiniteness gives $\operatorname{Var}(v^\top \tilde g_{\theta^\ast}(\mathsf{U}) \mid W) =
v^\top \Omega v = 0$ almost surely, hence $v^\top a(W) = v^\top \tilde g_{\theta^\ast}(\mathsf{U})$
a.s. and $m_{\theta^\ast}(W) = v^\top \tilde g_{\theta^\ast}(\mathsf{U})$ a.s., which the
hypothesis of part (ii) excludes. Hence $T_p^2(\theta^\ast) > 0$. Then
\[
n\hat T_{np}^2(\hat\theta_n) = n \hat T_{np}^2(\theta^\ast) + O_p(n^{1-\alpha})
= n\, T_p^2(\theta^\ast) + O_p(\sqrt{n}) + O_p(n^{1-\alpha}) \to_p \infty :
\]
the first equality holds because
$\lvert \hat T_{np}^2(\hat\theta_n) - \hat T_{np}^2(\theta^\ast) \rvert \le
\lVert \hat m(\hat\theta_n) - \hat m(\theta^\ast) \rVert
\big( \lVert \hat m(\hat\theta_n) \rVert + \lVert \hat m(\theta^\ast) \rVert \big)$
with $\lVert \hat m(\theta) \rVert = O_p(1)$ near $\theta^\ast$ and
$\lVert \hat m(\hat\theta_n) - \hat m(\theta^\ast) \rVert
\le \lVert \hat\theta_n - \theta^\ast \rVert
\big( n^{-1} \sum_i \lvert \varepsilon_i(\bar\theta_i) \rvert^2 \big)^{1/2}
\big( n^{-1} \sum_i \lVert \hat\Phi_i \rVert^2 \big)^{1/2} = O_p(\lVert \hat\theta_n - \theta^\ast \rVert) = O_p(n^{-\alpha})$
by the mean value theorem and the Cauchy--Schwarz inequality under
Assumption~\ref{Assumption: residual regularity}; and for the $O_p(\sqrt{n})$
term, the Hoeffding projection gives
\[
n\big( \hat T_{np}^2(\theta^\ast) - T_p^2(\theta^\ast) \big)
= 2 \sum_{i=1}^n \big( h_1(\mathsf{U}_i) - T_p^2(\theta^\ast) \big) + O_p(1) = O_p(\sqrt{n}),
\qquad
h_1(\mathsf{u}) = \mathbb{E}\big[\tilde h_{p,\theta^\ast}(\mathsf{u}, \mathsf{U}')\big],
\]
the bound using $\mathbb{E}[\tilde h_{p,\theta^\ast}(\mathsf{U}, \mathsf{U}')^2] < \infty$
(Serfling, 1980, Theorem 5.5.1A, stated for U-statistics and transposed to
V-statistics by adding the diagonal, which is $O_p(n^{-1})$). The linear term
genuinely fluctuates at order $\sqrt{n}$: were the kernel degenerate,
$\mathbb{E}[\tilde h_{p,\theta^\ast}(\mathsf{U}, \mathsf{U}') \mid \mathsf{U}'] \equiv 0$, the tower
property would give $T_p^2(\theta^\ast) = \mathbb{E}\big[\varepsilon(\theta^\ast)'\;
\mathbb{E}[\tilde h_{p,\theta^\ast}(\mathsf{U}, \mathsf{U}') \mid \mathsf{U}']\big] = 0$, contradicting
$T_p^2(\theta^\ast) > 0$. Hence $\Pr(n\hat T_{np}^2(\hat\theta_n) > t) \to 1$ for
every fixed $t > 0$.

(iii) Decompose the residual under $H_{1n}$ as $\varepsilon(\theta_0) = \eta(\theta_0) + R(W)/\sqrt{n}$ with $\mathbb{E}[\eta(\theta_0)\mid W] = 0$ and
\[
\mathbb{E}[\varepsilon(\theta_0) \mid Z] = \mathbb{E}[R(W) \mid Z]/\sqrt{n} + o(n^{-1/2}),
\]
in $L^2(\mathbb{P})$; a first-order $Z$-centering that is automatic when the perturbation enters
additively, $\varepsilon(\theta_0) = \eta_0 + R(W)/\sqrt{n}$ with
$\mathbb{E}[\eta_0 \mid Z] = 0$, since then
$\mathbb{E}[\varepsilon(\theta_0) \mid Z] = \mathbb{E}[R(W) \mid Z]/\sqrt{n}$; it
gives $\mathbb{E}[\eta(\theta_0) \mid Z] = o(n^{-1/2})$ and is satisfied for the
$W$-measurable perturbations of Section~\ref{sec:sim-design}; the reduction in the proof of
Lemma~\ref{lem:sampleprojection} then applies with its idiosyncratic centering
holding at the perturbed residual by this clause, and gives,
\[
n\hat T_{np}^2(\theta_0) = \frac{1}{n}\sum_{i=1}^n\sum_{j=1}^n \left(\eta_i + \frac{R_i}{\sqrt{n}}\right) \tilde k_p(\mathsf{u}_i,\mathsf{u}_j) \left(\eta_j + \frac{R_j}{\sqrt{n}}\right) = \left\lVert G_{n,p} + \hat m_{R,p}\right\rVert^2,
\]
where $G_{n,p} = n^{-1/2}\sum_{i=1}^n \eta_i (\Pi k)(\mathsf{U}_i,\cdot)$ and $\hat m_{R,p} = n^{-1}\sum_{i=1}^n R_i (\Pi k)(\mathsf{U}_i,\cdot)$.

The $\mathcal{H}_k$-valued empirical process $G_{n,p}$ converges weakly to the centered Gaussian process $G_p$ with covariance operator $C_{\tilde p} = \mathbb{E}[\eta(\theta_0)^2 (\Pi k)(\mathsf{U},\cdot)\otimes(\Pi k)(\mathsf{U},\cdot)]$: finite-dimensional projections are i.i.d.\ sums, to which the Lindeberg--L\'evy central limit theorem applies under Assumption~\ref{Assumption: residual regularity}, and tightness holds because $\mathbb{E}\lVert G_{n,p}\rVert^2 = \mathbb{E}[\eta(\theta_0)^2\, \tilde k_p(\mathsf{U}, \mathsf{U})] < \infty$, the covariance operator being constant in $n$.

The nonzero eigenvalues of $C_{\tilde p}$ coincide with those of the integral operator with kernel $\eta(\theta_0)\,\eta(\theta_0)'\,\tilde k_p$ on $L^2(\mathbb{P}_{\mathsf{U}})$: define $B : L^2(\mathbb{P}_{\mathsf{U}}) \to \mathcal{H}_k$ by $Bf = \int f(\mathsf{u})\, \eta(\theta_0)(\mathsf{u})\, (\Pi k)(\mathsf{u},\cdot)\, d\mathbb{P}_{\mathsf{U}}(\mathsf{u})$, a bounded operator because $\mathbb{E}[\eta(\theta_0)^2\, \tilde k_p(\mathsf{U},\mathsf{U})] < \infty$; then $C_{\tilde p} = BB^\ast$, while $B^\ast B$ is the integral operator with kernel $\eta(\theta_0)\,\eta(\theta_0)'\,\tilde k_p$, and the nonzero spectra of $BB^\ast$ and $B^\ast B$ coincide, being the squared singular values of the compact operator $B$ (spectral theory of compact operators). The Karhunen--Lo\`eve expansion of $G_p$ therefore gives $\lVert G_p\rVert^2 = \sum_{j\ge 1}\mu_j A_j^2$ with $\{\mu_j\}$ these eigenvalues. Moreover,
\[
\varepsilon(\theta_0)\,\varepsilon(\theta_0)' - \eta(\theta_0)\,\eta(\theta_0)'
= \frac{\eta(\theta_0)\, R(W)' + R(W)\, \eta(\theta_0)'}{\sqrt{n}} + \frac{R(W)\, R(W)'}{n},
\]
so the integral operators with kernels $\tilde h_p$ and $\eta(\theta_0)\,\eta(\theta_0)'\,\tilde k_p$ differ in Hilbert--Schmidt norm by $O(n^{-1/2})$ under Assumption~\ref{Assumption: residual regularity} and $\mathbb{E}[R^4] < \infty$; Weyl-type perturbation bounds for Hilbert--Schmidt operators then imply that the eigenvalues of the $\eta\eta'\tilde k_p$-operator along $H_{1n}$ drift from those of part (i) by at most $O(n^{-1/2})$ in each coordinate, so the same symbols $\{\mu_j\}$ may be used at the precision of the stated limit.

By the law of large numbers for Bochner integrals, which uses
$\mathbb{E}[R(W)^2 \lVert (\Pi k)(\mathsf{U}, \cdot) \rVert^2] < \infty$ under
$\mathbb{E}[R^4] < \infty$ and Assumption~\ref{Assumption: residual regularity},
$\hat m_{R,p} \to_p m_{R,p} := \mathbb{E}[R(W) (\Pi k)(\mathsf{U}, \cdot)]$ in
$\mathcal{H}_k$, and $\lVert m_{R,p}\rVert^2 = \mathbb{E}[R(W)\, \tilde k_p(\mathsf{U}, \mathsf{U}')\, R(W)']$
by the reproducing property. For the cross term, with
$h_1^R(\mathsf{u}) = \mathbb{E}[R(W')\, \tilde k_p(\mathsf{U}', \mathsf{u})] = \big\langle (\Pi k)(\mathsf{u}, \cdot), m_{R,p} \big\rangle$,
\[
\left\langle G_{n,p}, \hat m_{R,p}\right\rangle
= \frac{1}{\sqrt{n}} \sum_{i=1}^n \eta_i\, h_1^R(\mathsf{U}_i) + O_p(n^{-1/2})
\; \xrightarrow{d} \; B_p :
\]
writing $d_i = n^{-1}\sum_j R(W_j)\, \tilde k_p(\mathsf{U}_j, \mathsf{u}_i) - h_1^R(\mathsf{u}_i)$ for the
deviation of the inner average, $\mathbb{E}[d_i^2 \mid \mathsf{U}_i] = O(n^{-1})$ by the
i.i.d.\ structure of the summands over $j$ and $\mathbb{E}[\eta_i\eta_\ell\, d_i d_\ell] = o(n^{-1})$
for $i \ne \ell$ (condition on $\{Z_\ell\}$ and use
$\mathbb{E}[\eta_i \mid Z_i] = o(n^{-1/2})$ under the displayed $Z$-centering,
the $d_i$ being functions of $\{Z_\ell\}$), while the conditional variance is
$n^{-1} \sum_i \mathbb{E}[\eta_i^2 \mid Z_i]\, d_i^2 = O_p(n^{-1})$, so the
remainder is $O_p(n^{-1/2})$ with mean $o(n^{-1/2})$; the leading term is an
i.i.d.\ sum of zero-mean variables with finite variance, and the
Lindeberg--L\'evy central limit theorem gives $B_p \sim \mathcal{N}(0, S_p^2)$ with variance
\[
S_p^2 = \mathrm{Var}\left(\eta(\theta_0)\, \mathbb{E}\big[ R(W')\, \tilde k_p(\mathsf{U}', \mathsf{U}) \big]\right),
\]
which is the variance of $B_p$ referred to in the statement, and
$B_p = \langle G_p, m_{R,p}\rangle$ in distribution. Finally, the estimation
effect is negligible under $H_{1n}$ by Lemma~\ref{lem:invariance}, whose
hypothesis holds because $\sqrt{n}\,\hat m(\theta_0) = G_{n,p} + \hat m_{R,p}
= O_p(1)$. The continuous mapping theorem
applied to $(G_{n,p}, \hat m_{R,p}) \mapsto \lVert G_{n,p} + \hat m_{R,p}\rVert^2$ yields
\[
n\hat T_{np}^2(\hat\theta_n) \xrightarrow{d} \sum_{j\ge 1}\mu_j A_j^2 + 2B_p + \mathbb{E}\left[R(W)\, \tilde k_p(\mathsf{U}, \mathsf{U}')\, R(W)'\right],
\]
with $B_p = \langle G_p, m_{R,p}\rangle \sim \mathcal{N}(0, S_p^2)$. This is the claimed limit. \qedhere
\end{proof}

\begin{proof}[Proof of Theorem \ref{Theorem: Bootstrap Consistency}]
Conditional on the data, the bootstrap statistic equals $\lVert G_n^\ast\rVert^2$ with $G_n^\ast = n^{-1/2}\sum_{i=1}^n \varepsilon_i(\hat\theta_n)V_i\,(\hat{\boldsymbol{\Pi}}\hat k)(\hat W_i,\cdot)$. Since $\mathbb{E}[V_i] = 0$ and $\mathbb{E}[V_i^2] = 1$, the multiplier central limit theorem in Hilbert space applies conditionally on the data once two conditions are verified: the conditional covariance operator converges,
\begin{equation}\label{eq:cov-lln}
\frac{1}{n}\sum_{i=1}^n \varepsilon_i(\hat\theta_n)^2\,(\hat{\boldsymbol{\Pi}}\hat k)(\hat W_i,\cdot)\otimes(\hat{\boldsymbol{\Pi}}\hat k)(\hat W_i,\cdot) \;\to_p\; C_{\tilde p} := \mathbb{E}\left[\varepsilon(\theta_0)^2\,(\Pi k)(\mathsf{U},\cdot)\otimes(\Pi k)(\mathsf{U},\cdot)\right],
\end{equation}
and no observation dominates,
\begin{equation}\label{eq:max-leverage}
\max_{1 \le i \le n} \frac{\varepsilon_i(\hat\theta_n)^2\, \lVert (\hat{\boldsymbol{\Pi}}\hat k)(\hat W_i, \cdot) \rVert^2}{n} \;\to_p\; 0 .
\end{equation}
It is this covariance-operator convergence, a law of large numbers with an
$O_p(n^{-\alpha})$ perturbation, that justifies evaluating the bootstrap at the
fixed $\hat\theta_n$; the invariance argument of Lemma~\ref{lem:invariance} is
not consumed here. To verify \eqref{eq:cov-lln}, replace the estimated residuals
and features by their population counterparts:
$\varepsilon_i(\hat\theta_n)^2 - \varepsilon_i(\theta_0)^2 = O_p(n^{-\alpha})$
uniformly in $i$ by the mean value theorem and
Assumption~\ref{Assumption: residual regularity}, and
$\hat\Phi_i - \Phi_i$ with $\Phi_i = (\Pi k)(\mathsf{U}_i, \cdot)$ admits the
idiosyncratic-plus-common decomposition of the proof of
Lemma~\ref{lem:sampleprojection}, with sizes $O_p(n^{-\alpha})$ and
$O_p(n^{-1/2})$ respectively; since $\lVert \Phi_i \rVert = O_p(1)$ uniformly
under the maintained moments, the Cauchy--Schwarz inequality bounds the resulting
drift of \eqref{eq:cov-lln} by $O_p(n^{-\alpha} + n^{-1/2})$, and the population
average converges to $C_{\tilde p}$ by the law of large numbers for Bochner
integrals, which uses
$\mathbb{E}[\varepsilon(\theta_0)^2 \lVert (\Pi k)(\mathsf{U}, \cdot) \rVert^2] < \infty$.
For the norm bounds, the triangle inequality followed by squaring,
$(a + b)^2 \le 2a^2 + 2b^2$, and the norm computations in the proof of
Lemma~\ref{lem:orthogonal} (namely $\lVert k(w, \cdot) \rVert^2 = k(w, w)$ and
$\lVert \tilde g(\mathsf{U})^\top \tilde\Gamma^{-1} \tilde b(\cdot) \rVert^2 =
\tilde g(\mathsf{U})^\top \tilde\Gamma^{-1} \tilde M \tilde\Gamma^{-1} \tilde g(\mathsf{U})$) give
\[
\lVert (\Pi k)(\mathsf{U}, \cdot) \rVert^2
\le 2 + 2\, \mathrm{tr}(\tilde\Gamma^{-1} \tilde M)\; \tilde g(\mathsf{U})^\top \tilde\Gamma^{-1} \tilde g(\mathsf{U}),
\]
where, with $k \ge 0$ and the elementary bound $k\, x^\top y \le \tfrac{1}{2}\, k(x^\top x + y^\top y)$ (from $2\sqrt{ab} \le a + b$),
\[
\mathrm{tr}(\tilde\Gamma^{-1} \tilde M)
= \mathbb{E}\big[ k(W, W')\, \tilde g(\mathsf{U}')^\top \tilde\Gamma^{-1} \tilde g(\mathsf{U}) \big]
\le \mathbb{E}\big[ \tilde g(\mathsf{U})^\top \tilde\Gamma^{-1} \tilde g(\mathsf{U}) \big]
= \mathrm{tr}(\tilde\Gamma^{-1} \tilde\Gamma) = 2K + L,
\]
using $k(w, w') \le 1$, which holds for the Gaussian kernel since the exponent
$-\lVert w - w'\rVert_2^2/\sigma \le 0$ (for a general bounded kernel, the bound becomes
$\mathrm{tr}(\tilde\Gamma^{-1} \tilde M) \le (2K+L) \sup_{w, w'} k(w, w')$). Hence
$\lVert (\Pi k)(\mathsf{U}, \cdot) \rVert^2 \le 2 + 2(2K+L)\, \tilde g(\mathsf{U})^\top \tilde\Gamma^{-1} \tilde g(\mathsf{U})$,
and $\mathbb{E}[\varepsilon(\theta_0)^4 \lVert (\Pi k)(\mathsf{U}, \cdot) \rVert^4] < \infty$
under Assumption~\ref{Assumption: residual regularity}, which also yields
\eqref{eq:max-leverage} by Markov's inequality applied to the $n$ summands.

With \eqref{eq:cov-lln} and \eqref{eq:max-leverage} in hand, the multiplier central limit theorem in Hilbert space yields, conditionally on the data and hence unconditionally, $G_n^\ast \Rightarrow G^\ast$, the centered Gaussian process with covariance operator $C_{\tilde p}$, so that $n\hat T_{np}^{2\ast}(\hat\theta_n) = \lVert G_n^\ast\rVert^2 \xrightarrow{d} \lVert G^\ast\rVert^2$ conditionally on the data. The nonzero eigenvalues of $C_{\tilde p}$ are the eigenvalues $\{\mu_j\}$ of $\tilde h_p$: $C_{\tilde p} = BB^\ast$ with $Bf = \int f(\mathsf{u})\, \varepsilon(\theta_0)(\mathsf{u}) (\Pi k)(\mathsf{u}, \cdot)\, d\mathbb{P}_{\mathsf{U}}(\mathsf{u})$ and $B^\ast B$ has kernel $\tilde h_p$, the same identification as in the proof of Theorem~\ref{thm:asymptotics}(iii). Hence, by the Karhunen--Lo\`eve expansion, conditionally on the data,
\[
n\hat T_{np}^{2\ast}(\hat\theta_n) \xrightarrow{d} \lVert G^\ast\rVert^2 = \sum_{j\ge 1}\mu_j A_j^2,
\]
with $\{A_j\}$ independent standard normal. This is the null limit of $n\hat T_{np}^2(\hat\theta_n)$ by Theorem~\ref{thm:asymptotics}(i). The conditional distribution functions therefore converge uniformly in probability to the distribution function $F$ of $\sum_{j\ge 1}\mu_j A_j^2$, which is continuous whenever at least one $\mu_j > 0$ (a weighted sum of
independent $\chi^2_1$ variables); quantile convergence then gives $c_{B,\alpha}^{\ast} \to_p F^{-1}(1-\alpha) = c_\alpha$. The technical tightness and uniform conditional-convergence steps follow from Theorem~4.4 of \citet{escanciano2024gaussian}, whose operative conditions hold here: the max-leverage condition \eqref{eq:max-leverage}, delivered by the fourth moments of Assumption~\ref{Assumption: residual regularity} and the uniform feature bound, and the conditional covariance-operator convergence \eqref{eq:cov-lln}.

Finally, the test is consistent against the alternatives of
Theorem~\ref{thm:asymptotics}(ii): there $n\hat T_{np}^2(\hat\theta_n) \to_p \infty$,
while the bootstrap quantile is bounded in probability, since conditional on the
data $\mathbb{E}[\lVert G_n^\ast \rVert^2 \mid \text{data}] = n^{-1}\sum_i
\varepsilon_i(\hat\theta_n)^2 \lVert \hat\Phi_i \rVert^2 = O_p(1)$ and Markov's
inequality applies to each bootstrap draw. Hence
$\Pr\big( n\hat T_{np}^2(\hat\theta_n) > c_{B,\alpha}^{\ast} \big) \to 1$.
\qedhere
\end{proof}

\newpage
\section{Additional Simulation Results}\label{supp:additional-sims}
The main text reports simulation results for the probit design (Design~N):
Table~\ref{main:tab:size} reports empirical size, and
Table~\ref{main:tab:power_main} compares power at the median  kernel scale.
The main-text complements these results with four scale-by-scale
power tables for the omnibus and directional statistics under $c_1=0$ and
$c_1=3$ (Tables~\ref{main:tab:power_N_c0}--\ref{main:tab:power_di_N_c3}).
This supplement reports the corresponding results for the linear design
(Design~L): Section~\ref{supp:linear} covers the omnibus statistics $T_{\sigma}^{XP}$, while
Section~\ref{supp:di-linear} covers both the directional statistics $T_{\sigma}^{P}$ and the
polynomial Wald benchmarks: $\Qpoly{2}$ and $\Qpoly{5}$ proposed by HSU.

\subsection{Omnibus statistics: linear design}\label{supp:linear}

Table~\ref{supp:tab:size_main_L} reports the empirical size of the omnibus
statistics in the linear design, with rows given by the nominal level $\alpha$
and the sample size $n$, and columns given by the covariate scale $\sigma_X$
crossed with the kernel scale $\sigma$. Rejection frequencies are close to nominal
at $\sigma_X \in \{0.5, 1\}$. At $\sigma_X = 2$, the fixed scales
$\sigma \in \{1,2\}$ are conservative (roughly $0.02$--$0.03$ at the $5\%$
level), whereas the median rule stays between $0.036$ and $0.049$.

With $c_1 = 0$, the median rule has a modest advantage at $\sigma_X = 0.5$. At
$\sigma_X = 1$, the median scale is close to $2$ and thus their performance
tends to be slightly better than that of $\sigma = 1$ for $S_1$--$S_3$,
although this pattern is not uniform across loading magnitudes and sample
sizes. At $\sigma_X = 2$, the median rule remains broadly competitive and is
often modestly better for $S_1$--$S_3$. For $S_4$ the picture reverses, and the
frequency trade-off of Remark~\ref{rem:kernel-choice} accounts for it.

With $c_1 = 3$, no Gaussian scale uniformly dominates, and the ranking depends
on the covariate scale and the sample size. At
$\sigma_X = 0.5$, the median scale is the more powerful choice in nearly every
cell. At $\sigma_X = 1$, the realised median scale is $2.56$, so the median
rule behaves much like $\sigma = 2$; the fixed Gaussian scale $\sigma = 1$ is
nevertheless the better choice in most cells. At $\sigma_X = 2$, the median scale remains competitive at
$n = 500$ and $1{,}000$ but is overtaken by the fixed scales at
$n = 2{,}000$. 

\begin{table}[!tbp]
\caption{Empirical size of the omnibus statistic $T^{X\!P}_\sigma$ across covariate scales.}
\label{supp:tab:size_main_L}
\centering
\singlespacing
\setlength{\tabcolsep}{3pt}
\footnotesize
\begin{threeparttable}
\begin{tabular*}{\textwidth}{@{\extracolsep{\fill}}ll*{9}{r}@{}}
\toprule
& & \multicolumn{3}{c}{$\sigma_X=0.5$} & \multicolumn{3}{c}{$\sigma_X=1$} & \multicolumn{3}{c}{$\sigma_X=2$}\\
\cmidrule(lr){3-5}\cmidrule(lr){6-8}\cmidrule(lr){9-11}
$\alpha$ & $n$ & $\mathrm{med}$ & $1$ & $2$ & $\mathrm{med}$ & $1$ & $2$ & $\mathrm{med}$ & $1$ & $2$\\
\midrule
\multicolumn{11}{c}{\textit{Panel A}: linear outcome}\\
\midrule
0.1 & 500 & 0.110 & 0.097 & 0.100 & 0.118 & 0.103 & 0.117 & 0.112 & 0.051 & 0.059\\
 & 1,000 & 0.115 & 0.110 & 0.104 & 0.116 & 0.118 & 0.113 & 0.109 & 0.058 & 0.072\\
 & 2,000 & 0.108 & 0.107 & 0.100 & 0.097 & 0.108 & 0.093 & 0.093 & 0.061 & 0.080\\
\addlinespace[2pt]
0.05 & 500 & 0.054 & 0.051 & 0.044 & 0.053 & 0.044 & 0.052 & 0.049 & 0.018 & 0.024\\
 & 1,000 & 0.052 & 0.054 & 0.054 & 0.061 & 0.059 & 0.059 & 0.048 & 0.023 & 0.030\\
 & 2,000 & 0.043 & 0.049 & 0.050 & 0.058 & 0.051 & 0.059 & 0.036 & 0.027 & 0.031\\
\addlinespace[2pt]
0.01 & 500 & 0.007 & 0.003 & 0.004 & 0.010 & 0.010 & 0.012 & 0.014 & 0.004 & 0.004\\
 & 1,000 & 0.010 & 0.007 & 0.010 & 0.009 & 0.006 & 0.007 & 0.007 & 0.002 & 0.002\\
 & 2,000 & 0.014 & 0.012 & 0.014 & 0.017 & 0.009 & 0.015 & 0.008 & 0.007 & 0.006\\
\bottomrule
\end{tabular*}
\begin{tablenotes}[flushleft]\scriptsize
\item \emph{Note}: Empirical size over $1{,}000$ replications with $B = 500$
multiplier-bootstrap draws; $K = 5$ and $c_1=c_2=0$. Columns
$\mathrm{med}$, $1$, and $2$ denote the Gaussian scale $\sigma$.
Monte Carlo standard errors are $0.009$, $0.007$, and $0.003$ at
$\alpha=0.10$, $0.05$, and $0.01$.
\end{tablenotes}
\end{threeparttable}
\end{table}

\begin{table}[!tbp]
\caption{Power of $T^{X\!P}_\sigma$ across covariate scales, linear outcome, $c_1=0$ (covariate-dependent heterogeneity only).}
\label{supp:tab:power_L_c0}
\centering
\singlespacing
\setlength{\tabcolsep}{3pt}
\footnotesize
\begin{threeparttable}
\begin{tabular*}{\textwidth}{@{\extracolsep{\fill}}ll*{9}{r}@{}}
\toprule
& & \multicolumn{3}{c}{$\sigma_X=0.5$} & \multicolumn{3}{c}{$\sigma_X=1$} & \multicolumn{3}{c}{$\sigma_X=2$}\\
\cmidrule(lr){3-5}\cmidrule(lr){6-8}\cmidrule(lr){9-11}
$c_2$ & $n$ & $\mathrm{med}$ & $1$ & $2$ & $\mathrm{med}$ & $1$ & $2$ & $\mathrm{med}$ & $1$ & $2$\\
\midrule
\multicolumn{11}{c}{\makebox[0pt][c]{\textit{Panel A}: $S_1\propto X$ }}\\
\midrule
0.5 & 500 & 0.061 & 0.054 & 0.049 & 0.087 & 0.075 & 0.086 & 0.103 & 0.023 & 0.044\\
 & 1,000 & 0.099 & 0.103 & 0.088 & 0.147 & 0.123 & 0.145 & 0.185 & 0.042 & 0.093\\
 & 2,000 & 0.131 & 0.124 & 0.110 & 0.247 & 0.219 & 0.251 & 0.329 & 0.085 & 0.217\\
\addlinespace[2pt]
1 & 500 & 0.079 & 0.080 & 0.068 & 0.110 & 0.086 & 0.108 & 0.145 & 0.018 & 0.068\\
 & 1,000 & 0.143 & 0.137 & 0.107 & 0.235 & 0.200 & 0.238 & 0.316 & 0.065 & 0.200\\
 & 2,000 & 0.233 & 0.208 & 0.173 & 0.421 & 0.421 & 0.443 & 0.588 & 0.283 & 0.585\\
\addlinespace[2pt]
2 & 500 & 0.097 & 0.094 & 0.076 & 0.109 & 0.082 & 0.114 & 0.153 & 0.011 & 0.050\\
 & 1,000 & 0.184 & 0.175 & 0.120 & 0.257 & 0.229 & 0.266 & 0.383 & 0.078 & 0.287\\
 & 2,000 & 0.300 & 0.267 & 0.208 & 0.461 & 0.479 & 0.479 & 0.677 & 0.428 & 0.779\\
\midrule
\multicolumn{11}{c}{\makebox[0pt][c]{\textit{Panel B}: $S_2\propto e^{X/2}-Ee^{X/2}$ }}\\
\addlinespace[1pt]
0.5 & 500 & 0.085 & 0.075 & 0.070 & 0.124 & 0.115 & 0.125 & 0.089 & 0.018 & 0.043\\
 & 1,000 & 0.146 & 0.141 & 0.106 & 0.224 & 0.229 & 0.234 & 0.150 & 0.025 & 0.079\\
 & 2,000 & 0.245 & 0.209 & 0.163 & 0.446 & 0.481 & 0.478 & 0.296 & 0.074 & 0.202\\
\addlinespace[2pt]
1 & 500 & 0.107 & 0.100 & 0.094 & 0.155 & 0.148 & 0.158 & 0.093 & 0.011 & 0.030\\
 & 1,000 & 0.224 & 0.190 & 0.144 & 0.318 & 0.363 & 0.346 & 0.190 & 0.018 & 0.111\\
 & 2,000 & 0.400 & 0.346 & 0.256 & 0.655 & 0.758 & 0.698 & 0.421 & 0.119 & 0.376\\
\addlinespace[2pt]
2 & 500 & 0.126 & 0.111 & 0.100 & 0.159 & 0.140 & 0.168 & 0.085 & 0.003 & 0.018\\
 & 1,000 & 0.258 & 0.218 & 0.151 & 0.352 & 0.397 & 0.388 & 0.206 & 0.016 & 0.091\\
 & 2,000 & 0.486 & 0.432 & 0.304 & 0.710 & 0.837 & 0.766 & 0.460 & 0.114 & 0.441\\
\midrule
\multicolumn{11}{c}{\makebox[0pt][c]{\textit{Panel C}: $S_3\propto X^2-\mathrm{Var}(X)$ }}\\
\addlinespace[1pt]
0.5 & 500 & 0.876 & 0.860 & 0.791 & 0.703 & 0.655 & 0.715 & 0.292 & 0.057 & 0.142\\
 & 1,000 & 0.992 & 0.987 & 0.960 & 0.923 & 0.932 & 0.940 & 0.516 & 0.146 & 0.343\\
 & 2,000 & 1.000 & 1.000 & 1.000 & 1.000 & 0.999 & 1.000 & 0.908 & 0.370 & 0.741\\
\addlinespace[2pt]
1 & 500 & 0.969 & 0.952 & 0.911 & 0.893 & 0.901 & 0.907 & 0.485 & 0.080 & 0.286\\
 & 1,000 & 1.000 & 0.998 & 0.993 & 0.989 & 0.993 & 0.992 & 0.782 & 0.315 & 0.702\\
 & 2,000 & 1.000 & 1.000 & 1.000 & 1.000 & 1.000 & 1.000 & 0.990 & 0.793 & 0.995\\
\addlinespace[2pt]
2 & 500 & 0.986 & 0.972 & 0.933 & 0.926 & 0.934 & 0.941 & 0.551 & 0.093 & 0.357\\
 & 1,000 & 1.000 & 0.999 & 0.996 & 0.994 & 0.995 & 0.995 & 0.864 & 0.389 & 0.832\\
 & 2,000 & 1.000 & 1.000 & 1.000 & 1.000 & 1.000 & 1.000 & 0.994 & 0.916 & 0.999\\
\midrule
\multicolumn{11}{c}{\makebox[0pt][c]{\textit{Panel D}: $S_4\propto\sin(2X/\sigma_X)$ }}\\
\addlinespace[1pt]
0.5 & 500 & 0.203 & 0.162 & 0.108 & 0.223 & 0.260 & 0.244 & 0.127 & 0.051 & 0.099\\
 & 1,000 & 0.492 & 0.366 & 0.199 & 0.493 & 0.637 & 0.528 & 0.264 & 0.141 & 0.278\\
 & 2,000 & 0.895 & 0.790 & 0.452 & 0.861 & 0.971 & 0.911 & 0.488 & 0.429 & 0.648\\
\addlinespace[2pt]
1 & 500 & 0.431 & 0.320 & 0.180 & 0.424 & 0.531 & 0.464 & 0.224 & 0.086 & 0.233\\
 & 1,000 & 0.850 & 0.721 & 0.420 & 0.832 & 0.961 & 0.889 & 0.522 & 0.414 & 0.703\\
 & 2,000 & 0.997 & 0.991 & 0.828 & 0.997 & 1.000 & 0.999 & 0.870 & 0.939 & 0.990\\
\addlinespace[2pt]
2 & 500 & 0.558 & 0.444 & 0.225 & 0.525 & 0.657 & 0.577 & 0.274 & 0.114 & 0.305\\
 & 1,000 & 0.942 & 0.860 & 0.552 & 0.932 & 0.987 & 0.960 & 0.649 & 0.612 & 0.859\\
 & 2,000 & 1.000 & 0.998 & 0.929 & 0.999 & 1.000 & 0.999 & 0.967 & 0.997 & 1.000\\
\bottomrule
\end{tabular*}
\begin{tablenotes}[flushleft]\scriptsize
\item \emph{Note}: Rejection frequencies at the $5\%$ level over $1{,}000$
replications with $B = 500$ multiplier-bootstrap draws; $K = 5$. The
heterogeneity term is $\eta=(c_1+c_2S_j(X))U$ with $\mathrm{Var}(S_j)=5$.
Columns $\mathrm{med}$, $1$, and $2$ denote $\sigma$; their realised median
scales are $0.73$, $2.56$, and $10.92$. 
\end{tablenotes}
\end{threeparttable}
\end{table}

\begin{table}[!tbp]
\caption{Power of $T^{X\!P}_\sigma$ across covariate scales, linear outcome, $c_1=3$ (constant plus covariate-dependent heterogeneity).}
\label{supp:tab:power_L_c3}
\centering
\singlespacing
\setlength{\tabcolsep}{3pt}
\footnotesize
\begin{threeparttable}
\begin{tabular*}{\textwidth}{@{\extracolsep{\fill}}ll*{9}{r}@{}}
\toprule
& & \multicolumn{3}{c}{$\sigma_X=0.5$} & \multicolumn{3}{c}{$\sigma_X=1$} & \multicolumn{3}{c}{$\sigma_X=2$}\\
\cmidrule(lr){3-5}\cmidrule(lr){6-8}\cmidrule(lr){9-11}
$c_2$ & $n$ & $\mathrm{med}$ & $1$ & $2$ & $\mathrm{med}$ & $1$ & $2$ & $\mathrm{med}$ & $1$ & $2$\\
\midrule
\multicolumn{11}{c}{\makebox[0pt][c]{\textit{Panel A}: $S_1\propto X$ }}\\
\midrule
0.5 & 500 & 0.154 & 0.139 & 0.117 & 0.064 & 0.055 & 0.058 & 0.055 & 0.060 & 0.042\\
 & 1,000 & 0.370 & 0.319 & 0.254 & 0.101 & 0.160 & 0.107 & 0.075 & 0.228 & 0.141\\
 & 2,000 & 0.641 & 0.526 & 0.398 & 0.149 & 0.369 & 0.176 & 0.125 & 0.633 & 0.437\\
\addlinespace[2pt]
1 & 500 & 0.118 & 0.117 & 0.108 & 0.068 & 0.048 & 0.062 & 0.073 & 0.017 & 0.026\\
 & 1,000 & 0.282 & 0.248 & 0.212 & 0.120 & 0.150 & 0.128 & 0.118 & 0.106 & 0.126\\
 & 2,000 & 0.479 & 0.411 & 0.331 & 0.205 & 0.340 & 0.224 & 0.224 & 0.435 & 0.463\\
\addlinespace[2pt]
2 & 500 & 0.108 & 0.110 & 0.101 & 0.089 & 0.060 & 0.090 & 0.091 & 0.004 & 0.020\\
 & 1,000 & 0.227 & 0.216 & 0.183 & 0.170 & 0.162 & 0.172 & 0.205 & 0.036 & 0.128\\
 & 2,000 & 0.375 & 0.344 & 0.283 & 0.318 & 0.376 & 0.335 & 0.395 & 0.288 & 0.560\\
\midrule
\multicolumn{11}{c}{\makebox[0pt][c]{\textit{Panel B}: $S_2\propto e^{X/2}-Ee^{X/2}$ }}\\
\addlinespace[1pt]
0.5 & 500 & 0.109 & 0.093 & 0.081 & 0.053 & 0.064 & 0.051 & 0.046 & 0.071 & 0.049\\
 & 1,000 & 0.255 & 0.198 & 0.145 & 0.069 & 0.199 & 0.085 & 0.056 & 0.274 & 0.185\\
 & 2,000 & 0.495 & 0.363 & 0.228 & 0.142 & 0.509 & 0.187 & 0.091 & 0.681 & 0.555\\
\addlinespace[2pt]
1 & 500 & 0.068 & 0.063 & 0.065 & 0.050 & 0.062 & 0.054 & 0.037 & 0.024 & 0.024\\
 & 1,000 & 0.160 & 0.130 & 0.101 & 0.106 & 0.221 & 0.137 & 0.073 & 0.128 & 0.128\\
 & 2,000 & 0.326 & 0.237 & 0.151 & 0.262 & 0.607 & 0.321 & 0.156 & 0.488 & 0.493\\
\addlinespace[2pt]
2 & 500 & 0.069 & 0.065 & 0.062 & 0.081 & 0.066 & 0.075 & 0.049 & 0.004 & 0.014\\
 & 1,000 & 0.153 & 0.122 & 0.088 & 0.182 & 0.266 & 0.210 & 0.108 & 0.035 & 0.092\\
 & 2,000 & 0.290 & 0.237 & 0.153 & 0.451 & 0.719 & 0.533 & 0.254 & 0.238 & 0.485\\
\midrule
\multicolumn{11}{c}{\makebox[0pt][c]{\textit{Panel C}: $S_3\propto X^2-\mathrm{Var}(X)$ }}\\
\addlinespace[1pt]
0.5 & 500 & 0.242 & 0.203 & 0.161 & 0.191 & 0.332 & 0.212 & 0.093 & 0.166 & 0.154\\
 & 1,000 & 0.585 & 0.470 & 0.300 & 0.421 & 0.729 & 0.496 & 0.152 & 0.523 & 0.485\\
 & 2,000 & 0.961 & 0.903 & 0.674 & 0.823 & 0.990 & 0.914 & 0.335 & 0.907 & 0.898\\
\addlinespace[2pt]
1 & 500 & 0.662 & 0.599 & 0.470 & 0.485 & 0.626 & 0.537 & 0.191 & 0.142 & 0.198\\
 & 1,000 & 0.947 & 0.895 & 0.756 & 0.850 & 0.968 & 0.898 & 0.384 & 0.525 & 0.690\\
 & 2,000 & 0.998 & 0.998 & 0.981 & 0.996 & 1.000 & 1.000 & 0.819 & 0.950 & 0.994\\
\addlinespace[2pt]
2 & 500 & 0.899 & 0.865 & 0.756 & 0.757 & 0.842 & 0.803 & 0.338 & 0.115 & 0.281\\
 & 1,000 & 0.991 & 0.988 & 0.951 & 0.966 & 0.991 & 0.986 & 0.654 & 0.516 & 0.807\\
 & 2,000 & 1.000 & 1.000 & 0.999 & 1.000 & 1.000 & 1.000 & 0.969 & 0.957 & 0.999\\
\midrule
\multicolumn{11}{c}{\makebox[0pt][c]{\textit{Panel D}: $S_4\propto\sin(2X/\sigma_X)$ }}\\
\addlinespace[1pt]
0.5 & 500 & 0.252 & 0.228 & 0.195 & 0.099 & 0.131 & 0.105 & 0.060 & 0.102 & 0.073\\
 & 1,000 & 0.555 & 0.476 & 0.383 & 0.207 & 0.389 & 0.235 & 0.110 & 0.377 & 0.263\\
 & 2,000 & 0.888 & 0.806 & 0.633 & 0.419 & 0.802 & 0.484 & 0.176 & 0.849 & 0.688\\
\addlinespace[2pt]
1 & 500 & 0.306 & 0.273 & 0.222 & 0.182 & 0.241 & 0.198 & 0.091 & 0.075 & 0.090\\
 & 1,000 & 0.705 & 0.604 & 0.450 & 0.429 & 0.678 & 0.492 & 0.207 & 0.386 & 0.408\\
 & 2,000 & 0.979 & 0.922 & 0.765 & 0.812 & 0.986 & 0.885 & 0.423 & 0.929 & 0.905\\
\addlinespace[2pt]
2 & 500 & 0.449 & 0.373 & 0.259 & 0.335 & 0.448 & 0.367 & 0.157 & 0.063 & 0.158\\
 & 1,000 & 0.861 & 0.767 & 0.535 & 0.744 & 0.926 & 0.801 & 0.415 & 0.476 & 0.675\\
 & 2,000 & 0.999 & 0.990 & 0.876 & 0.991 & 1.000 & 0.995 & 0.778 & 0.984 & 0.995\\
\bottomrule
\end{tabular*}
\begin{tablenotes}[flushleft]\scriptsize
\item \emph{Note}: Design, notation, and scale values are as in
Table~\ref{supp:tab:power_L_c0}. 
\end{tablenotes}
\end{threeparttable}
\end{table}

\clearpage

\subsection{Directional statistics: linear design}\label{supp:di-linear}
With $c_1 = 0$, the deviation depends on covariates, whereas the directional statistics condition on $P$ alone and can detect only a component proportional to $\mathbb{E}[S_j(X)\mid P]$. Power is therefore limited in most cells,
although the median-scale kernel has meaningful power in some
cases. The clearest power gains come from the
higher-degree benchmark $\Qpoly{5}$: for example, at $\sigma_X=1$, $c_2=2$, and
$n=2{,}000$, it rejects with probability $0.750$ against $S_1$ and $0.661$
against $S_2$, compared with $0.178$ and $0.153$ for
$\TP{\mathrm{med}}$. This pattern suggests that the richer polynomial basis
can capture features of the projected departure that the directional kernel
statistics often miss.

With $c_1 = 3$, the departure is dominated by a component that depends on
$P$ alone, making this a more favorable setting for the directional
statistics. The three Gaussian scales often give similar power, so their
choice makes little difference in many cells. The polynomial benchmarks
frequently match or exceed the kernel statistics, though the gaps are often
modest. One clearer exception is $S_3$ at $\sigma_X=2$, $c_2=2$, and $n=2{,}000$,
where $\Qpoly{2}$ rejects with probability $0.528$, against $0.062$ for
$\TP{\mathrm{med}}$.

\begin{table}[!tbp]
\caption{Empirical size of the directional statistic $T^{P}_\sigma$ across covariate scales, linear outcome.}
\label{supp:tab:size_di_L}
\centering
\singlespacing
\setlength{\tabcolsep}{0pt}
\footnotesize
\begin{threeparttable}
\begin{tabular*}{\textwidth}{@{\extracolsep{\fill}}ll*{3}{r}@{\hspace{3pt}}*{2}{r}@{\hspace{6pt}}*{3}{r}@{\hspace{3pt}}*{2}{r}@{\hspace{6pt}}*{3}{r}@{\hspace{3pt}}*{2}{r}@{}}
\toprule
& & \multicolumn{5}{c}{$\sigma_X=0.5$} & \multicolumn{5}{c}{$\sigma_X=1$} & \multicolumn{5}{c}{$\sigma_X=2$}\\
\cmidrule(lr){3-7}\cmidrule(lr){8-12}\cmidrule(lr){13-17}
$\alpha$ & $n$ & $\mathrm{med}$ & $1$ & $2$ & $Q_{2}$ & $Q_{5}$ & $\mathrm{med}$ & $1$ & $2$ & $Q_{2}$ & $Q_{5}$ & $\mathrm{med}$ & $1$ & $2$ & $Q_{2}$ & $Q_{5}$\\
\midrule
0.1 & 500 & 0.092 & 0.101 & 0.099 & 0.081 & 0.085 & 0.094 & 0.095 & 0.093 & 0.084 & 0.073 & 0.112 & 0.103 & 0.108 & 0.072 & 0.088\\
 & 1,000 & 0.110 & 0.104 & 0.106 & 0.098 & 0.071 & 0.094 & 0.095 & 0.092 & 0.089 & 0.096 & 0.105 & 0.102 & 0.103 & 0.068 & 0.079\\
 & 2,000 & 0.096 & 0.104 & 0.103 & 0.089 & 0.081 & 0.098 & 0.106 & 0.106 & 0.077 & 0.079 & 0.107 & 0.109 & 0.108 & 0.079 & 0.105\\
\addlinespace[2pt]
0.05 & 500 & 0.043 & 0.048 & 0.049 & 0.044 & 0.042 & 0.041 & 0.048 & 0.048 & 0.036 & 0.034 & 0.056 & 0.064 & 0.060 & 0.030 & 0.042\\
 & 1,000 & 0.047 & 0.063 & 0.063 & 0.045 & 0.037 & 0.047 & 0.054 & 0.053 & 0.045 & 0.036 & 0.047 & 0.048 & 0.049 & 0.032 & 0.036\\
 & 2,000 & 0.050 & 0.051 & 0.051 & 0.050 & 0.041 & 0.050 & 0.041 & 0.043 & 0.039 & 0.036 & 0.052 & 0.054 & 0.053 & 0.033 & 0.060\\
\addlinespace[2pt]
0.01 & 500 & 0.006 & 0.008 & 0.007 & 0.008 & 0.011 & 0.014 & 0.011 & 0.012 & 0.007 & 0.003 & 0.011 & 0.010 & 0.010 & 0.004 & 0.006\\
 & 1,000 & 0.013 & 0.011 & 0.012 & 0.009 & 0.007 & 0.007 & 0.011 & 0.012 & 0.009 & 0.004 & 0.012 & 0.007 & 0.007 & 0.003 & 0.008\\
 & 2,000 & 0.009 & 0.009 & 0.008 & 0.005 & 0.006 & 0.010 & 0.008 & 0.008 & 0.012 & 0.004 & 0.009 & 0.009 & 0.009 & 0.001 & 0.013\\
\bottomrule
\end{tabular*}
\begin{tablenotes}[flushleft]\scriptsize
\item \emph{Note}: Empirical size over $1{,}000$ replications with $B = 500$
multiplier-bootstrap draws; $K = 5$ and $c_1=c_2=0$. Columns
$\mathrm{med}$, $1$, and $2$ denote the Gaussian scale $\sigma$;
$Q_2$ and $Q_5$ are HSU polynomial Wald benchmarks.
\end{tablenotes}
\end{threeparttable}
\end{table}

\begin{table}[!tbp]
\caption{Power of $T^{P}_\sigma$ across covariate scales, linear outcome, $c_1=0$ (covariate-dependent heterogeneity only).}
\label{supp:tab:power_di_L_c0}
\centering
\singlespacing
\setlength{\tabcolsep}{0pt}
\footnotesize
\begin{threeparttable}
\begin{tabular*}{\textwidth}{@{\extracolsep{\fill}}ll*{3}{r}@{\hspace{3pt}}*{2}{r}@{\hspace{6pt}}*{3}{r}@{\hspace{3pt}}*{2}{r}@{\hspace{6pt}}*{3}{r}@{\hspace{3pt}}*{2}{r}@{}}
\toprule
& & \multicolumn{5}{c}{$\sigma_X=0.5$} & \multicolumn{5}{c}{$\sigma_X=1$} & \multicolumn{5}{c}{$\sigma_X=2$}\\
\cmidrule(lr){3-7}\cmidrule(lr){8-12}\cmidrule(lr){13-17}
$c_2$ & $n$ & $\mathrm{med}$ & $1$ & $2$ & $Q_{2}$ & $Q_{5}$ & $\mathrm{med}$ & $1$ & $2$ & $Q_{2}$ & $Q_{5}$ & $\mathrm{med}$ & $1$ & $2$ & $Q_{2}$ & $Q_{5}$\\
\midrule
\multicolumn{17}{c}{\makebox[0pt][c]{\textit{Panel A}: $S_1\propto X$ }}\\
\midrule
0.5 & 500 & 0.054 & 0.054 & 0.053 & 0.044 & 0.075 & 0.067 & 0.051 & 0.053 & 0.047 & 0.111 & 0.079 & 0.050 & 0.047 & 0.041 & 0.086\\
 & 1,000 & 0.055 & 0.056 & 0.054 & 0.048 & 0.096 & 0.091 & 0.050 & 0.048 & 0.045 & 0.174 & 0.091 & 0.051 & 0.052 & 0.045 & 0.123\\
 & 2,000 & 0.056 & 0.045 & 0.051 & 0.041 & 0.144 & 0.110 & 0.057 & 0.050 & 0.041 & 0.319 & 0.156 & 0.071 & 0.062 & 0.042 & 0.213\\
\addlinespace[2pt]
1 & 500 & 0.054 & 0.050 & 0.049 & 0.048 & 0.132 & 0.084 & 0.054 & 0.054 & 0.033 & 0.197 & 0.110 & 0.059 & 0.055 & 0.041 & 0.141\\
 & 1,000 & 0.051 & 0.051 & 0.051 & 0.064 & 0.170 & 0.109 & 0.066 & 0.058 & 0.042 & 0.307 & 0.172 & 0.065 & 0.056 & 0.046 & 0.253\\
 & 2,000 & 0.057 & 0.060 & 0.059 & 0.053 & 0.290 & 0.179 & 0.066 & 0.060 & 0.039 & 0.607 & 0.326 & 0.102 & 0.079 & 0.050 & 0.544\\
\addlinespace[2pt]
2 & 500 & 0.046 & 0.047 & 0.047 & 0.050 & 0.185 & 0.094 & 0.064 & 0.060 & 0.045 & 0.294 & 0.148 & 0.063 & 0.057 & 0.049 & 0.202\\
 & 1,000 & 0.055 & 0.053 & 0.050 & 0.059 & 0.239 & 0.112 & 0.082 & 0.078 & 0.048 & 0.498 & 0.268 & 0.077 & 0.063 & 0.043 & 0.384\\
 & 2,000 & 0.066 & 0.065 & 0.066 & 0.062 & 0.395 & 0.178 & 0.080 & 0.073 & 0.041 & 0.750 & 0.501 & 0.118 & 0.076 & 0.050 & 0.744\\
\midrule
\multicolumn{17}{c}{\makebox[0pt][c]{\textit{Panel B}: $S_2\propto e^{X/2}-Ee^{X/2}$ }}\\
\addlinespace[1pt]
0.5 & 500 & 0.052 & 0.055 & 0.055 & 0.048 & 0.079 & 0.069 & 0.060 & 0.062 & 0.040 & 0.086 & 0.071 & 0.066 & 0.067 & 0.071 & 0.069\\
 & 1,000 & 0.058 & 0.054 & 0.055 & 0.036 & 0.096 & 0.085 & 0.055 & 0.052 & 0.048 & 0.153 & 0.082 & 0.076 & 0.076 & 0.062 & 0.085\\
 & 2,000 & 0.057 & 0.049 & 0.049 & 0.036 & 0.135 & 0.119 & 0.058 & 0.055 & 0.044 & 0.259 & 0.100 & 0.091 & 0.092 & 0.068 & 0.127\\
\addlinespace[2pt]
1 & 500 & 0.054 & 0.054 & 0.052 & 0.060 & 0.122 & 0.081 & 0.057 & 0.056 & 0.052 & 0.206 & 0.085 & 0.081 & 0.078 & 0.089 & 0.100\\
 & 1,000 & 0.059 & 0.060 & 0.060 & 0.050 & 0.183 & 0.102 & 0.067 & 0.061 & 0.066 & 0.337 & 0.125 & 0.102 & 0.099 & 0.101 & 0.137\\
 & 2,000 & 0.058 & 0.058 & 0.060 & 0.055 & 0.299 & 0.157 & 0.068 & 0.061 & 0.055 & 0.540 & 0.180 & 0.138 & 0.131 & 0.090 & 0.223\\
\addlinespace[2pt]
2 & 500 & 0.048 & 0.045 & 0.046 & 0.051 & 0.162 & 0.076 & 0.058 & 0.059 & 0.039 & 0.229 & 0.109 & 0.097 & 0.096 & 0.098 & 0.101\\
 & 1,000 & 0.061 & 0.053 & 0.049 & 0.049 & 0.273 & 0.101 & 0.071 & 0.068 & 0.049 & 0.430 & 0.164 & 0.125 & 0.116 & 0.123 & 0.188\\
 & 2,000 & 0.062 & 0.071 & 0.073 & 0.053 & 0.408 & 0.153 & 0.078 & 0.073 & 0.058 & 0.661 & 0.247 & 0.165 & 0.154 & 0.140 & 0.355\\
\midrule
\multicolumn{17}{c}{\makebox[0pt][c]{\textit{Panel C}: $S_3\propto X^2-\mathrm{Var}(X)$ }}\\
\addlinespace[1pt]
0.5 & 500 & 0.062 & 0.057 & 0.060 & 0.042 & 0.042 & 0.065 & 0.057 & 0.059 & 0.053 & 0.043 & 0.081 & 0.093 & 0.092 & 0.053 & 0.067\\
 & 1,000 & 0.063 & 0.073 & 0.073 & 0.048 & 0.058 & 0.065 & 0.072 & 0.072 & 0.049 & 0.049 & 0.111 & 0.130 & 0.128 & 0.072 & 0.064\\
 & 2,000 & 0.077 & 0.064 & 0.066 & 0.066 & 0.064 & 0.093 & 0.081 & 0.080 & 0.036 & 0.060 & 0.173 & 0.219 & 0.220 & 0.072 & 0.057\\
\addlinespace[2pt]
1 & 500 & 0.075 & 0.063 & 0.064 & 0.064 & 0.071 & 0.075 & 0.070 & 0.069 & 0.058 & 0.079 & 0.128 & 0.138 & 0.140 & 0.081 & 0.072\\
 & 1,000 & 0.079 & 0.096 & 0.095 & 0.081 & 0.067 & 0.085 & 0.099 & 0.094 & 0.069 & 0.100 & 0.194 & 0.219 & 0.220 & 0.087 & 0.073\\
 & 2,000 & 0.101 & 0.096 & 0.098 & 0.100 & 0.106 & 0.123 & 0.108 & 0.106 & 0.083 & 0.120 & 0.333 & 0.387 & 0.386 & 0.107 & 0.094\\
\addlinespace[2pt]
2 & 500 & 0.082 & 0.083 & 0.082 & 0.051 & 0.076 & 0.085 & 0.090 & 0.092 & 0.075 & 0.088 & 0.172 & 0.188 & 0.193 & 0.090 & 0.067\\
 & 1,000 & 0.090 & 0.120 & 0.118 & 0.081 & 0.100 & 0.099 & 0.111 & 0.109 & 0.079 & 0.109 & 0.268 & 0.295 & 0.295 & 0.107 & 0.092\\
 & 2,000 & 0.130 & 0.151 & 0.153 & 0.116 & 0.109 & 0.144 & 0.141 & 0.137 & 0.099 & 0.166 & 0.447 & 0.492 & 0.494 & 0.146 & 0.119\\
\midrule
\multicolumn{17}{c}{\makebox[0pt][c]{\textit{Panel D}: $S_4\propto\sin(2X/\sigma_X)$ }}\\
\addlinespace[1pt]
0.5 & 500 & 0.058 & 0.060 & 0.059 & 0.051 & 0.057 & 0.060 & 0.056 & 0.055 & 0.044 & 0.050 & 0.081 & 0.058 & 0.059 & 0.055 & 0.070\\
 & 1,000 & 0.057 & 0.060 & 0.061 & 0.057 & 0.043 & 0.074 & 0.056 & 0.052 & 0.041 & 0.078 & 0.074 & 0.062 & 0.059 & 0.046 & 0.103\\
 & 2,000 & 0.044 & 0.050 & 0.048 & 0.039 & 0.052 & 0.078 & 0.058 & 0.053 & 0.043 & 0.094 & 0.111 & 0.068 & 0.064 & 0.029 & 0.160\\
\addlinespace[2pt]
1 & 500 & 0.063 & 0.064 & 0.063 & 0.038 & 0.056 & 0.063 & 0.063 & 0.060 & 0.045 & 0.083 & 0.095 & 0.056 & 0.048 & 0.047 & 0.082\\
 & 1,000 & 0.066 & 0.069 & 0.069 & 0.051 & 0.072 & 0.095 & 0.062 & 0.060 & 0.037 & 0.120 & 0.138 & 0.072 & 0.060 & 0.038 & 0.191\\
 & 2,000 & 0.048 & 0.055 & 0.056 & 0.044 & 0.073 & 0.115 & 0.063 & 0.061 & 0.039 & 0.229 & 0.224 & 0.083 & 0.072 & 0.040 & 0.316\\
\addlinespace[2pt]
2 & 500 & 0.061 & 0.061 & 0.062 & 0.051 & 0.059 & 0.072 & 0.059 & 0.060 & 0.041 & 0.099 & 0.110 & 0.053 & 0.050 & 0.037 & 0.125\\
 & 1,000 & 0.065 & 0.064 & 0.064 & 0.049 & 0.064 & 0.105 & 0.058 & 0.056 & 0.053 & 0.165 & 0.199 & 0.082 & 0.075 & 0.047 & 0.255\\
 & 2,000 & 0.052 & 0.054 & 0.053 & 0.043 & 0.092 & 0.134 & 0.072 & 0.067 & 0.040 & 0.303 & 0.311 & 0.093 & 0.082 & 0.041 & 0.470\\
\bottomrule
\end{tabular*}
\begin{tablenotes}[flushleft]\scriptsize
\item \emph{Note}: Rejection frequencies at the $5\%$ level over $1{,}000$
replications with $B = 500$ multiplier-bootstrap draws; $K = 5$. The
heterogeneity term is $\eta=(c_1+c_2S_j(X))U$ with $\mathrm{Var}(S_j)=5$.
Columns $\mathrm{med}$, $1$, and $2$ denote $\sigma$; their realised median
scales are $0.73$, $2.56$, and $10.92$. $Q_2$ and $Q_5$ are HSU polynomial
Wald benchmarks. 
\end{tablenotes}
\end{threeparttable}
\end{table}

\begin{table}[!tbp]
\caption{Power of $T^{P}_\sigma$ across covariate scales, linear outcome, $c_1=3$ (constant plus covariate-dependent heterogeneity).}
\label{supp:tab:power_di_L_c3}
\centering
\singlespacing
\setlength{\tabcolsep}{0pt}
\footnotesize
\begin{threeparttable}
\begin{tabular*}{\textwidth}{@{\extracolsep{\fill}}ll*{3}{r}@{\hspace{3pt}}*{2}{r}@{\hspace{6pt}}*{3}{r}@{\hspace{3pt}}*{2}{r}@{\hspace{6pt}}*{3}{r}@{\hspace{3pt}}*{2}{r}@{}}
\toprule
& & \multicolumn{5}{c}{$\sigma_X=0.5$} & \multicolumn{5}{c}{$\sigma_X=1$} & \multicolumn{5}{c}{$\sigma_X=2$}\\
\cmidrule(lr){3-7}\cmidrule(lr){8-12}\cmidrule(lr){13-17}
$c_2$ & $n$ & $\mathrm{med}$ & $1$ & $2$ & $Q_{2}$ & $Q_{5}$ & $\mathrm{med}$ & $1$ & $2$ & $Q_{2}$ & $Q_{5}$ & $\mathrm{med}$ & $1$ & $2$ & $Q_{2}$ & $Q_{5}$\\
\midrule
\multicolumn{17}{c}{\makebox[0pt][c]{\textit{Panel A}: $S_1\propto X$ }}\\
\midrule
0.5 & 500 & 0.510 & 0.612 & 0.615 & 0.686 & 0.530 & 0.444 & 0.538 & 0.539 & 0.606 & 0.456 & 0.436 & 0.516 & 0.522 & 0.502 & 0.336\\
 & 1,000 & 0.831 & 0.891 & 0.892 & 0.927 & 0.822 & 0.720 & 0.805 & 0.804 & 0.865 & 0.747 & 0.691 & 0.780 & 0.781 & 0.812 & 0.610\\
 & 2,000 & 0.982 & 0.993 & 0.993 & 1.000 & 0.983 & 0.952 & 0.979 & 0.980 & 0.991 & 0.964 & 0.945 & 0.969 & 0.970 & 0.984 & 0.924\\
\addlinespace[2pt]
1 & 500 & 0.392 & 0.485 & 0.485 & 0.562 & 0.450 & 0.344 & 0.406 & 0.411 & 0.478 & 0.407 & 0.373 & 0.398 & 0.394 & 0.404 & 0.301\\
 & 1,000 & 0.695 & 0.765 & 0.763 & 0.845 & 0.694 & 0.599 & 0.671 & 0.667 & 0.761 & 0.675 & 0.597 & 0.641 & 0.639 & 0.664 & 0.536\\
 & 2,000 & 0.935 & 0.960 & 0.960 & 0.981 & 0.932 & 0.882 & 0.906 & 0.903 & 0.965 & 0.924 & 0.878 & 0.903 & 0.897 & 0.915 & 0.856\\
\addlinespace[2pt]
2 & 500 & 0.228 & 0.281 & 0.283 & 0.306 & 0.325 & 0.226 & 0.232 & 0.225 & 0.299 & 0.365 & 0.268 & 0.251 & 0.245 & 0.215 & 0.237\\
 & 1,000 & 0.403 & 0.465 & 0.460 & 0.542 & 0.496 & 0.379 & 0.379 & 0.372 & 0.495 & 0.590 & 0.462 & 0.401 & 0.386 & 0.419 & 0.477\\
 & 2,000 & 0.658 & 0.749 & 0.740 & 0.831 & 0.767 & 0.622 & 0.605 & 0.593 & 0.774 & 0.837 & 0.761 & 0.664 & 0.642 & 0.684 & 0.800\\
\midrule
\multicolumn{17}{c}{\makebox[0pt][c]{\textit{Panel B}: $S_2\propto e^{X/2}-Ee^{X/2}$ }}\\
\addlinespace[1pt]
0.5 & 500 & 0.505 & 0.585 & 0.588 & 0.694 & 0.515 & 0.422 & 0.533 & 0.536 & 0.610 & 0.454 & 0.351 & 0.451 & 0.451 & 0.519 & 0.317\\
 & 1,000 & 0.817 & 0.877 & 0.878 & 0.935 & 0.811 & 0.696 & 0.789 & 0.791 & 0.889 & 0.750 & 0.587 & 0.685 & 0.692 & 0.756 & 0.569\\
 & 2,000 & 0.981 & 0.992 & 0.992 & 0.996 & 0.991 & 0.937 & 0.971 & 0.970 & 0.996 & 0.973 & 0.855 & 0.915 & 0.918 & 0.964 & 0.871\\
\addlinespace[2pt]
1 & 500 & 0.384 & 0.479 & 0.484 & 0.583 & 0.467 & 0.300 & 0.389 & 0.393 & 0.491 & 0.430 & 0.240 & 0.280 & 0.285 & 0.380 & 0.261\\
 & 1,000 & 0.669 & 0.741 & 0.738 & 0.841 & 0.704 & 0.532 & 0.643 & 0.640 & 0.779 & 0.685 & 0.385 & 0.453 & 0.454 & 0.567 & 0.419\\
 & 2,000 & 0.928 & 0.954 & 0.954 & 0.983 & 0.935 & 0.812 & 0.879 & 0.877 & 0.971 & 0.930 & 0.632 & 0.694 & 0.696 & 0.856 & 0.734\\
\addlinespace[2pt]
2 & 500 & 0.215 & 0.277 & 0.275 & 0.340 & 0.343 & 0.171 & 0.201 & 0.202 & 0.276 & 0.331 & 0.115 & 0.131 & 0.131 & 0.234 & 0.189\\
 & 1,000 & 0.374 & 0.454 & 0.452 & 0.590 & 0.516 & 0.279 & 0.334 & 0.341 & 0.502 & 0.568 & 0.161 & 0.186 & 0.184 & 0.318 & 0.270\\
 & 2,000 & 0.605 & 0.710 & 0.700 & 0.862 & 0.802 & 0.475 & 0.546 & 0.545 & 0.777 & 0.836 & 0.286 & 0.266 & 0.259 & 0.547 & 0.525\\
\midrule
\multicolumn{17}{c}{\makebox[0pt][c]{\textit{Panel C}: $S_3\propto X^2-\mathrm{Var}(X)$ }}\\
\addlinespace[1pt]
0.5 & 500 & 0.471 & 0.601 & 0.605 & 0.728 & 0.556 & 0.421 & 0.550 & 0.551 & 0.695 & 0.492 & 0.307 & 0.421 & 0.427 & 0.505 & 0.322\\
 & 1,000 & 0.775 & 0.855 & 0.854 & 0.952 & 0.878 & 0.680 & 0.813 & 0.817 & 0.919 & 0.819 & 0.539 & 0.650 & 0.666 & 0.798 & 0.614\\
 & 2,000 & 0.976 & 0.993 & 0.991 & 0.998 & 0.990 & 0.935 & 0.977 & 0.978 & 0.998 & 0.992 & 0.810 & 0.904 & 0.908 & 0.984 & 0.922\\
\addlinespace[2pt]
1 & 500 & 0.321 & 0.449 & 0.452 & 0.663 & 0.465 & 0.257 & 0.395 & 0.398 & 0.577 & 0.434 & 0.174 & 0.240 & 0.245 & 0.398 & 0.252\\
 & 1,000 & 0.554 & 0.681 & 0.685 & 0.914 & 0.802 & 0.455 & 0.635 & 0.637 & 0.868 & 0.756 & 0.249 & 0.366 & 0.372 & 0.609 & 0.435\\
 & 2,000 & 0.850 & 0.928 & 0.925 & 0.993 & 0.977 & 0.739 & 0.884 & 0.893 & 0.987 & 0.962 & 0.429 & 0.597 & 0.609 & 0.896 & 0.756\\
\addlinespace[2pt]
2 & 500 & 0.113 & 0.200 & 0.202 & 0.463 & 0.324 & 0.099 & 0.171 & 0.183 & 0.376 & 0.294 & 0.064 & 0.083 & 0.085 & 0.207 & 0.132\\
 & 1,000 & 0.209 & 0.319 & 0.320 & 0.701 & 0.512 & 0.146 & 0.292 & 0.302 & 0.624 & 0.491 & 0.056 & 0.093 & 0.091 & 0.315 & 0.211\\
 & 2,000 & 0.319 & 0.505 & 0.518 & 0.945 & 0.847 & 0.235 & 0.444 & 0.463 & 0.899 & 0.812 & 0.062 & 0.102 & 0.106 & 0.528 & 0.380\\
\midrule
\multicolumn{17}{c}{\makebox[0pt][c]{\textit{Panel D}: $S_4\propto\sin(2X/\sigma_X)$ }}\\
\addlinespace[1pt]
0.5 & 500 & 0.519 & 0.606 & 0.608 & 0.689 & 0.499 & 0.458 & 0.570 & 0.569 & 0.609 & 0.441 & 0.447 & 0.566 & 0.573 & 0.554 & 0.368\\
 & 1,000 & 0.832 & 0.883 & 0.883 & 0.917 & 0.795 & 0.730 & 0.836 & 0.841 & 0.887 & 0.750 & 0.704 & 0.830 & 0.838 & 0.838 & 0.660\\
 & 2,000 & 0.987 & 0.995 & 0.994 & 0.997 & 0.987 & 0.964 & 0.987 & 0.987 & 0.990 & 0.959 & 0.951 & 0.988 & 0.989 & 0.989 & 0.949\\
\addlinespace[2pt]
1 & 500 & 0.383 & 0.486 & 0.485 & 0.561 & 0.368 & 0.351 & 0.445 & 0.441 & 0.494 & 0.364 & 0.378 & 0.461 & 0.462 & 0.432 & 0.310\\
 & 1,000 & 0.699 & 0.780 & 0.780 & 0.823 & 0.642 & 0.602 & 0.710 & 0.708 & 0.781 & 0.607 & 0.604 & 0.707 & 0.715 & 0.717 & 0.547\\
 & 2,000 & 0.925 & 0.961 & 0.959 & 0.978 & 0.914 & 0.887 & 0.936 & 0.939 & 0.964 & 0.891 & 0.898 & 0.947 & 0.950 & 0.952 & 0.876\\
\addlinespace[2pt]
2 & 500 & 0.215 & 0.282 & 0.283 & 0.291 & 0.215 & 0.217 & 0.248 & 0.247 & 0.267 & 0.222 & 0.270 & 0.269 & 0.261 & 0.238 & 0.260\\
 & 1,000 & 0.387 & 0.467 & 0.468 & 0.540 & 0.382 & 0.386 & 0.433 & 0.427 & 0.473 & 0.413 & 0.442 & 0.452 & 0.446 & 0.429 & 0.424\\
 & 2,000 & 0.646 & 0.730 & 0.729 & 0.798 & 0.625 & 0.608 & 0.680 & 0.664 & 0.761 & 0.691 & 0.719 & 0.731 & 0.721 & 0.725 & 0.721\\
\bottomrule
\end{tabular*}
\begin{tablenotes}[flushleft]\scriptsize
\item \emph{Note}: Design, notation, and scale values are as in
Table~\ref{supp:tab:power_di_L_c0}. 
\end{tablenotes}
\end{threeparttable}
\end{table}

\clearpage

\section{Additional Empirical Results}\label{supp:application}

Table~\ref{stab:robust-J} reports all eight statistics of
Table~\ref{main:tab:application} when the outcome means $\mu_d(X)$ are
specified as polynomials of degree $J = 1, 2, 3$ in the continuous covariates,
where the $J = 1$ rows reproduce the main text table. The conditioning set and
the multiplier draws are held fixed across $J$, so differences within a row
reflect the outcome-mean specification alone. Two features stand out. First,
the rejection for aspirations at midline is stable: the three omnibus
statistics give $p$-values between $0.014$ and $0.026$ for every $J$, while the
directional statistics stay above $0.24$. The remaining outcome--statistic
combinations are broadly stable across $J$, with no changes in rejection status
at the $5\%$ level. Second, no other statistic rejects at the $5\%$ level under
any specification. Among the kernel statistics, the closest calls are the
fixed-scale directional statistics for economic empowerment at midline, with
$p$-values between $0.053$ and $0.079$. The polynomial benchmarks reject
nowhere and their smallest $p$-values are $0.063$ to $0.075$, in the two midline
cells for control over the body and for aspirations.  Finally, Table~\ref{tab:application-diag}
reports the first-stage probit, the distribution of $\hat P$, and the
common-support limits for the six cells of Table~\ref{main:tab:application}.

\begin{table}[htbp]
\caption{Tests for essential heterogeneity in the ELA data under outcome
means of polynomial degree $J=1,2,3$: all statistics.}
\label{stab:robust-J}
\centering
\singlespacing
\setlength{\tabcolsep}{6pt}
\footnotesize
\begin{threeparttable}
\begin{tabular*}{\textwidth}{@{\extracolsep{\fill}}@{}l c rr rr rr@{}}
\toprule
& & \multicolumn{2}{c}{Economic} & \multicolumn{2}{c}{Control over}
& \multicolumn{2}{c}{Aspirations}\\
& & \multicolumn{2}{c}{empowerment} & \multicolumn{2}{c}{the body} & \multicolumn{2}{c}{}\\
\cmidrule(lr){3-4}\cmidrule(lr){5-6}\cmidrule(lr){7-8}
Statistic & $J$ & Mid & End & Mid & End & Mid & End\\
\midrule
\multicolumn{8}{@{}l}{\textit{Panel A: omnibus, $W=(X^\top,P)^\top$}}\\
$\TXP{\mathrm{med}}$ & 1 & 0.383 & 0.181 & 0.225 & 0.754 & \textbf{0.014} & 0.499\\
 & 2 & 0.556 & 0.154 & 0.389 & 0.841 & \textbf{0.024} & 0.544\\
 & 3 & 0.514 & 0.316 & 0.245 & 0.746 & \textbf{0.021} & 0.618\\
\addlinespace[2pt]
$\TXP{1}$ & 1 & 0.570 & 0.355 & 0.160 & 0.752 & \textbf{0.023} & 0.784\\
 & 2 & 0.647 & 0.353 & 0.263 & 0.829 & \textbf{0.026} & 0.827\\
 & 3 & 0.610 & 0.574 & 0.186 & 0.854 & \textbf{0.023} & 0.830\\
\addlinespace[2pt]
$\TXP{2}$ & 1 & 0.376 & 0.203 & 0.226 & 0.764 & \textbf{0.014} & 0.564\\
 & 2 & 0.554 & 0.183 & 0.393 & 0.848 & \textbf{0.025} & 0.602\\
 & 3 & 0.508 & 0.369 & 0.249 & 0.785 & \textbf{0.023} & 0.676\\
\midrule
\multicolumn{8}{@{}l}{\textit{Panel B: directional, $W=P$}}\\
$\TP{\mathrm{med}}$ & 1 & 0.770 & 0.264 & 0.291 & 0.629 & 0.283 & 0.593\\
 & 2 & 0.674 & 0.215 & 0.680 & 0.553 & 0.246 & 0.578\\
 & 3 & 0.659 & 0.149 & 0.616 & 0.512 & 0.244 & 0.598\\
\addlinespace[2pt]
$\TP{1}$ & 1 & 0.053 & 0.246 & 0.415 & 0.962 & 0.791 & 0.523\\
 & 2 & 0.079 & 0.230 & 0.369 & 0.929 & 0.852 & 0.348\\
 & 3 & 0.063 & 0.593 & 0.803 & 0.810 & 0.874 & 0.527\\
\addlinespace[2pt]
$\TP{2}$ & 1 & 0.053 & 0.338 & 0.453 & 0.973 & 0.783 & 0.606\\
 & 2 & 0.077 & 0.229 & 0.331 & 0.963 & 0.802 & 0.407\\
 & 3 & 0.062 & 0.677 & 0.782 & 0.875 & 0.819 & 0.588\\
\midrule
\multicolumn{8}{@{}l}{\textit{Panel C: polynomial benchmarks}}\\
$\Qpoly{2}$ & 1 & 0.390 & 0.588 & 0.066 & 0.101 & 0.063 & 0.221\\
 & 2 & 0.445 & 0.637 & 0.073 & 0.122 & 0.072 & 0.211\\
 & 3 & 0.466 & 0.548 & 0.063 & 0.121 & 0.075 & 0.198\\
\addlinespace[2pt]
$\Qpoly{5}$ & 1 & 0.204 & 0.351 & 0.322 & 0.612 & 0.373 & 0.623\\
 & 2 & 0.244 & 0.354 & 0.340 & 0.656 & 0.373 & 0.625\\
 & 3 & 0.235 & 0.196 & 0.347 & 0.660 & 0.377 & 0.610\\
\bottomrule
\end{tabular*}
\begin{tablenotes}[flushleft]\scriptsize
\item \emph{Note}: $\mu_d(X)$ is a degree-$J$ polynomial in the baseline
index and age (standardised to unit variance); branch dummies enter linearly.
Kernel-statistic $p$-values use $B=999$ village-level two-point multiplier
draws \citep{mammen1993bootstrap}; $\Qpoly{2}$ and $\Qpoly{5}$ use a wild
cluster bootstrap-$t$ with $B=999$. Bold marks rejection at the $5\%$ level.
\end{tablenotes}
\end{threeparttable}
\end{table}

\begin{table}[htbp]
\caption{Empirical application: first stage and propensity-score support.}
\label{tab:application-diag}
\centering
\begin{threeparttable}
\singlespacing
\setlength{\tabcolsep}{3pt}
\footnotesize
\begin{tabular}{@{}l rr rr rr@{}}
\toprule
& \multicolumn{2}{c}{Economic} & \multicolumn{2}{c}{Control over}
& \multicolumn{2}{c}{Aspirations}\\
& \multicolumn{2}{c}{empowerment} & \multicolumn{2}{c}{the body} & \multicolumn{2}{c}{}\\
\cmidrule(lr){2-3}\cmidrule(lr){4-5}\cmidrule(lr){6-7}
& Mid & End & Mid & End & Mid & End\\
\midrule
\multicolumn{7}{@{}l}{\textit{First-stage probit}}\\
Pseudo-$R^2$ & 0.085 & 0.080 & 0.086 & 0.079 & 0.084 & 0.080\\
\shortstack[l]{LR $\chi^2(5)$,\\excluded instruments} & 172.7 & 120.6 & 175.8 & 120.8 & 172.9 & 122.3\\
\shortstack[l]{$F(5,\cdot)$, linear\\probability model} & 30.9 & 23.3 & 31.6 & 23.4 & 31.0 & 23.6\\
\addlinespace[2pt]
\multicolumn{7}{@{}l}{\textit{Propensity score}}\\
Mean $\hat P$ & 0.130 & 0.174 & 0.130 & 0.174 & 0.130 & 0.174\\
\shortstack[l]{Standard\\deviation of $\hat P$} & 0.084 & 0.106 & 0.085 & 0.105 & 0.084 & 0.106\\
Minimum $\hat P$ & 0.005 & 0.015 & 0.003 & 0.012 & 0.005 & 0.014\\
Maximum $\hat P$ & 0.568 & 0.802 & 0.598 & 0.770 & 0.560 & 0.786\\
5th--95th percentile & $[0.024,0.277]$ & $[0.044,0.323]$ & $[0.022,0.279]$ & $[0.043,0.325]$ & $[0.024,0.276]$ & $[0.042,0.324]$\\
$\Pr(\hat P<0.30)$ & 0.962 & 0.901 & 0.963 & 0.898 & 0.962 & 0.899\\
\shortstack[l]{Common support\\$[\underline{p},\bar p]$} & $[0.023,0.484]$ & $[0.031,0.712]$ & $[0.018,0.510]$ & $[0.033,0.713]$ & $[0.023,0.500]$ & $[0.040,0.721]$\\
\quad as a share of $[0,1]$ & 0.461 & 0.682 & 0.492 & 0.680 & 0.476 & 0.680\\
\addlinespace[2pt]
\multicolumn{7}{@{}l}{\textit{Sample}}\\
\shortstack[l]{Take-up, treatment\\communities} & 0.170 & 0.221 & 0.170 & 0.221 & 0.170 & 0.221\\
\shortstack[l]{Take-up, control\\communities} & 0.043 & 0.079 & 0.043 & 0.079 & 0.043 & 0.078\\
Complier share & 0.127 & 0.143 & 0.127 & 0.142 & 0.127 & 0.143\\
Observations & 3,984 & 2,942 & 3,984 & 2,943 & 3,984 & 2,942\\
Clusters (villages) & 150 & 148 & 150 & 148 & 150 & 148\\
\bottomrule
\end{tabular}

\begin{tablenotes}[flushleft]\scriptsize
\item \emph{Note}: The first stage is a
probit of club participation on the excluded instruments (distance to the
nearest club, its square, the treatment dummy and the two distance--treatment
interactions) and on $X$ (the baseline index, nine branch dummies and age),
estimated on the sample retained by \texttt{mtefe}'s common-support trim.
Common support is the interval on which both participation states are
observed.
\end{tablenotes}
\end{threeparttable}
\end{table}

\end{document}